\documentclass[10pt,a4paper,reqno]{amsart}

\usepackage[T1]{fontenc}
\usepackage[utf8]{inputenc}
\usepackage[english]{babel}
\usepackage{amssymb,graphicx,xcolor}
\usepackage{amssymb,bbm,enumerate}
\usepackage{esint}
\usepackage{colortbl}
\usepackage{xcolor}
\usepackage{dsfont}
\usepackage{mathtools}
\usepackage{hyperref}
\usepackage{cleveref}
\usepackage{comment}
\usepackage{graphicx}
\graphicspath{ {./images/} }

\usepackage{amsmath,amssymb,amstext,amsthm}

\numberwithin{equation}{section}

\newtheorem{theorem}{Theorem}[section]

\newtheorem{lemma}[theorem]{Lemma}

\newtheorem{proposition}[theorem]{Proposition}
\theoremstyle{remark}
\newtheorem{remark}[theorem]{Remark}
\newtheorem{examples}{Examples}

\theoremstyle{definition}
\newtheorem{definition}[theorem]{Definition}

\newcommand{\Z}{\mathbb{Z}}

\newcommand{\R}{\mathbb{R}}
\newcommand{\C}{\mathbb{C}}
\newcommand{\N}{\mathbb{N}}

\DeclareMathOperator{\sech}{sech}
\DeclareMathOperator\supp{supp}

\newcommand{\comments}[1]{\vskip.3cm \fbox{\parbox{0.93\linewidth}{\footnotesize #1}} \vskip.3cm}

\title[MFD Summer School Course. The binormal flow]{The Binormal Flow: a toy model for turbulence}

\author{Elena Danesi}
\address{Politecnico di Torino, Torino.
		  e-mail: {\tt elena.danesi@polito.it}
		}
		
\author{Pablo Merino}
\address{Basque Center for Applied Mathematics, Bilbao.
		  e-mail: {\tt pmerino@bcamath.org}
		}

\author{Luis Vega}
\address{Basque Center for Applied Mathematics, Bilbao.
		  e-mail: {\tt lvega@bcamath.org}
		}

\thanks{}
\keywords{vortex filaments, binormal flow, selfsimilarity, non-linear Schrödinger, stability}
\date{\today}

\begin{document}
\maketitle 

\begin{abstract}
	These notes summarize the content of a one-week course given by Luis Vega on the binormal flow model applied to describe the dynamics of three-dimensional vortex filaments in incompressible fluids following the Euler equations. This topic touches connections between the binormal flow, the Schrödinger map and the 1d cubic nonlinear Schrödinger equation, providing a rich casuistry linked to the physical and numerical phenomenology. This course was part of the $2^{\text{nd}}$ Advanced Summer School held at the Institut d'Etudes Scientifiques de Cargèse (IESC), at Cargèse (Corsica), from $20^{\text{th}}$ to $30^{\text{th}}$ June 2023. These notes have been written and completed by two of its attendees, Elena Danesi and Pablo Merino. 
\end{abstract}

\tableofcontents

\section{Introduction}

The aim of this course is to present a toy model for describing the evolution of vortex filaments in a incompressible fluid in $\R^3$ moving according to the Euler equations. When using the term vortex filament we mean that the fluid is characterized by the following property: the vorticity is given by a vector measure that has the support contained on a curve (the filament). In this model the filament is described by a curve $\chi(s,t) \in \R^3$ parametrized by its arc lenght $s$ that evolves in time $t$ according to the binormal flow
\[
\chi_t = \chi_s \wedge \chi_{ss}.
\]
This model appears for the first time in 1906 in the work \cite{DaR06} of Da Rios, and was rederived in 1965 by Arms and Hama \cite{AH65} as an approximation of the dynamics of a vortex filament under the Euler equations. It is a purely ideal model. 
It aims to answer to the necessity of having a proper toy model for vortex filaments in fluid dynamics and, from there and in a potential way, for turbulence theory. The study of this particular model is highly motivated for its accordance with real experiments and numerical simulations, as the reader will find out reading these notes. 
One can ask if it is true that it is possible to prove the existence of solutions of the Euler equations whose vorticity remains concentrated along a curve (filament) under the evolution of the binormal flow for long time (see \cite{JS17}). This problem is known as \emph{vortex filament conjecture} and it is still open in the general case. We collect some results in this direction:
\begin{enumerate}
    \item Benedetto-Caglioti-Marchioro (1999) \cite{M99}: Vorticity is supported in an annulus, in non-viscous fluid with axial symmetry without swirl.
    \item Aftalion-Jerrard (2002) \cite{AJ02}: Vortex rings in Bose-Einstein condensates. If initially the current/Jacobian $Ju_{\varepsilon} = \frac{1}{2} \text{curl}(u_{\varepsilon} \wedge \nabla u_{\varepsilon})$ (see notation of the aforementioned paper) concentrates in the limit to an $n-2$ dimensional set, then this happens for latter times, and the limit set evolves by binormal flow.
    \item Dàvila-Del Pino-Musso-Wei (2022) \cite{DPMW22}:  Euler equations for particular cases with reductions, given by helices and leap-frogging in axysymmetry without swirl. 
    \item Bedrossian-Germain-Harrop-Griffiths (2018) \cite{BGHG18}: First step in Navier-Stokes. On small times one vortex filament evolves as Lamb-Ossen vortex around the initial curve. This result goes beyond the now classical result constructed by Giga and Miyakawa in \cite{GM89}.  
    \item Jerrard-Smets (2020) \cite{JS21}: In supefluids one almost-parallel to the $z$-axis quantized vortex filament in Bose condensate in the incompressible limit evolves by a linearized version of the binormal flow. 
    \item Fontelos-Vega (2023) \cite{FV23}: They consider Giga-Miyakawa solutions that at time zero are regular closed curves without self-intersections \cite{GM89}. More concretely, they describe the behaviour of the solution by means of an asymptotic expansion given in terms of negative powers of $\sqrt{\nu t}$, with $\nu$ the viscosity and $t$ the time. The leading order behaves as a Lamb-Ossen vortex adapted to a curve that moves according to an appropriately normalized binormal flow.
    
    \end{enumerate}
We underline that vortex filaments with the shapes of straight lines, circles, and helices do exist, both in experiments and as solutions of Euler equations.
A remarkable disadvantage is that this does not exclude dynamics that are not present in fluids (e.g. closed curves preserve length while in fluids one has filamentation phenomena).
\\
In these lectures we will mainly focus on singular filaments, that is curves $\chi$ that are smooth everywhere but in some points where they present a singularity in a shape of a corner. We will first discuss the case of one corner, considering self-similar solutions of the binormal flow.
\begin{figure}[h]
    \centering
\includegraphics[width=1cm, height=6cm, width=\textwidth]{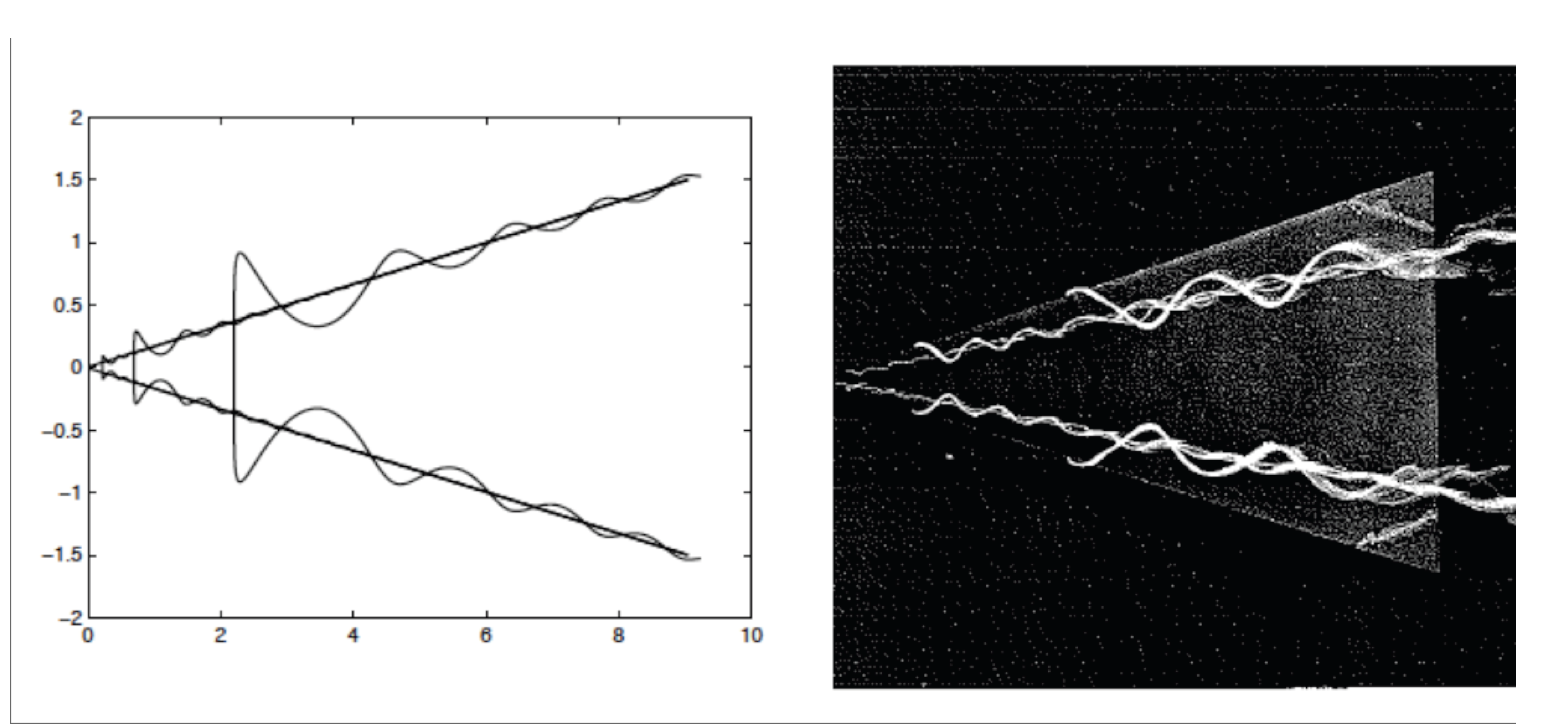}
\caption{``Delta wing'' vortex formed by fluid flowing over a triangular obstacle (see \cite{HGV09} for details).}
\label{fig:vortex_fil}
\end{figure}
\begin{figure}[h]
    \centering
\includegraphics[width=1cm, height=6cm, width=\textwidth]{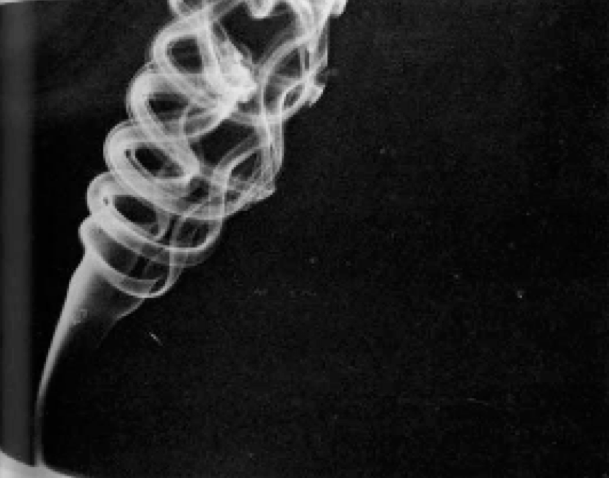}
\caption{Smoke ring. What we see in this picture are `vortex tubes' that propagate in a self-similar way. These vortex tubes have two distinguished parts: a first one which looks very much as a horseshoe, and a second one with a helicoidal shape.}
\label{fig:vortex_smoke}
\end{figure}
If, for example, we look at \Cref{fig:vortex_fil,fig:vortex_smoke} we see that in real experiments the vortex appears to be made by some self-similar curves, is the sense that the same picture seems to appear at different scales. Moreover, if we try to characterize them from a geometric point of view, we see that the shape seems to be some kind of circle close to the vertex of the corner, and helices with the axis from the corner itself. That is why we focus on the self-similar solutions, describing their geometric structure. They are, indeed, analytical solutions of the binormal flow for positive times, $V$-shaped at the initial time. Moreover, these curves are stable: if one considers a filament `close' to the self-similar solution, the corner is still formed. One of the key tools to prove these results is to exploit a connection between the binormal flow and other PDEs. Indeed, as will be explained in the following sections, through the Hasimoto transformation it is possible to derive a 1d focusing cubic nonlinear Schrödinger equation (NLS). In this way, the stability issue for the self-similar solutions will be translated into a scattering problem for solutions to the NLS. However, it is important to underline that the solutions to the NLS that will be considered live in a low-regularity regime. This will require the introduction of new tools which do not rely on the standard conservation laws. 
Furthermore, the model connects with another classical equation, in relation with the work in ferromagnetism theory of Landau-Lifshitz \cite{LL35}. 
Indeed, if $\chi(s,t)$ is a solution of BF, then its tangent vector $T(s,t)$ satisfies 
\[
T_t = T \wedge T_{ss},
\]
which is the 1d Schrödinger map on $\mathbb S^2$. The interplay of different fields of mathematics and physics is even more visible when studying the trajectories of polygonal vortices, to which the last part of the course was dedicated. From our point of view, this makes the BF very interesting also from a mathematical side. 
We mention, as an example, how ideas from harmonic analysis and analytical number theory comes into play in the study of the evolution of regular polygons. Let $\phi_{x_0}$ be a slightly different version to the Riemann's `non-differentiable' function, given by
\[
\phi_{x_0} \colon [0,1] \rightarrow \C, 
\quad \phi_{x_0}(t) = \sum_{n \ne 0} \frac{e^{2\pi i (n^2 t + nx_0)}}{n^2},
\]
where $x_0 \in \R$ fixed. 
\begin{figure}[h]
\centering
\includegraphics[scale=0.5, width=\textwidth]{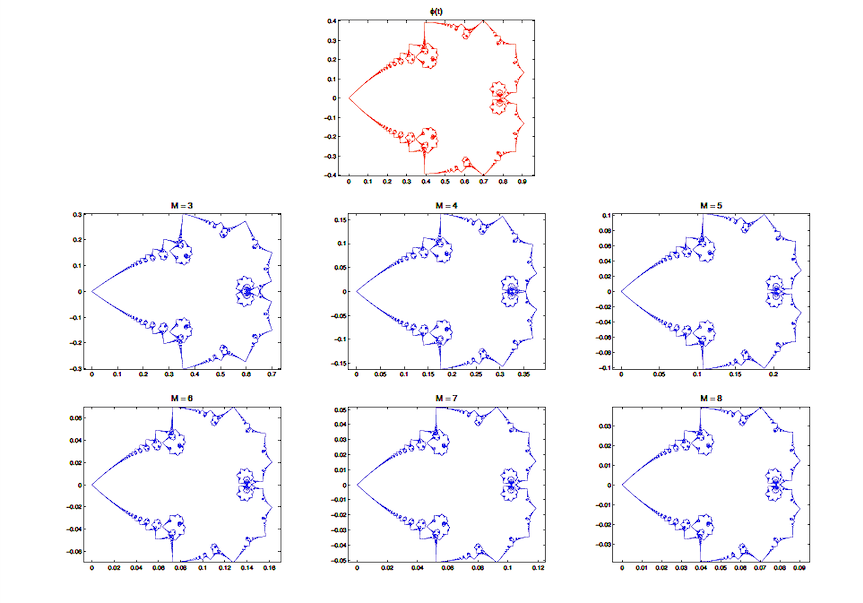}
\caption{Comparison between the evolution of one of the corners of $\chi(0,t)$ for several numbers of sides (the plots in blue, which in order correspond to $M=3$, $M=4$, $M=5$, $M=6$, $M=7$ and $M=8$), and a plot in red of Riemann's non-differentiable function $\phi(t)$.}
\label{fig:RiemannInkPoint}
\end{figure}
We present in \Cref{fig:RiemannInkPoint} the graph made by the evolutions of the previous version of Riemann's function (for $x_0 = 0$), and of one corner of regular polygons with variable $M$ sides. The similarity of the graphs has a remarkable interest. In an article by De la Hoz and Vega \cite{DLHV14}, inspired by Jerrard and Smet \cite{JS}, they discovered numerically that a mild variation of $\phi_{0}$,
\[
\phi(t) = \sum_{n \in \Z} \frac{e^{2\pi i n^2 t} - 1}{n^2} = 2\pi i t - \frac{\pi^2}3 + \phi_0(t)
\]
very closely represents the trajectory of the corners of such vortices. More precisely, let the filament be a curve $\chi$ that evolves according to the binormal flow equation. 
Suppose that the initial filament $\chi(s, 0)$ is a regular polygon with corners at the integers $s \in \Z$. De la Hoz-Vega  observed that $\chi(0,t)$ is a plane curve which, after identifying the plane with $\C$, behaves like $\phi(t)$. Later, Banica and Vega \cite{BV22} rigorously proved this under certain hypotheses. What is more, if $M \in \N$ is the number of sides of the initial polygon, and if $\chi_M$ is the corresponding filament, they proved that the rescaled trajectory $M\chi_M$ tends to the plane curve
\[
\lim_{M \rightarrow + \infty} M \chi_M (x_0,t) = \phi_{x_0}(t).
\]
In this way, the function $\phi_{x_0}$ can be seen as a representative of the trajectory of polygonal vortex filaments. In a very recent article \cite{BENV23}, this function has been studied. The authors proved that when $x_0$ is a rational it satisfies the Frish-Parisi multifractal formalism studied in the theory of turbulence. When $x_0$ is irrational, it has a multifractal behaviour. Fractal behaviour already appears at the linear level of the 1d NLS
\[
\psi_t = \frac{i}{4\pi}\psi_{ss}
\]
with an initial datum that is piecewise constant 
in, for instance, $s \in [-\pi,\pi]$, and then extended by periodicity. In \cite{Be96} and \cite{BK96} this framework is used to model the Talbot effect and show that at rational times the solution can be obtained as a finite overlapping of translates of the initial datum, while at irrational times the images have a fractal profile. Moreover, in \cite{ET13} Erdogan and Tzirakis provide results in this direction for the nonlinear setting in the cubic NLS, with initial data given by piecewise continuous functions. Their contribution uses the regularizing effect related to the Duhamel's integral, which implies that the fractal behaviour in their context is due to the linear part of the solution. Following these ideas, we also approach such issues for the case of the aforementioned filaments with corners, expressed through sums of Dirac distributions.
\\
Starting from the aforementioned one corner results, progress has also been done with general initial data with several corners and, in particular, polygons. In the recent papers of Jerrard and Smets \cite{JS13} and \cite{JS}, it is given a global existence theorem that allows to consider such types of initial data. They also simulate numerically the evolution of the unit square at different times. Indeed, at a certain time, the square seems to reappear, but with the axes turned by a $\pi/4$-angle with respect to the initial ones. We go beyond: this \textit{axis switching} phenomena is not restricted to the square's case, but given a regular polygon with $M$ sides as the initial datum, another polygon is obtained at times of the form $t_{p,q} = (2\pi/M^2)(p/q)$, for $p$ and $q$ coprimes. In particular, and except for $q=2$, skew polygons with $Mq$ sides (if $q$ is odd), or $Mq/2$ sides (if $q$ is even), are formed. We refer to \Cref{sec:Evolution of $M$-regular polygonal lines through the BF} for a precise description and for real experiments on this phenomenon.
\\
To conclude, one could wonder whether or not there is any connection between all the suggested solutions and real fluids. The results obtained in \cite{FV23} show that Da Rios' and Arms and Hama's approximations are correct for Giga-Miyakawa solutions that at time zero are regular closed curves without self-intersections \cite{GM89}. It is an open question if this result can be extended to  more singular solutions to allow for the existence of corners in the initial vortex filament. At this respect we recall the aforementioned \Cref{fig:vortex_fil,fig:vortex_smoke}. Also, in \cite{HGV09}, a similitude between self-similar solutions and a flow traversing a triangular wing is presented. Moreover, with respect to the axis switching phenomena, and apart from the domestic experiment presented in the last section, it appears that this is something quite well documented for noncircular jets, for nozzles which could be an ellipse, a rectangle, a square, or an equilateral triangle, among others. See the survey \cite{GG99}, where some papers related to these observations can be found. Also, the appearance of smooth polygonal structures with more angles has been appreciated for nozzles with the shape of an equilateral triangle (see \cite{GG95} and \cite{GG99}).

\subsection{Structure of the lecture notes}
In \Cref{sec:Construction of selfsimilar solutions of the binormal flow} we firstly describe the binormal flow equation and the connection with the 1d cubic NLS, through the Hasimoto transformation and the Schrödinger map, discussing some explicit examples. Then, in the second subsection, we focus on the self-similar solutions, describing it from a geometrical point of view (the main result will be \Cref{thm:asymp_onecorner}). To conclude, we present an introduction to 1d cubic NLS, for the unfamiliar reader. In particular, we describe some useful conservation laws and symmetry properties of the equation, and prove a global well-posedness result  in $L^2(\R)$.
In \Cref{sec:Stability of one corner} we discuss the stability of the corner, emphasizing the study for the existence of the wave operator, and for the asymptotic completeness. In \Cref{sec:Several Corners}, we deal with the case of several corners, polygonal lines, within an analytical-geometrical framework, see \Cref{thm:SolsNLSSeveralCorners}. This result will be obtained by translating the corresponding result for several Dirac masses for the 1d cubic NLS (see \Cref{thm:SeveralDiracMassesNLS}) to the binormal flow framework.
To conclude, we discuss the Talbot effect and the evolution of an $M$-side regular polygon, mentioned above.\\

As we mentioned before, these notes summarize the content of a course imparted by Luis Vega during the $2^{\text{nd}}$ Advanced Summer School in Mathematical Fluid Dynamics at the IESC (Cargèse, Corsica) in 2023. They have been written and completed by two of its attendees, Elena Danesi and Pablo Merino.

 \subsection{Notations} 
Let $u = u(s,t) \colon \R \times \R \rightarrow \R$. We denote by both $\partial_s u$ and $u_s$ the partial derivative of $u$ with respect to $s$. The same for $t$. \\
For $p,q \in [1 , +\infty]$ we denote $L^p_tL^q_s$ the space $L^p((\R),L^q(\R))$ with the norm
\[
\lVert u \rVert_{L^p_tL^q_s} = \bigg ( \int_{- \infty}^{+ \infty} \bigg ( \int_{-\infty}^{+\infty} \lvert u(s,t) \rvert^q ds \bigg )^{\frac pq} dt \bigg)^{\frac 1p}. 
\]
For $s \in \R$ we denote as $H^s(\R)$ the space $\{ \phi \in \mathcal S' \colon \langle D \rangle^s \phi \in L^2(\R)\}$ with the norm
\[
\lVert \phi \rVert_{H^s} = \big \lVert \langle D \rangle^s \phi \big \rVert_{L^2},
\]
where $\langle D \rangle^s \phi \coloneqq \mathcal F^{-1} \big (\langle \xi \rangle^s \mathcal F \phi \big ) $, $\mathcal F \phi$ denoting the Fourier transform of $\phi$.

\subsection{Acknowledgements}
The authors want to thank Valeria Banica, Emmanuel Dormy and Christophe Lacave for organizing the $2^{\text{nd}}$ Advanced Summer School on Mathematical Fluid Dynamics at the IESC, where interesting courses, talks and discussions took place.

 \section{Construction of selfsimilar solutions of the binormal flow}\label{sec:Construction of selfsimilar solutions of the binormal flow}
    \subsection{The Binormal Flow}\noindent\label{sct:sectionBF}

We are interested in the evolution of a vortex filament in 3d. Therefore we consider a fluid with time-dependent velocity field $u(x,t)$ where $x \in \R^3$ and $t \in \R$. The field is moving according to Euler equations such that its velocity filed is irrotational except in a curve $\Sigma$. This curve is parametrized by arc length $s \in \R$ as
\[
(s,t) \mapsto \chi(s,t) \in \R^3.
\]
The velocity field in a point $P$ outside the curve is then obtained using the Biot-Savart integral
\begin{equation}\label{eq:BSintegral}
u(P) = \frac{\Gamma}{4\pi} \int_{-\infty}^\infty \frac{P - \chi(s,t)}{\lvert P - \chi(s,t) \rvert^3} \wedge T(s,t) ds
\end{equation}
where $T$ is the tangent vector of the curve given by
\[
T(s,t) = \chi_s (s,t)
\]
and $\Gamma$ denotes the strength of the filament.

\begin{examples}\label{ex:vortex}\noindent
Typical examples of vortex filaments are:
\begin{enumerate}[i)]
\item the \emph{straight line} vortex: $\chi(s,t) = \chi(s) = (0,0,s)$. 
The velocity can be explicitly computed by \eqref{eq:BSintegral} getting
\[
u(t,(x,y,0)) = \frac{\Gamma}{2\pi} \frac{(-y, x, 0)}{x^2 + y^2};
\]
We briefly illustrate such computations. Denote $P=(x,y,0)$ the coordinates of the point outside the curve. Moreover, notice that 
\[
(\chi(s,t) - P) \wedge T(s,t) = (-y,x,0).
\]
Then, 
\[
\left( \int_{\mathbb{R}} \frac{ds}{(x^2 + y^2 + s^2)^{3/2}} \right)(-y,x,0).
\]
Thus, given $\alpha \in \mathbb{R}$, the difficulty is in solving the integral
\begin{align*}
    I = \int_{\mathbb{R}} \frac{ds}{ (\alpha^2 + s^2)^{3/2}}.
\end{align*}
However, to solve it, it suffices to substitute  in the integral $s = \sqrt{x_1^2+x_2^2} \tan u$, $ds= \sqrt{x_1^2 + x_2^2} \sec^2 u du$.
\item the \emph{circle}: $\chi(s,0) = (\cos s, \sin s, 0)$ (see \cite{Saf92});
\item the \emph{helical} vortex: $\chi(s,0) = \frac 1{\sqrt {h^2 +1}} (\cos s, \sin s, hs) $, $h \ne 0$. Observe that this helix degenerates into a the circle when $h \rightarrow 0$ and to a straight line when $\lvert h \rvert \rightarrow + \infty.$ The existence of helical vortices was experimentally proved in  \cite{HB81}, \cite{MHR85} and more recently in \cite{AKO07}.


\end{enumerate}
\end{examples}

In order to find the evolution of the filament for later times one has to compute the velocity field for points $P\in \Sigma$. As we see from \eqref{eq:BSintegral} this can not be easily done due to the singularity of the Biot-Savart integral. Different methods of desingularizing this integral can be found in the literature. In these lectures we shall follow the so called Localized Induction Approximation (LIA), which only takes into consideration the local effects of the corresponding Biot-Savart integral. It was first proposed by Da Rios in 1906 \cite{DaR06} (see also \cite{Ric96}). We refer the reader to the classical reference of Batchelor \cite{B00} and the more recent ones of Saffman \cite{Saf92} and Alekseenko-Kuibin-Okulov \cite{AKO07} for a detailed description of this approximation and of its limitations. All these questions have been recently clarified in \cite{FV23}.\\
The consequence of this approximation is that the filament moves, after a logarithmic renormalization in time,  in the direction of the binormal with a velocity that is proportional to the curvature. That is to say, the curve evolves according to the \emph{Binormal Flow} (BF) 
\begin{equation}\label{eq:CPchi}
\begin{cases}
\chi_t  = \chi_s \wedge \chi_{ss}\\
\chi(s,0) = \chi_0(s).
\end{cases}
\end{equation}
We can rewrite the equation in \eqref{eq:CPchi} using the Frenet equations and calling $(T,n,b)$ the Frenet frame, where $n$ is the normal vector to the curve and $b$ the binormal one. Then we have for $(c,\tau)$, respectively the curvature and the torsion of $\chi$, that 
\begin{equation}\label{eq:frenet_eqs}
\begin{split}
T_s &= \qquad  cn\\
n_s &= -cT  \qquad + \tau b\\
b_s &= \qquad -\tau n \quad.
\end{split}
\end{equation}
Hence \eqref{eq:CPchi} becomes
\begin{equation}\label{eq:CPchi_frenetframe}
\begin{cases}
\chi_t = cb\\
\chi(0,s) = \chi_0(s).
\end{cases}
\end{equation}
A first consequence on \eqref{eq:CPchi} is that $\lvert T(s,t) \rvert=1$. Indeed, after differentiation in \eqref{eq:CPchi} we get an evolution equation for the tangent vector, the Landau-Lipschitz equation 
\begin{equation}\label{eq:CPtangvect}
\begin{cases}
T_t = T \wedge T_{ss}\\
T(s,0)=T_0(s).
\end{cases}
\end{equation}
Then, we can compute
\[
\frac 12  \frac{d}{dt}\,  \lvert T \rvert^2 = T \cdot T_t = T \cdot (T \wedge T_{ss}) = 0.
\]
Moreover, we notice that $T \wedge T_{ss} = J D_s T_s$ with $J$ the complex structure in the unit sphere $\mathbb{S}^2$ and $D_s$ the covariant derivative. As a consequence, \eqref{eq:CPtangvect} is also known as the \emph{Schrödinger map} (SM) equation with target $\mathbb{S}^2$ (see \cite{NSVZ} for more details). This equation can be easily generalized by changing both the domain of definition and the target. In these lectures we will only be interested in the second possibility, changing $\mathbb{S}^2$ into the Hyperbolic plane $\mathbb H^2$. In the latter case  \eqref{eq:CPchi} becomes
\begin{equation}\label{eq:chi_hypplane}
\chi_t = \chi_s \wedge_{-} \chi_{ss},
\end{equation}
and, setting again $T = \chi_s$, \eqref{eq:CPtangvect} has to be changed into
\begin{equation}\label{eq:T_thypplane}
T_t = T \wedge_{-} T_{ss},
\end{equation}
where for any $a,b \in \R^3$ the product above is defined as
\[
a \wedge_{-} b \coloneqq a \wedge \begin{pmatrix} 1 & 0 & 0 \\ 0 & 1 & 0 \\ 0 & 0 & -1 \end{pmatrix} b.
\]
Coming back to the BF and the \Cref{ex:vortex}, we can easily check that among the solutions we find, e.g.,  the straight line that remains stationary and the circle that moves in the binormal direction with velocity equals to the inverse of its radius.\\\\
Let us finish this subsection giving a fundamental transformation that links the BF \eqref{eq:CPchi} (and its modification \eqref{eq:chi_hypplane}) to the 1d focusing (defocusing in the case of $\mathbb H^2$) cubic nonlinear Schrödinger equation. In \cite{Has72} the ``filament'' function, called \emph{Hasimoto transformation}
\begin{equation}
    \label{Has_trans}
\psi(s,t) = c(s,t) e^{i \int_0^s \tau(s', t)ds'}
\end{equation}
is introduced. The author proved that if $(c,\tau)$ are, respectively, the curvature and the torsion of the filament $\chi(s,t)$ which evolves according to \eqref{eq:CPchi_frenetframe}, then $\psi$ is a solution to the \emph{1d cubic NLS}
\begin{equation}\label{eq:cubicNLSpsi}
\begin{cases}
\partial_t \psi = i \big ( \partial_{ss}\psi + \frac12 (\lvert \psi \rvert^2 + A(t)) \psi \big )\\
\psi(s,0)= \psi_0 (s).
\end{cases}
\end{equation}
Here $A(t)$ is a real function that is determined from $\chi(0,t)$. It can be computed explicitly:
\begin{align}
    A(t) = \left( 2 \frac{\partial_{ss}c - c\tau^2}{c} + c^2 \right)(0,t).
    \label{NLS_At_nomodification}
\end{align}

It is interesting to observe that in the hyperbolic case a generalized notion of curvature and torsion can be defined. As a consequence, a similar equation as \eqref{eq:cubicNLSpsi} is obtained with a defocusing nonlinear potential, i.e.\ the Hasimoto transform solves
\[
\partial_t \psi = i \big ( \partial_{ss}\psi - \frac12 (\lvert \psi \rvert^2 + A(t)) \psi \big ).
\]

As we see Hasimoto transformation involves the Frenet frame. The calculations to obtain \eqref{eq:cubicNLSpsi} are slightly more complicated in this way. It is simpler using a parallel frame. The argument, by \cite{Koi96}, goes as follows. Consider the orthonormal frame $(T, e_1, e_2)$ and the equations 
\[
\begin{split}
T_s = \qquad & \alpha e_1 + \beta e_2\\
e_{1s} = - \alpha T&\\
e_{2s} = - \beta T&.
\end{split}
\]
Then, from \eqref{eq:CPtangvect}
\[
T_t = T \wedge (\alpha e_1 + \beta e_2)_s = \alpha_s e_2 - \beta_s e_1.
\]
As a consequence, we have on one hand that
\begin{equation}\label{eq:T_st}
T_{st} = \alpha_t e_1 + \beta_t e_2 + \alpha e_{1t} + \beta e_{2t},
\end{equation}
and on the other hand
\begin{equation}\label{eq:T_ts}
T_{ts}= \alpha_{ss} e_2 - \beta_{ss} e_1 -\alpha_s \beta T + \beta_s \alpha T.
\end{equation}
The key idea is to look at $\langle e_{1t}, e_2 \rangle (= - \langle e_{2t}, e_1 \rangle)$. Following Hasimoto we compute its $s-$derivative
\begin{equation}\label{eq:e_1t,e_2}
\begin{split}
\langle e_{1t},e_2 \rangle_s & = \langle e_{1ts},e_2 \rangle + \langle e_{1t},e_{2s} \rangle\\
&= - \langle \alpha_t T,e_2\rangle - \langle \alpha T_t, e_2 \rangle - \beta \langle e_{1t}, T \rangle\\
&=- \alpha \langle T_t, e_2 \rangle + \beta \langle e_1, T_t \rangle \\
& = -\alpha \alpha_s - \beta \beta_s = - \frac12 (\alpha^2 + \beta^2)_s.
\end{split}
\end{equation}
The last step is to define the complex function
\begin{align}
\psi(s,t) = \alpha(s,t) +i\beta(s,t).
\label{eq:Hasimoto_AlphaBeta}
\end{align}
Then, from \eqref{eq:T_st}, \eqref{eq:T_ts}
\[
\begin{split}
&\alpha_t + \beta \langle e_{2t}, e_1 \rangle = - \beta_{ss}\\
&\beta_t + \alpha \langle e_{1t}, e_2 \rangle = \alpha_{ss}
\end{split}
\]
and from \eqref{eq:e_1t,e_2}
\[
\begin{split}
&\alpha_t + \frac12 \beta [(\alpha^2 + \beta^2) + A(t) ] = - \beta_{ss}\\
& \beta_t - \frac12 \alpha [(\alpha^2 + \beta^2) + A(t)] = \alpha_{ss},
\end{split}
\]
with $A(t)$ a real function. Finally we get 
\[
\psi_t = (\alpha + i\beta)_t = i \big \{ (\alpha_{ss} + i \beta_{ss} ) + \frac12 (\alpha + i \beta) (\alpha^2 + \beta^2 + A(t)) \big \}
\]
as desired. A similar computation can be done in the case of \eqref{eq:T_thypplane}, see e.g.\ \cite{dlH07}.
\\

Before going to the next subsection we observe that, with a change of phase, we can pass from solution to the focusing or defocusing NLS 
\begin{equation}\label{focdefocNLS}
\partial_t \psi = i (\partial_{ss} \psi \pm \frac12 (\lvert \psi \rvert^2 + A(t)) \psi
\end{equation}
to solution of the standard cubic NLS, that is with $A(t)=0.$ Indeed, let $\psi$ be a solution of \eqref{focdefocNLS}, we can define
\[
h(s,t) = \psi(s,t) e^{\mp \frac i2 \int_0^t A(t')dt'}.
\]
Then $h$ satisfies 
\[
i \partial_th + \partial_{ss} h \pm \frac 12 \lvert h \rvert ^2 h = e^{\mp\frac i2 \int_0^t A(t')dt'} \big( i \partial_t \psi + \partial_{ss} \psi \pm \frac12 (\lvert \psi \rvert^2 + A(t) ) \psi \big) = 0.
\]
Moreover, with a scaling we can get rid of the factor $\frac12$ in the previous equation. Indeed, we can define $\tilde h(s,t) =  h(s/\sqrt 2,t/2)$. Then, if $h$ is a solution of the previous cubic NLS, then $\tilde h$ is a solution of
\[
i \partial_t \tilde h + \partial_{ss} h \pm \lvert \tilde h \rvert^2 \tilde h = 0.
\]

 \subsection{Selfsimilar solutions}
 In this section we will find the selfsimilar solutions to the BF \eqref{eq:CPchi}. We notice that the flow is invariant by the scaling $\lambda^\alpha \chi(\lambda s, \lambda^{\alpha +3} t)$, for any $\lambda >0$ and $\alpha \in \R$. However, the only one which preserves the arc length is with $\alpha = -1$. Hence we look for solutions that satisfy
 \[
 \frac 1\lambda \chi(\lambda s, \lambda^2 t) = \chi(s,t) \quad \forall \, \lambda > 0.
 \]
 As a consequence $\chi$ can be written as
 \begin{equation}\label{eq:G}
 \chi(s,t) = \sqrt{t} G\big( s / \sqrt{t}), \quad t>0
 \end{equation}
 for some function $G(s)$. Plugging \eqref{eq:G} into \eqref{eq:CPchi} we obtain that the profile curve $G$ has to be a solution to 
 \begin{equation}\label{eq:pde_G_s}
 \frac 12 G - \frac s2 G_s = G_s \wedge G_{ss}.
 \end{equation}
 Calling again $T(s) =G_s(s)$ we obtain, after differentiation
 \[
 - \frac s2 T_s = T \wedge T_{ss}.
 \]
 From Frenet equations, as before, we get 
 \[
 \begin{split}
 - \frac s2 cn & = T \wedge (c_s n + c n_s) \\
 & = T \wedge (c_s n - c^2 T + c\tau b) = c_s b - c \tau n.
 \end{split}
 \]
 This equation determines the curvature and torsion of $G$:
 \[
 \begin{cases}
 c(s) = a\\
 \tau(s) = \frac s2
 \end{cases}
 \]
 for some $a \ge 0$. Therefore if $(T_a, n_a, b_a)$ is the unique solution of Frenet equations \eqref{eq:frenet_eqs} with $(c, \tau) = (a,s/2)$ and initial data $(T_a(0), n_a(0), b_a(0)) = \mathbbm 1_{3x3}$, then $G_a$ is uniquely determined by
 \begin{equation}\label{eq:ode_G_a}
 \begin{cases}
 (G_a)_s = T_a\\
 G_a(0)=2a b_a(0).
 \end{cases}
 \end{equation}
 Indeed, from \eqref{eq:pde_G_s} we have 
 \[
 \frac 12 G_a(0) = (G_a)_s(0) \wedge (G_a)_{ss}(0) =  T_a(0) \wedge a n_a(0) = a b_a(0).
 \]
 \begin{figure}[t]
    \centering
\includegraphics[width=1cm, height=6cm, width=\textwidth]{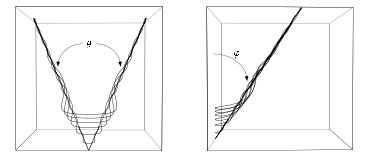}
\caption{Structure of the self-similar solution, with a corner's angle $\theta$. From our point of view, it is quite striking the similarity of the picture - $\chi_a(s,t)$ for different times - and the one obtained by the experiments of airflow through a delta wing (see \cite{HGV09})}
\label{fig:chia}
\end{figure}
We want now to describe the geometry of this profile; in particular, our main goals will be to prove that this curve resembles a ``V''- shaped curve as $t \rightarrow 0$ and as $s \rightarrow \infty$. That is, the shape of $\chi_a$ is as shown in \Cref{fig:chia}.\\
We start by observing that $G_a(s)$ approaches a straight line as $s$ tends to $+\infty$ or $-\infty$; indeed, from \eqref{eq:pde_G_s} we have 
 \[
 \bigg ( \frac {G_a}s \bigg )_s = \frac{s (G_a)_s - G_a}{s^2} = \frac{-2ab_a}{s^2}.
 \]
 Therefore $ \big ( \frac {G_a}s \big )_s$ is integrable in $\pm \infty$, recalling that $\lvert b_a(s) \rvert =1$. We can define 
 \begin{equation}\label{eq:defA_a^pm}
 A_a^\pm = \lim_{s \to \pm \infty} \frac{G_a(s)}s, \quad \lvert A^\pm_a \rvert =1,
 \end{equation}
 and
 \[
 \begin{split}
 G_a(s) & = s A_a^+ + 2as \int_s^{\infty} \frac{b_a(s')} {s'^2} d s'\quad \text{if } s>0 \\
 G_a(s) & = s A^-_a - 2as \int_{-\infty}^s \frac{b_a(s')}{s'^2} ds' \quad \text{if } s<0.
 \end{split}
 \]
 From these identities and \eqref{eq:ode_G_a} it easily follows that (see \cite{GRV03}) 
 \[
 \big \lvert \chi_a (s,t) - sA_a^+ \mathbb I_{[0,+\infty)} (s) - s A_a^- \mathbb I_{(-\infty,0]}(s) \big \rvert \le 2a\sqrt t,\quad t >0
 \]
 with
 \[
 \chi_a(s,t) = \sqrt t G_a(s/\sqrt t)
 \]
 and $\mathbb I$ denoting the characteristic function. Indeed, recall that $b_a$ is a unitary vector, so we have for $s>0$
 \[
\big \lvert  \sqrt t G_a(s/ \sqrt t) - s A_a^+ \big \rvert = \Big \lvert 2as \sqrt t \int_s^{+\infty} \frac{b_a (\sqrt t s')}{s'^2} ds' \Big \lvert \le 2 a \sqrt t.
 \]
 Hence we have found a solution to \eqref{eq:CPchi} for $t>0$ with
 \[
 \chi_a(s,0) = \begin{cases} A_a^+ s\quad s >0\\
 A_a^- s \quad s<0.
 \end{cases}
 \]
 Notice that the BF \eqref{eq:CPchi} is reversible in time: if $\chi(s,t)$ is a solution so is $\chi(-s,-t)$. Hence, if we give as initial condition at time $t=1$
 \[
 \chi_a(s,1)=G_a(s)
 \]
 and we go backwards in time we obtain that a singularity in the shape of a corner is developed at time $t=0$, if $A_a^+ \ne A_a^-$.
 Moreover, we observe that $G_a(s)$ is analytic because it is a solution to Frenet equations with $(c,\tau)= \big (a,  \frac s2 \big)$.\\

 It remains to prove that $\chi_a$ is not a straight line if $a>0$, that is the shape of $\chi_a$ is as in figure \eqref{fig:chia}.  
 
 \begin{proposition}\label{prop:A+nepmA-}\noindent
 \begin{enumerate}[i)]
 \item $A^+_a \ne -A^-_a$
 \item $A_a^+ = A^-_a \quad \text{iff} \quad  a=0$. 
 \end{enumerate}
 \end{proposition}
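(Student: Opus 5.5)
The plan is to use a symmetry of the Frenet system with $(c,\tau)=(a,s/2)$ to reduce both statements to a single scalar quantity, namely the first component of $A_a^+$. The key observation is that the curvature $a$ is even and the torsion $s/2$ is odd. Set $R=\mathrm{diag}(1,-1,-1)$, the rotation of angle $\pi$ about $e_1$, and define
\[
\tilde T(s)=T_a(-s),\qquad \tilde n(s)=-n_a(-s),\qquad \tilde b(s)=-b_a(-s).
\]
Differentiating with \eqref{eq:frenet_eqs} shows that $(\tilde T,\tilde n,\tilde b)$ solves the same Frenet system with $(c,\tau)=(a,s/2)$. Its initial datum is $(e_1,-e_2,-e_3)=R\,\mathbbm 1_{3\times3}$. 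Since the system is linear with coefficients commuting with a constant rotation, uniqueness gives $T_a(-s)=R\,T_a(s)$, and likewise for $n_a$ and $b_a$. Integrating \eqref{eq:ode_G_a} then gives
\[
G_a(-s)=G_a(0)-R\big(G_a(s)-G_a(0)\big).
\]
Note that $RG_a(0)=-G_a(0)$ is consistent with this, since $G_a(0)=2ae_3$. Dividing by $-s$ and letting $s\to+\infty$ in \eqref{eq:defA_a^pm} yields $A_a^-=R\,A_a^+$.

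With this relation both items become statements about $A_a^+$ alone. Since $|A^+_a|=1$:
\begin{itemize}
\item $A_a^+=A_a^-$ holds iff $RA_a^+=A_a^+$, i.e.\ iff $A_a^+=\pm e_1$, i.e.\ iff $(A_a^+)_1=\pm1$.
\item $A_a^+=-A_a^-$ holds iff $RA_a^+=-A_a^+$, i.e.\ iff $(A_a^+)_1=0$.
\end{itemize}
Moreover $(A_a^+)_1=\lim_{s\to\infty}T_{a,1}(s)$. Here I use that $T_a$ converges as $s\to\infty$, which I would check from $G_a/s\to A_a^+$ together with $(G_a)_s=T_a$ and the formula for $G_a$ above, or directly from the ODE below. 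Hence the proposition reduces to showing
\[
(A_a^+)_1\neq 0 \ \text{for all } a, \qquad (A_a^+)_1\neq\pm1 \ \text{for } a>0 .
\]
The case $a=0$ is trivial: then $T\equiv e_1$, so $A_0^\pm=\pm e_1$.

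To study $\ell=\lim T_{a,1}$, I would work with the scalar components $x=T_{a,1}$, $y=n_{a,1}$, $w=b_{a,1}$. They satisfy
\[
x'=ay,\qquad y'=-ax+\tfrac s2 w,\qquad w'=-\tfrac s2 y,
\]
with $(x,y,w)(0)=(1,0,0)$ and $x^2+y^2+w^2=1$. Setting $z=y+iw$ gives $z'=-ax-\tfrac{is}{2}z$, so that
\[
z(s)=-a\,e^{-is^2/4}\int_0^s e^{i\sigma^2/4}x(\sigma)\,d\sigma,\qquad x'=a\,\mathrm{Re}\,z .
\]
This shows that $x$ varies slowly against the Fresnel oscillation. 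The target is the explicit value $\ell=e^{-\pi a^2/2}$ from \cite{GRV03}, which lies in $(0,1)$ for $a>0$ and gives both remaining claims at once. To compute it, I would first try to reduce the system to a second-order linear ODE of confluent hypergeometric (Kummer, or parabolic-cylinder) type. One route is stereographic projection of $T_a$, or a spinor/Riccati formulation of the Frenet system. Then I would read off $\ell$ from the known asymptotics of the Kummer functions, i.e.\ a Stokes-type connection formula, which is where the factor $e^{-\pi a^2/2}$ should come from.

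The main obstacle is exactly this computation: obtaining the exact connection coefficient, or at least the strict bounds $0<\ell<1$. If the exact calculation becomes unwieldy, I would fall back on qualitative arguments. For $\ell\neq1$ with $a>0$, I would show that $1-x$ cannot tend to $0$, using a Grönwall-type argument on the $z$-equation backwards from infinity: if $T\to e_1$ fast enough, then $z\equiv0$, and hence $a=0$. The bound $\ell\neq0$ is the harder half of this fallback, and I do not expect a soft argument for it to be easy.
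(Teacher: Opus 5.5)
\paragraph{A genuine gap: the nondegeneracy of $A^+_{a,1}$ is not proved.} Your symmetry step is correct and is the same as the paper's parity step. From $T_a(-s)=RT_a(s)$ you get $G_a(-s)=-RG_a(s)$, so $x_a$ is odd and $y_a,z_a$ are even, hence $A_a^-=RA_a^+$. Your reduction of (i) to $A^+_{a,1}\neq0$, and of (ii) to $A^+_{a,1}\neq\pm1$ for $a>0$, is also right. The gap is that this nondegeneracy, which is the entire content of the proposition, is left as a programme.

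Computing $\ell=e^{-\pi a^2/2}$ through a spinor reduction and a connection formula is a legitimate route. It is essentially the computation of \cite{GRV03} that the paper sketches after the proposition, with Lemma~\ref{lemma:BV} and the integrals $\beta_1,\beta_2$. But you do not carry it out. If you simply import the value from \cite{GRV03}, your reduction finishes the job, but then the proposition is a corollary of the harder angle computation rather than a proof of it.

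Your fallback cannot work. Your scalar system is also solved by $(x,z)=\big(T_a\cdot A_a^+,(n_a+ib_a)\cdot A_a^+\big)$. This solution tends to $(1,0)$, because $x^2+|z|^2\equiv1$. Yet it has $z\not\equiv0$ for $a>0$, since $z\equiv0$ would force $x\equiv0$ through $z'=-ax-\tfrac{is}{2}z$. So $x\to1$ never forces $z\equiv0$. Asking whether $\ell=1$ is asking whether $A_a^+=e_1$, which is the connection problem itself. For $\ell\neq0$ you offer no argument at all.

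There is also a slip. With $A^\pm_a=\lim_{s\to\pm\infty}G_a(s)/s$ and $G_0(s)=se_1$, one gets $A_0^+=A_0^-=e_1$, not $\pm e_1$. As you wrote it, (i) would fail at $a=0$, and it contradicts your own relation $A_a^-=RA_a^+$.

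\paragraph{The missing idea.} The paper's argument gets (i) and (ii) without computing the angle. Each component of $G_a$ solves $w'''+(a^2+\tfrac{s^2}{4})w'-\tfrac s4 w=0$ and grows at most polynomially. So you can Fourier transform it into \eqref{eq:Four_tranf_G}, which has a regular singular point at $\xi=0$ with indicial roots $0$ and $-2$.
\begin{itemize}
\item The Frobenius recurrence $n(n+2)c_n=4a^2c_{n-2}-4c_{n-4}$ forces the regular solution to be even.
\item The linear growth $A_{a,1}s$ of $x_a$, and $A^+_{a,j}|s|$ of $y_a,z_a$, appears in the transform as a singular part at the origin: $\hat s\propto\delta'$ and $\widehat{|s|}\propto\mathrm{f.p.}\,\xi^{-2}$.
\item Since $\hat x_a$ is odd and nonzero, it cannot be a regular solution. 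So its $\delta'$ coefficient, that is $A_{a,1}$, is nonzero. This is (i).
\item If $A_a^+=A_a^-$, then $A_{a,2}^+=A_{a,3}^+=0$. The even transforms $\hat y_a,\hat z_a$ are then both regular, hence multiples of the same function.
\item That contradicts the linear independence of $y_a,z_a$ for $a>0$: their Cauchy data $(w,w',w'')(0)$ are $(0,0,a)$ and $(2a,0,0)$. This is (ii).
\end{itemize}
Your route, once completed, would give the stronger identity $\sin\frac\theta2=e^{-\pi a^2/2}$. But it costs the exact connection computation, which is precisely the part you have not done.
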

 
 \begin{proof}
 In order to lighten the notation in the following we will denote  $(G_a)_s = G_a'$ and the same for higher order derivatives. We recall that, from \eqref{eq:pde_G_s} and \eqref{eq:ode_G_a}, $G_a - sG_a' = 2a b_a$ and $G'''_a = T_a' = an_a$. Hence, $G_a$ solves
 \begin{equation}\label{eq:ode_G'''}
 G_a''' + \Big (a^2 + \frac{s^2}4 \Big ) G_a' - \frac{s}4 G_a =0
 \end{equation}
 with initial condition
 \[
 \begin{cases}
 G_a(0) = 2a(0,0,1)\\
 G'_a(0) = (1,0,0)\\
 G''_a(0) =a(0,1,0).
 \end{cases}
 \]
We denote $G_a(s) = (x_a(s), y_a(s), z_a(s))$.  From the invariance of \eqref{eq:ode_G'''} under $s \rightarrow -s$ and the initial condition, we conclude that 
 \[
 \begin{split}
 x_a(s) &= -x_a(-s)\\
 y_a(s) &= y_a(-s)\\
 z_a(s)&=z_a(-s).
 \end{split}
\]
Therefore $x_a,y_a,z_a$ are three linearly independent solutions of the scalar ODE
\[
w''' + \Big (a^2 + \frac{s^2}4 \Big ) w' - \frac{s}4 w =0
\]
which are tempered distributions ($G_a, G'_a, G''_a, G'''_a$ grow polynomially). Applying the Fourier transform we get

\begin{equation}\label{eq:Four_tranf_G}
\xi \hat w'' + 3\hat w' +4\xi^3 \hat w -4\xi a^2 \hat w =0.
\end{equation}
is a singular regular ODE with $r=0$ and $r=-2$ the solutions of the indicial equation 
\[
r(r-1) +3r=0.
\]
Therefore the regular solutions have to be even and the odd solutions have to behave as $\omega_a \delta'$ at the origin. Since $\hat x_a$ is odd and nontrivial we have that $\omega_a \ne 0$. To sum up, using the vectorial notation $A_a^\pm = (A^\pm_{a,1}, A^\pm_{a,2}, A^\pm_{a,3})$ we get, recalling the definition \eqref{eq:defA_a^pm}
\[
\begin{split}
A^+_{a,1} &= A^-_{a,1}\\
A^+_{a,2} &= - A^-_{a,2}\\
A^+_{a,3} &= - A^-_{a,3}.
\end{split}
\]
Moreover $\omega_a \ne 0$ amounts to say that $A_{a,1}^+ \ne 0$. Hence we proved $i)$. \\
In order to prove $ii)$ we now assume that $A_a^+ = A_a^-$. This implies $A_{a,2}^+ = A_{a,2}^- = A_{a,3}^+ = A_{a,3}^- =0$. But then the even functions $\hat y_a(\xi), \hat z_a(\xi)$ should be regular at the origin and linearly dependent. This can be true only if $a=0$ and $A^+_0 = (1,0,0)= A^-_0$.
 \end{proof}
To conclude this subsection, we observe that it is also possible to find a relation between the angle $\theta$ determined by $A_a^+$ and $A_a^-$ and the parameter $a$. In order to find it the following geometric lemma is needed. We refer to \cite{BV12} for the proof for general curves and to \cite{GRV03} for the proof in the particular case of selfsimilar solutions.
  \begin{lemma}\label{lemma:BV}
 Let $(T_j, n_j, b_j)$ , $j=1,2,3$ be the components of the Frenet frame of a regular curve $G(s)$ with $s\in\R$ the arc length parameter and $(T(0), n(0), b(0)) = \mathbbm 1_{3\times3}$. Then
 \[
 \begin{split}
 T_j(s) &= 1 - \frac{\lvert \theta_j \rvert^2}{2 E(0)},\\
c(s) \big ( n_j(s) -ib_j(s) \big) &= -2\theta_j \bar \theta_j'
 \end{split}
 \]
 for $j=1,2,3$ with $\theta_j $ the solution of
 \[
 \theta_j^{''} + \bigg ( -\frac{c'}c + i\tau\bigg) \theta_j' + \frac{c^2}4 \theta_j = 0
 \]
 with
 \[
 \begin{split}
 \big( \theta_1(0), \theta'_1(0)\big) & = \bigg (0, \frac{c(0)}{\sqrt 2} \bigg),\\
 \big (\theta_2(0), \theta_2'(0)\big) & = \bigg (1, -\frac{c(0)}2 \bigg),\\
 \big(\theta_3(0), \theta_3'(0)\big) & = \bigg(i, \frac{c(0)}2 \bigg)
 \end{split}
 \]
 and
 \[
 E(s) \coloneqq \bigg \lvert \frac{\theta_j'}c \bigg \rvert^2 + \frac{\lvert \theta_j \rvert^2}4 = E(0) = \frac12.
 \]
 \end{lemma}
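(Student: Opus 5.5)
The plan is to check that the proposed expressions satisfy the same linear ODE as the components of the Frenet frame, with the same initial data, and then conclude by uniqueness. I assume $c>0$ throughout, which is what gives meaning to the factors $c'/c$ and $\theta_j'/c$.

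Fix $j$ and regard $(T_j,n_j,b_j)(s)$ as a solution of the scalar system obtained from \eqref{eq:frenet_eqs}:
\[
T_j'=c\,n_j,\qquad n_j'=-cT_j+\tau b_j,\qquad b_j'=-\tau n_j .
\]
It has a unique solution for the given initial data. Introduce the complex quantity $z_j=n_j-ib_j$. A one-line computation turns the system into
\[
T_j'=c\,\mathrm{Re}\,z_j,\qquad z_j'=-cT_j+i\tau z_j .
\]
It therefore suffices to show two things. First, the pair $\widetilde T=1-|\theta_j|^2/(2E)$ and $\widetilde z=-2\theta_j\bar\theta_j'/c$ solves this same system. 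Second, it has the same values as $(T_j,z_j)$ at $s=0$. The natural way to arrive at the ansatz is a Hopf (spinor) lift, under which the $\mathbb S^2$-valued linear flow becomes a linear second-order ODE. For the proof, however, direct verification is enough.

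\textbf{Step 1: conservation of $E$.} I differentiate $E$ and substitute $\theta''=(c'/c-i\tau)\theta'-\tfrac{c^2}{4}\theta$. The $c'$ terms cancel against the derivative of $c^{-2}$. The term $i\tau|\theta'|^2$ is purely imaginary, so it drops out of the real part. What remains is
\[
\Big(\frac{|\theta'|^2}{c^2}\Big)'=-\tfrac12\mathrm{Re}(\theta\bar\theta'),
\]
which exactly cancels $(|\theta|^2/4)'=\tfrac12\mathrm{Re}(\theta\bar\theta')$. Hence $E$ is constant. Evaluating at $s=0$ with each of the three initial data gives $E=\tfrac12$; for instance, for $\theta_2$ one gets $\tfrac14+\tfrac14$.

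\textbf{Step 2: $\widetilde T'=c\,\mathrm{Re}\,\widetilde z$.} Using $E=\tfrac12$, we have $\widetilde T'=-\mathrm{Re}(\theta\bar\theta')/E=-2\mathrm{Re}(\theta\bar\theta')$. This equals $c\,\mathrm{Re}\,\widetilde z$, since $c\,\mathrm{Re}\,\widetilde z=-2\mathrm{Re}(\theta\bar\theta')$.

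\textbf{Step 3: $\widetilde z'=-c\widetilde T+i\tau\widetilde z$.} This is the main computation. Expanding,
\[
\widetilde z'=-2\Big(\frac{|\theta'|^2}{c}+\frac{\theta\bar\theta''}{c}-\frac{c'}{c^2}\theta\bar\theta'\Big).
\]
I insert the conjugated ODE $\bar\theta''=(c'/c+i\tau)\bar\theta'-\tfrac{c^2}{4}\bar\theta$. The $c'$ terms cancel. The $i\tau$ term produces exactly $i\tau\widetilde z$. The remaining terms are $-2|\theta'|^2/c+\tfrac{c}{2}|\theta|^2$. Replacing $|\theta'|^2/c^2$ by $E-|\theta|^2/4=\tfrac12-|\theta|^2/4$ turns this into $-c+c|\theta|^2=-c\widetilde T$, as required. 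I expect this step, and in particular the bookkeeping of signs and conjugates, to be the only real obstacle. If the signs fail to close, the first thing to check is the convention $n-ib$ versus $n+ib$.

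\textbf{Step 4: initial data and uniqueness.} At $s=0$ each candidate is checked directly against the corresponding column of the identity frame.
\begin{itemize}
\item For $\theta_1$: $\widetilde T=1$ and $\widetilde z=0$, which is $(T_1,n_1,b_1)(0)=(1,0,0)$.
\item For $\theta_2$: $\widetilde T=1-1=0$ and $\widetilde z=-2\cdot1\cdot(-c/2)/c=1$, so $n_2(0)=1$ and $b_2(0)=0$.
\item For $\theta_3$: $\widetilde T=0$ and $\widetilde z=-2i(c/2)/c=-i$, so $n_3(0)=0$ and $b_3(0)=1$.
\end{itemize}
In every case the data agree with $(T_j,n_j,b_j)(0)$. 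Uniqueness for the linear system then gives $\widetilde T=T_j$ and $\widetilde z=n_j-ib_j$ for all $s$, which is the statement of Lemma~\ref{lemma:BV}.
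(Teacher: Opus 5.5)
Your argument is correct. Each piece checks out: the conservation of $E$, the identities $\widetilde T'=c\,\mathrm{Re}\,\widetilde z$ and $\widetilde z'=-c\widetilde T+i\tau\widetilde z$, and the three sets of initial data. Uniqueness for the real linear system satisfied by $(T_j,n_j,b_j)$ then closes the proof, and this needs only $c\in C^1$, $c>0$ and $\tau$ continuous.

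The paper itself gives no proof of this lemma; it only points to \cite{BV12} and \cite{GRV03}. So the useful comparison is with the classical constructive derivation of such formulas.

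\textbf{The constructive route.}
\begin{enumerate}
\item Project the unit vector $(T_j,n_j,b_j)$ stereographically: $\zeta_j=(n_j+ib_j)/(1-T_j)$.
\item This projection satisfies the Riccati equation $\zeta_j'=\tfrac c2(1+\zeta_j^2)-i\tau\zeta_j$.
\item The substitution $\zeta_j=-2\theta_j'/(c\theta_j)$ linearizes it into exactly the stated second-order equation.
\item Inverting the projection shows that, for any nonzero solution $\theta$, the expressions $1-|\theta|^2/(2E)$ and $-\theta\bar\theta'/(cE)$ solve the system for $(T_j,\,n_j-ib_j)$.
\item The initial data simply select the preimage of the identity frame. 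For instance, $\theta_1(0)=0$ corresponds to the pole $\zeta_1=\infty$.
\end{enumerate}

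\textbf{What each approach buys.} The constructive route explains where the equation, the invariant $E$, the factor $2E(0)$ and the initial data come from. Your proof instead presupposes the formulas. In exchange, your verification is shorter and never divides by $\theta_j$ or by $1-T_j$. It therefore avoids the pole of the Riccati variable, which appears already at $s=0$ for $j=1$ and at every zero of $\theta_j$.
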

 
 In our case $\theta_j$ has to be a solution of 
 \begin{equation}
 \label{eq:theta_j}
 \theta''_j + i \frac s2 \theta_j' + \frac{a^2}4 \theta_j=0.
 \end{equation}

 This equation can be easily integrated computing the Fourier transform of $\theta_j$. 
 In fact, two linearly independent solutions of \eqref{eq:theta_j} can be taken as follows
 \begin{equation}
 \label{eq:betas}
 \begin{split}
 \beta_1(s) &= \int_{-\infty}^{+\infty} e^{i(s\xi + \xi^2)} \frac{d}{d\xi} \Big [ e^{-i \frac{a^2}2 \log \lvert \xi \rvert} \mathbbm I_{[0,+\infty)} (\xi) \Big ] d\xi, \\
 \beta_2(s) &=  \int_{-\infty}^{+\infty} e^{i(s\xi + \xi^2)} \frac{d}{d\xi} \Big [ e^{-i \frac{a^2}2 \log \lvert \xi \rvert} \mathbbm I_{(-\infty,0]} (\xi) \Big ] d\xi.
 \end{split}
 \end{equation}
 Recalling that $A_a^+ = (A_{a,1}, A_{a,2}, A_{a,3})$, $A^-_a =(A_{a,1}, -A_{a,2}, -A_{a,3})$ we have
 \[
 \cos \theta =1 - 2A^2_{a,1}, \qquad \sin \frac{\theta}2 = A_{a,1}.
 \]
 Then, it suffices to compute $A_{a,1} = \lim_{s\rightarrow \infty}T_1(s)$ using the first equality in \Cref{lemma:BV}. After some computations one gets
 \[
 A_{a,1}=e^{- \frac{\pi a^2}2}.
 \]
 Finally, notice that the dispersive relation of $\beta_1$ and $\beta_2$ is
 \[
 \phi (\xi)= \xi^2 - \frac{a^2}2 \log \lvert \xi \rvert,
 \]
 which indicates that the nonlinearity only makes a logarithmic correction to the free evolution.\\
 From \eqref{eq:betas} and \Cref{lemma:BV} the asymptotic behavior of $G_a$ and $T_a$ can be deduced also. The arguments, although lengthy, are rather elemental. They can be found in \cite{GRV03} where the following is proved.
 \begin{theorem}\label{thm:asymp_onecorner}
 Let $a \ge0$. The function
 \[
 \chi_a = \sqrt t G_a \bigg ( \frac s{\sqrt t} \bigg)
 \]
 with $G_a$ defined by \eqref{eq:ode_G_a} is a real analytic solution of \eqref{eq:CPchi} for $t>0$.\\
 Moreover there exist $A_a^+, A_a^-, B_a^+, B_a^-$ such that
 \begin{enumerate}[i)]
 \item $\big \lvert \chi_a(s,t) - A^+_a s \mathbbm I_{[0,+\infty)}(s) - A^-_a s \mathbbm I_{(-\infty,0]} (s) \big \rvert \le a \sqrt t$;\\
 \item let $\lvert s\rvert > \max (2a,4)$, then the following asymptotics, as $s \rightarrow \pm \infty$, hold
 \[
 \begin{split}
 G_a(s) &= A_a^\pm \Big ( s + \frac{2a^2}s \Big ) - \frac{4a n_a}{s^2} + O\Big (\frac 1{s^3}\Big ), \\
 T_a(s) &= A_a^\pm - \frac{2ab_a}s + O\Big ( \frac 1{s^2}\Big),\\
 (n_a -ib_a)(s) &= B_a^\pm e^{i \frac{s^2}4 + ia^2 \log \lvert s \rvert} + O\Big (\frac 1s \Big);
 \end{split}
 \]
 \item $A_a^\pm = \big(A^\pm_{a1}, A^\pm_{a2}, A^\pm_{a3} \big)$ are unitary vectors and
 \[
 A^+_{a1}= A^-_{a1}= e^{- \pi \frac{a^2}2}, \quad A^+_{a2}=-A^-_{a2}, \quad A^+_{a3}=-A^-_{a3}, \quad \langle A_a^\pm, B_a^\pm \rangle=0;
 \]
 \item $\sin \frac{\theta}2= A^\pm_{a1} = e^{- \pi \frac{a^2}2}$ and $\sin \phi = \frac{A_2}{\sqrt{1-A^2_2}}$ where $\theta$ is the angle between $A_a^+$ and $-A^-_a$ and $\phi$ is the angle between the plane that contains $A_a^+$ and $A^-_a$ and the vector $b_a(0)=(0,0,1)$;
 \item there exist $a_0$ and $a_1$ with $0<a_0\le a_1 < +\infty$ such that if $a<a_0$ then $\chi_a$ has no self-intersections and if $a_1\le a$ then $\chi_a$ has finitely many self-intersections.
 \end{enumerate}
 \end{theorem}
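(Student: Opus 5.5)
The proof assembles the facts already derived above and supplements them with an asymptotic analysis of the Frenet system with $(c,\tau)=(a,s/2)$.

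\textbf{Analyticity and a first bound.} Analyticity holds because the Frenet system with $(c,\tau)=(a,s/2)$ is a linear ODE with entire coefficients, so $T_a,n_a,b_a$ are entire. Hence $G_a$ is analytic, and so is $\chi_a(s,t)=\sqrt t\,G_a(s/\sqrt t)$ for $t>0$. The fact that $\chi_a$ solves \eqref{eq:CPchi} is equivalent to \eqref{eq:pde_G_s}, which holds by the construction \eqref{eq:ode_G_a}: indeed $(G_a - sG_a' - 2ab_a)' = -sG_a'' + 2a\tau n_a = -sa n_a + as n_a = 0$, and this quantity vanishes at $s=0$. For i), we already have the representation $G_a(s)=sA_a^+ + 2as\int_s^\infty b_a(s')s'^{-2}ds'$, which gives the bound $2a\sqrt t$. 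To reach $a\sqrt t$, I would integrate by parts using $b_a' = -\tfrac{s}{2}n_a$, i.e.\ $b_a/s'^2 = \dots$. More simply, I would use the refined expansion from ii) to gain the constant; if the sharp constant is too delicate, I would settle for $C a\sqrt t$.

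\textbf{The asymptotics in ii).} Set $N=n_a - ib_a$. The Frenet equations give $N' = -aT_a + i\tfrac{s}{2}N$, so $(e^{-is^2/4}N)' = -a e^{-is^2/4}T_a$. Integrating by parts against the oscillating factor, with $T_a' = a\,\mathrm{Re}\,N$ feeding back, produces $N = B^\pm e^{is^2/4 + i a^2\log|s|}(1+O(1/s))$. The logarithmic phase arises from the resonant term $-a\cdot a\,\overline{\cdot}$: writing $T_a'=\frac a2(N+\bar N)$, the part $\frac{a}{2}N$ integrated against $e^{-is^2/4}$ gives $\frac{-2ia^2}{2s}\cdot$ (the coefficient), i.e.\ a correction $ia^2/s$, whose integral is $ia^2\log s$. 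The expansion for $T_a$ then follows by integrating $T_a' = a\,\mathrm{Re}\,N$ from $\infty$ and one further integration by parts, using $b_a=-\mathrm{Im}\,N$. The expansion for $G_a$ follows from $G_a = sT_a + 2ab_a$ together with the expansion of $T_a$ to order $s^{-2}$, which requires one more integration by parts. The condition $|s|>\max(2a,4)$ ensures that the remainders, which are geometric in $a/s$, are controlled.

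\textbf{Parts iii)--v).} The symmetry relations in iii) are exactly Proposition \ref{prop:A+nepmA-} and its proof. Orthogonality $\langle A^\pm,B^\pm\rangle=0$ follows by letting $s\to\pm\infty$ in $\langle T_a,N\rangle=0$: the product $\langle A^\pm, B^\pm\rangle e^{i\phi(s)}$ must tend to $0$ while having constant modulus. The value $e^{-\pi a^2/2}$ comes from Lemma \ref{lemma:BV}: I would take $T_1 = 1-|\theta_1|^2$ with $\theta_1$ a combination of $\beta_1,\beta_2$ from \eqref{eq:betas}. A stationary-phase analysis of $\beta_j$ as $s\to\infty$ shows that the limit of $|\theta_1|^2$ is governed by $|\Gamma(\tfrac{ia^2}{2})|$-type factors, i.e.\ by $\Gamma$-function identities yielding $e^{-\pi a^2/2}$. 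Part iv) is elementary geometry from iii). For v), the small-$a$ case would be handled by perturbation from the straight line $a=0$, using the uniform asymptotics of ii). For large $a$, finiteness of self-intersections follows from the asymptotic straight-line behaviour outside a compact set, with existence of some intersection checked by comparison with the circle/helix regime.

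\textbf{Main obstacle.} I expect the hardest steps to be the exact computation of $A_{a1}$ via the Fourier-integral asymptotics and the self-intersection statement in v).
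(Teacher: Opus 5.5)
Your treatment of analyticity, the check of \eqref{eq:pde_G_s}, the symmetries and the orthogonality $\langle A_a^\pm,B_a^\pm\rangle=0$ in iii), and the computation of $A_{a1}$ through Lemma~\ref{lemma:BV} and \eqref{eq:betas} all follow the paper's sketch. The last computation does close: $\theta_1(0)=0$ forces $\theta_1=c(\beta_1+\beta_2)$, and the moduli $|\Gamma(1-\frac{ia^2}{2})|$ and $|\Gamma(\frac12-\frac{ia^2}{4})|$ combine into $|\theta_1(+\infty)|^2=1-e^{-\pi a^2/2}$.

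For ii) you take a different route from the paper, which reads the asymptotics off \eqref{eq:betas}. You integrate by parts directly in the Frenet system instead. That is more elementary and correctly produces the phase $a^2\log|s|$, but it gives only the form of the asymptotics, not the constants. Two points need fixing.
\begin{itemize}
\item The error in $n_a-ib_a$ is an additive vector $O(1/s)$, not a factor $1+O(1/s)$. It contains the non-oscillating term $-\frac{2ia}{s}A_a^\pm$, which is exactly what produces $\frac{2a^2}{s}A_a^\pm$ in $G_a$.
\item Going through $G_a=sT_a+2ab_a$ requires $T_a$ up to $O(s^{-4})$, not to order $s^{-2}$. It is simpler to iterate the identity $n_a-ib_a=-\frac{2i}{s}\bigl((n_a-ib_a)'+aT_a\bigr)$ inside the exact formula $G_a(s)=sA_a^++2as\int_s^\infty b_a(s')s'^{-2}\,ds'$. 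Together with $T_a=A_a^+-\frac{2ab_a}{s}+2a\int_s^\infty b_a s'^{-2}ds'$, this gives $\int_s^\infty b_a s'^{-2}ds'=\frac{aA_a^+}{s^2}-\frac{2n_a}{s^3}+O(s^{-4})$, using only $|T_a|=1$ and $|n_a-ib_a|=\sqrt2$.
\end{itemize}

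In i), do not try to improve the constant. At $s=0$ the left-hand side equals $\sqrt t\,|G_a(0)|=2a\sqrt t$ by \eqref{eq:ode_G_a}, so the bound $a\sqrt t$ is false for $a>0$. The correct and sharp bound is the $2a\sqrt t$ that the integral representation gives, as the paper derives just before the theorem. Likewise, carrying out the geometry in iv) gives $A_a^+\wedge A_a^-=2A_{a1}(0,A_{a3},-A_{a2})$, hence $\sin\phi=|A_{a2}|/\sqrt{1-A_{a1}^2}$.

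The genuine gap is v). Closeness to a circle or helix cannot by itself produce a self-intersection. A space curve near a closed loop generically has none, since $G_a(s_1)=G_a(s_2)$ is three equations in two unknowns. The missing idea is the reflection symmetry from the proof of Proposition~\ref{prop:A+nepmA-}: $x_a$ is odd and $y_a,z_a$ are even, so $G_a(s)=G_a(-s)$ if and only if $x_a(s)=0$, a scalar equation.
\begin{itemize}
\item \emph{Existence for large $a$.} On $0\le s\le 2\pi/a$ the frame differs from the $\tau=0$ frame by at most $\int_0^s\frac{s'}{2}ds'=\frac{s^2}{4}$. So $G_a(s)$ stays within $s^3/12$ of the circle $(\frac{\sin as}{a},\frac{1-\cos as}{a},2a)$, giving $x_a(\frac{3\pi}{2a})<0$ once $a$ is large. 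Since $x_a(s)/s\to e^{-\pi a^2/2}>0$, the intermediate value theorem gives a self-intersection.
\item \emph{Finiteness.} This does not follow from the straight-line behaviour at infinity alone, which only confines coincidences to a compact parameter set. There you need analyticity: zeros of $x_a$ are isolated, and an arc of coincidences $G_a(s)=G_a(\pm s+c)$ would force $\tau_a(s)=\tau_a(\pm s+c)$, impossible for $\tau_a=s/2$ off the diagonal. This argument works for every $a$, so existence is the real content for large $a$.
\item \emph{Small $a$.} Rather than perturbing asymptotics whose constants you have not shown to be uniform in $a$, use $T_a(s)-(1,0,0)=a\int_0^s n_a$. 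Write $n_a=-2b_a'/s$ on $|s|\ge1$ and integrate by parts once to get $|\int_0^s n_a|\le 5$. Then $x_a'\ge 1-5a>0$ for $a<1/5$, so $G_a$ is injective.
\end{itemize}
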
 

\begin{remark}
A similar result holds also for the hyberbolic case, see \cite{dlH07} (Thm.\ 1). In that case the vectors $A_a^\pm \in \mathbb H^2$ describing the shape of the corner satisfies, using the notations as in \Cref{thm:asymp_onecorner}
\[
A^+_{a1} = - A_{a1}^-, \quad A_{a2}^+ = -A_{a2}^-, \quad A^+_{a3}=A_{a3}^-=e^{\pi \frac{a^2}2}.
\]
Moreover, we refer the reader to \cite{HKV22} (Appendix A) where, using a completely different approach wrt \cite{dlH07} and \cite{GRV03}, the authors using the Laplace transform rederive $A_3$ in both cases the euclidean and the hyperbolic ones. 
\end{remark} 
 \subsection{An Introduction to 1d cubic NLS}\label{sec:intro1dcubicnls}\noindent
As seen in \Cref{sct:sectionBF}, the 1d cubic Schrödinger equation appears naturally in the description of the BF (via the Hasimoto trasformation \eqref{Has_trans}). Here we recall some classical properties of the above mentioned equation, for the unfamiliar reader. We refer to \cite{CW90, Tao06, LP15} and the references therein for more details.\\
The 1d cubic nonlinear Schrödinger equation (NLS) equation is given by
\begin{align}
    i \partial_t u + \partial_{xx} u \pm  |u|^2u = 0,
    \label{NLS}
\end{align}
where $u = u(x,t) \colon \R_t \times \R_x \rightarrow \mathbb C$. As it was previously pointed out, the $\pm$ sign in the right hand side refers to the focusing (euclidean) or defocusing (hyperbolic) nature of the equation. 
If we consider the Cauchy problem associated to \eqref{NLS} with initial data $u(x,0) = u_0(x)$, we can heuristically express the solution through the linear evolution and Duhamel's formula:
\begin{align}
    u(x,t) = e^{it\partial_{xx}} u_0 \pm i \int_0^t e^{i(t-\tau)\partial_{xx}} u|u|^2(\tau) d\tau,
    \label{linear&duhamel}
\end{align}
where, through Fourier inversion, we define
\begin{align}
    e^{it\partial_{xx}} u_0 = \int_{\mathbb{R}} e^{-it|\xi|^2 + ix\xi} \hat{u}_0(\xi) d\xi.
    \label{homsol}
\end{align}
We present now some relevant symmetries of the NLS \eqref{NLS}:
    \begin{itemize}
        \item \underline{Translations}: It is straightforward that the equation is invariant under space and time translations.
        \item \underline{Gauge invariance}: It is clear, both from the linear terms and the nonlinear one, that \eqref{NLS} is invariant under any Gauge transformation $u \mapsto e^{i\theta}u$, for any $\theta \in \mathbb{R}$.
        \item \underline{Scaling invariance} : given $\lambda \in \mathbb{R} \setminus \{0\}$ and $\alpha$ a parameter to be determined, consider the scaling
        \begin{align*}
            u_{\lambda}(x,t) = \frac{1}{\lambda^\alpha} u\left(\frac{x}{\lambda} , \frac{t}{\lambda^2}\right).
        \end{align*}
        Note that, for $u_\lambda$ to solve \eqref{NLS}, we need
        \begin{align*}
            &|u_\lambda|^2 u_\lambda = \lambda^{-3\alpha} \left|u\left(\frac{x}{\lambda} , \frac{t}{\lambda^2}\right)\right|^2u\left(\frac{x}{\lambda} , \frac{t}{\lambda^2}\right), \nonumber \\
            &(\partial_t - i\partial_{x}^2) u_\lambda = \lambda^{-2-\alpha} ((\partial_t - i\partial_{x}^2)u)\left(\frac{x}{\lambda} , \frac{t}{\lambda^2}\right),
        \end{align*}
        so by imposing the equality of the resulting exponents, we obtain that if $\alpha=1$, then the function
        \begin{align}
            u_{\lambda}(x,t) = \frac{1}{\lambda} u(\frac{x}{\lambda} , \frac{t}{\lambda^2})
            \label{scaling}
        \end{align}
        is still a solution of \eqref{NLS}.
        \item \underline{Galilean transformation}: the reason of this covariance comes from the fact that \eqref{NLS} comes from nonrelativistic quantum mechanics, so it must be invariant under the so-called Galilean boost
        \begin{align}
            p' = p + mv, \quad x' = x + vt,
            \label{galileanboost}
        \end{align}
        given a moving frame speed $v$. As we can see in the wave packet decomposition from the homogeneous solution \eqref{homsol}, the phase of a Schrödinger plane wave with frequency $w = w(\xi) = |\xi|^2$ is given by $x \xi - wt$. Given $m$ a mass, if we multiply such a phase by $mw/\xi^2$, we obtain
        \begin{align*}
            (\frac{mw}{\xi})x - (\frac{mw^2}{\xi^2})t, 
        \end{align*}
        which, when doing a dimensional inspection, corresponds to $px - Et$, for $p$ linear momentum, and $E$ energy, both corresponding to a plane wave. Namely, the latter corresponds, when talking of the homogeneous solution of the Schrödinger equation, to the energy of a free particle with mass $m$ and linear momentum $p$, i.e., $E = p^2/2m$. Applying Galilean boost \eqref{galileanboost} we obtain a phase corresponding to
        \begin{align*}
            &(p' - mv)(x' - vt) - \frac{(p' - mv)^2}{2m}t \nonumber \\
            &= p'x' - \frac{p'^2}{2m}t + \left( \frac{mv^2t}{2} - mvx' \right),
        \end{align*}
        what means that differences in plane waves phases comes with position and time, but not with momentum. In fact, given a superposition of plane waves, up to $x$ and $t$ the boost affects each plane wave equally, what establishes such an invariance for general solutions in $L^2(\mathbb{R})$. This implies that, given $u$ a solution of \eqref{NLS} and $v \in \mathbb{R}$, then 
        \begin{align*}
            u_v(x,t) = e^{i(mvx - \frac{mv^2t}{2})} u(x-vt,t)
        \end{align*}
        is a solution of the same equation. For convenience, given $N \in \mathbb{R}$, we take $m = 1/2$ and $v = 2N$ and write the transformation as
        \begin{align}
            u_N(x,t) = e^{iNx - iN^2t} u(x - 2Nt,t).
            \label{galileantransf}
        \end{align}
\end{itemize} 
        \begin{remark}\label{fun_gal_inv}
        This covariance drives us to the following consequence.
        Given $u_a = a \delta$ for any $a \in \mathbb{R}$ and $\delta$ the ordinary point mass distribution, we have that $e^{iNx} u_a = u_a$ for any $N > 0$. Moreover, if $u$ solves \eqref{NLS}, then $u_N$ given by \eqref{galileantransf} also solves it. Therefore, if we assume that uniqueness is guaranteed, then the solution is Galilean invariant at any instant, i.e., $u_N(t,\cdot) = u(t,\cdot)$ for any $N \in \mathbb{R}$. This implies that, for our case of cubic nonlinearity and one spatial dimension, we obtain a solution
        \begin{align}
            u(x,t) = a\frac{e^{ix^2/4t}}{\sqrt{it}} e^{\pm ia^2\log(t)}.
            \label{NLSdeltaanomaly}
        \end{align}
        In particular, this shows that the corresponding IVP is ill-posed under these conditions (see \cite{KPV01}).\\
        As an exercise, We show that the only solution left invariant by Galilean transformation is indeed give by \eqref{NLSdeltaanomaly}; we differentiate with respect to $N$ the equality
        \begin{align}
            &u(x,t) = u_N(x,t) = e^{iNx - iN^2t} u(x-2Nt,t) \nonumber\\
            &\Longrightarrow i[x - 2Nt] u(x-2Nt,t) -2t \partial_{x_1}u(x-2Nt,t) = 0 \nonumber\\
            &\Longrightarrow \partial_{x_1} u(x-2Nt,t ) = \frac{i(x-2Nt)}{2t} u(x-2Nt,t),
            \label{eq:GalileanTransfDerivative}
        \end{align}
        where we denoted $x_1$ the first component of the argument of $u$. Since \eqref{eq:GalileanTransfDerivative} holds for every $N \in \mathbb{R}$, then
        \begin{align*}
            \partial_x u(x,t) = \frac{ix}{2t} u(x,t).
        \end{align*}
        Solutions are given by
        \begin{align*}
            u(x,t) = g(t) e^{i \frac{x^2}{4t}}
        \end{align*}
        for some function $g : \mathbb{R} \mapsto \mathbb{C}$. Imposing that $u$ is a solution of the cubic NLS we find the following ODE for g:
        \begin{align}
            &i\partial_t\left(g(t) e^{\frac{ix^2}{4t}}\right) + \partial_{xx}\left(g(t) e^{\frac{ix^2}{4t}}\right) \pm \frac{1}{2} \left|g(t) e^{\frac{ix^2}{4t}}\right| \cdot g(t) e^{\frac{ix^2}{4t}} = 0  \nonumber\\
            &\Longrightarrow \left[ ig'(t) + \frac{x^2}{4t^2} g(t) + \frac{i}{2t} g(t) - \frac{x^2}{4t^2}g(t) \pm \frac{1}{2}|g(t)|^2g(t)\right] e^{\frac{ix^2}{4t}} = 0 \nonumber\\
            &\Longrightarrow g'(t) = -\frac{1}{2t}g(t) \pm \frac{i}{2} |g(t)|^2g(t) \label{eq:g(t)NLS_Delta}.
        \end{align}
        By multiplying by $t$ and $\overline{g}(t)$, and afterwards conjugating the resulting equality, we obtain the relations
        \begin{align*}
            &tg'(t) \overline{g}(t) + \frac{1}{2} |g(t)|^2 = \pm \frac{it}{2} |g(t)|^4, \nonumber\\
            &t\overline{g}'(t)g(t) + \frac{1}{2} |g(t)|^2 = \mp \frac{it}{2} |g(t)|^4.
        \end{align*} 
        By summing these two equalities,
        \begin{align*}
            2 t \text{Re}(\overline{g}(t) g'(t)) + |g(t)|^2 = 0.
        \end{align*}
        Moreover, note that
        \begin{align*}
            \partial_t(t |g(t)|^2) = 2 t \text{Re}(\overline{g}(t) g'(t)) + |g(t)|^2.
        \end{align*}
        Therefore, $t \lvert g(t) \rvert^2$ is constant in time. Then, setting $g(t)= \frac{c}{\sqrt t}e^{i\theta(t)}$ in the equation \eqref{eq:g(t)NLS_Delta}, with $c \in \mathbb C$ and $\theta$ a function in $t$ taking real values, we get \eqref{NLSdeltaanomaly}. 
        \end{remark}

Before presenting some of the useful conserved quantities associated to these equations, we describe a class of solutions, call \emph{solitons} in order to better understand the behaviour of solutions of \eqref{NLS}. Let us first consider special solutions, the standing waves. A standing wave is a solution to \eqref{NLS} of the form $u(x,t) = e^{i \omega t} Q(x)$, where $\omega \in \R$ and $0 \ne Q \in H^1(\R)$ is a localized solution to the elliptic stationary equation 
\begin{equation}\label{eq:Q}
Q''(x)  - \omega Q \pm Q^3 =0.
\end{equation}
A soliton is a standing wave that has been given a speed thanks to a Galilean transform. We showed before that \eqref{NLS} is invariant under Galilean trasformation so a soliton is still a solution of the same equation. In particular, a soliton with frequency $\omega >0$, speed $v \in \R$ has the form
\[
\phi(x,t) = Q_\omega(x - vt) e^{\frac i2 vx - \frac 14 v^2 t + i \omega t}.
\]
In the particular case of the 1d cubic NLS it has been showed that solitons can appear only in the focusing case. It is not hard to check that solutions of the focusing \eqref{eq:Q} are given by
\[
Q_\omega(x) = \sqrt {\frac 2 \omega}  \sech \Big ( \frac x{\sqrt \omega} \Big ), \quad \omega >0 .
\]

\begin{remark}
 Soliton solutions behaves very differently from free solutions. They are global non scattering solutions with a fixed profile, which is always localized, then they do not disperse (see e.g.\ \cite{LF28, Har82}). 
\end{remark}
Furthermore, we present a non exhaustive list of conservation laws for the \eqref{NLS} (see \cite{KT16} for more details):
    \begin{itemize}
        \item If we multiply \eqref{NLS} by $\overline{u}$, we obtain
        \begin{align}
            (\partial_t u) \overline{u} &= i((\partial_{xx} u)\overline{u} \pm |u|^2u\overline{u}) \nonumber\\
            & = i(\partial_x(\overline{u}\partial_x u) - \partial_x u \partial_x\overline{u} \pm |u|^4).
            \label{cl11}
        \end{align}
        If we conjugate in \eqref{cl11},
        \begin{align}
            u (\partial_t\overline{u}) = -i(\partial_x(u \partial_x \overline{u}) - \partial_x u \partial_x \overline{u} \pm |u|^4).
            \label{cl12}
        \end{align}
        Now, sum up \eqref{cl11} and \eqref{cl12} to obtain the so-called \emph{local mass conservation}:
        \begin{align}
            \partial_t |u|^2 = i \partial_x(\overline{u}\partial_x u - u \partial_x \overline{u}) = -2 \text{Im}\partial_x(u\partial_x \overline{u}).
            \label{cl13}
        \end{align}
        Then, integrating on $x$, both in $\Omega = \mathbb{R}$ or $\Omega = \mathbb{T}$, we have the \emph{mass conservation}:
        \begin{align}
            \partial_t \int_{\Omega} |u|^2 dx = 0 \Longrightarrow ||u||_{L^2(\Omega)} \quad \text{is constant}.
            \label{enconservation1}
        \end{align}
       
         \item By multiplying the equation \eqref{NLS} by $\partial_x \bar u$, and taking the imaginary part, we get the \emph{conservation of momentum}
        \[
        \partial_t \int_\Omega \text{Im} \{ u \partial_x \bar u\} dx = 0
        \]
        \item By multiplying now \eqref{NLS} by $\partial_t \overline{u}$, following a similar procedure, we obtain the \emph{energy conservation}
        \begin{align}
           \partial_t \int_{\R} \Big (|\partial_x u|^2 \mp \frac{1}{2}|u|^4 \Big )dx = 0
            \label{eq:energyconservation}
        \end{align}
        Notice that, in the defocusing case, the conservation of the energy and of the $L^2$ norm imply the conservation of the $H^1_x(\R)$ norm.\\
    \end{itemize}
    The Cauchy problem associated equation \eqref{NLS} is well-posed for initial datum in $H^s$, for $s \geq 0$ (see \cite{CW90} and \cite{GV79}). On the other hand, for $s<0$ the Cauchy problem is ill-posed: 
    as already stated, in \cite{KPV01} either existence or uniqueness were proved to be lost using the Galilean transformation, and in \cite{CCT03} norm-inflation phenomena were displayed. \\
 
    Note that, if we consider the equation \eqref{NLS} for $x \in \R^d$, $d \ge 1$ the threshold for the validity of local wellposdess in $H^s$ of the NLS 
    obtained with respect to the scaling invariance $u_{\lambda} = \lambda^{2/(k-1)} u (\lambda^2 t, \lambda x)$ is $\dot{H}^{\frac{d}{2} - \frac{2}{k-1}}(\mathbb{R})$ (for general nonlinearity $|u|^{k-1}u$). To see this, let $s \in \mathbb{R}$ and $u_0 \in \dot{H}^s = \dot{H}^s(\mathbb{R}^d)$ the corresponding initial condition. Recalling that $\hat{u}_{\lambda}(\xi) = \lambda^{-d} \hat{u}(\lambda^2 t,\frac{\xi}{\lambda})$, then
\begin{align*}
    &||\lambda^{\frac{2}{k-1}} u (\lambda^2 t, \lambda x)||_{\dot{H}^s}^2 = \lambda^{\frac{4}{k-1}} \int_{\mathbb{R}} |\xi|^{2s} |\mathcal{F}_s[u(\lambda^2 t, \lambda s)](\xi)|^2 d\xi \nonumber\\
    &=   \lambda^{\frac{4}{k-1} - 2d} \int_{\mathbb{R}} |\lambda \xi|^{2s} |\hat{u}(\lambda^2 t,\frac{\xi}{\lambda})|^2 d\xi = \lambda^{2s + \frac{4}{k-1} - d} || u(\lambda^2 t, \cdot)||_{\dot{H}^s}^2.
\end{align*}
If we take $t=0$ in the last equality, the invariance
\begin{align*}
    ||\lambda^{\frac{2}{k-1}} u (\lambda^2 t, \lambda x)||_{\dot{H}^s} = ||u_0||_{\dot{H}^s}
\end{align*}
imposes that 
\begin{align*}
    s = \frac{d}{2} - \frac{2}{k-1}.
\end{align*}
In our case we have $d=1$ and $k = 3$, so $s = -1/2$ is the critical value. For $s \leq -1/2$, the presence of norm inflating phenomena with loss of regularity was pointed out in \cite{CK17} and \cite{Ki06}, and also norm-inflating around any data was proved in \cite{Oh17}. Finally, growth control for Sobolev norms of Schwartz solutions for $s \in (-\frac{1}{2},0)$ on $\mathbb{R}$ or $\mathbb{T}$ was shown in \cite{KVZ} and \cite{KT16}.
\\

We now prove the local well-posedness in $L^2(\R)$ for the IVP 
    \begin{equation}\label{IVP_nls}
    \begin{cases}
        \partial_t u = i (\partial_{xx} u \pm \lvert u \rvert^2 u)\\
        u(x,0)=u_0(x).
    \end{cases}
    \end{equation}
    That is, we show that if $u_0\in L^2 (\R)$ there exists a positive time $T$, which will depend on the initial data, such that there exists a unique solution $u$ of \eqref{IVP_nls}, $u \in \mathcal C ([0,T]; L^2(\R))$. For that purpose, firstly we state and prove a lemma due to C. Fefferman \cite{F70}.
	\begin{lemma}
		Given $u \in L^2(\mathbb{R})$,
		\begin{align*}
			||e^{it\partial_{xx}} u||_{L^6_{t,x}(\mathbb{R}^2)} \lesssim ||u||_{L^2(\mathbb{R})}.
		\end{align*}
		\label{lemmaFefferman}
	\end{lemma}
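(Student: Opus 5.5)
The plan is to prove the $L^6_{t,x}$ Strichartz estimate by the classical Fourier-analytic argument: square the solution and estimate $(e^{it\partial_{xx}}u)^2$ in $L^3_{t,x}$ through a change of variables that exposes a smoothing effect. Write $v(x,t)=e^{it\partial_{xx}}u=\int e^{ix\xi-it\xi^2}\hat u(\xi)\,d\xi$. Then $\|v\|_{L^6}^2=\|v^2\|_{L^3}$, and
\[
v(x,t)^2=\iint e^{ix(\xi+\eta)-it(\xi^2+\eta^2)}\hat u(\xi)\hat u(\eta)\,d\xi\,d\eta .
\]
By symmetry we may restrict to $\xi>\eta$, and then change variables $(\xi,\eta)\mapsto(a,b)=(\xi+\eta,\xi^2+\eta^2)$. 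Its Jacobian is $|2(\xi-\eta)|$, and $\xi-\eta=\sqrt{2b-a^2}$. This turns $v^2$ into the space-time Fourier transform (in $(x,t)$) of the function
\[
F(a,b)=\frac{\hat u(\xi)\hat u(\eta)}{2|\xi-\eta|},
\]
supported in the region $2b>a^2$.

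The first step is the Hausdorff--Young inequality in $\R^2$ with exponents $3$ and $3/2$:
\[
\|v^2\|_{L^3_{t,x}}\lesssim \|F\|_{L^{3/2}_{a,b}} .
\]
We then change variables back:
\[
\|F\|_{L^{3/2}}^{3/2}=\iint \frac{|\hat u(\xi)|^{3/2}|\hat u(\eta)|^{3/2}}{|\xi-\eta|^{3/2}}\,|\xi-\eta|\,d\xi\,d\eta=\iint \frac{|\hat u(\xi)|^{3/2}|\hat u(\eta)|^{3/2}}{|\xi-\eta|^{1/2}}\,d\xi\,d\eta .
\]

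The second step is the Hardy--Littlewood--Sobolev inequality in dimension one with kernel $|\xi-\eta|^{-1/2}$. Setting $f=|\hat u|^{3/2}$, HLS with $1/p+1/q+1/2=2$ and $p=q=4/3$ gives the bound $\|f\|_{L^{4/3}}^2=\|\hat u\|_{L^2}^{3}$. Hence
\[
\|v^2\|_{L^3}\lesssim \|\hat u\|_{L^2}^2 ,
\]
so by Plancherel $\|v\|_{L^6}\lesssim\|u\|_{L^2}$.

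The argument should first be run for Schwartz $u$, so that the Fourier manipulations are justified, and then extended by density. The main point requiring care is the change of variables. It must be checked that the map is injective on $\{\xi>\eta\}$, that the Jacobian power lands exactly on the HLS-admissible exponent $1/2$ (which is where the number $6$ is forced), and that Hausdorff--Young is applied with the right normalization. As an alternative route, one could instead invoke the dispersive bound $\|e^{it\partial_{xx}}u\|_{L^\infty}\lesssim |t|^{-1/2}\|u\|_{L^1}$ together with a $TT^*$ argument and HLS in time. However, the bilinear Fourier argument above is self-contained, so I would follow it.
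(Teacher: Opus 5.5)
Your proposal is correct and follows essentially the same route as the paper. Both arguments express the square of the solution as a space-time Fourier transform via a change of variables with Jacobian $2|\xi-\eta|$, apply Hausdorff--Young from $L^{3/2}$ to $L^3$, and close with Hardy--Littlewood--Sobolev for the kernel $|\xi-\eta|^{-1/2}$. The differences are cosmetic: the paper works with $|e^{it\partial_{xx}}u|^2$, using the map $(\xi_1,\xi_2)\mapsto(\xi_1^2-\xi_2^2,\xi_1-\xi_2)$, which is injective off the diagonal so no symmetrization is needed, and it finishes with H\"older plus the $L^{4/3}\to L^4$ convolution form of HLS. You instead square $v$ itself, restrict to $\xi>\eta$, and use the bilinear form of HLS directly.
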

	\begin{proof}
    Let $u \in L^2(\mathbb{R})$. We can write, through inverse Fourier transformation,
    \begin{align*}
        |e^{it\partial_{xx}}u(x,t)|^2 = \int_{\mathbb{R}}\int_{\mathbb{R}} e^{it(\xi_1^2 - \xi_2^2) + ix(\xi_1 - \xi_2)} \hat{u}(\xi_1) \overline{\hat{u}(\xi_2)} d\xi_1 d\xi_2.
    \end{align*}
    Now, do the change of variables
    \begin{equation}
        \left\{\begin{aligned}
			&\eta_1 = \xi_1^2 - \xi_2^2 \nonumber \\
			&\eta_2 = \xi_1 - \xi_2
		\end{aligned}\right.
  \end{equation}
  so that
  \begin{equation}
  \left\{\begin{aligned}
			&\xi_1 = \frac{1}{2} (\frac{\eta_1}{\eta_2} + \eta_2) \nonumber \\
			&\xi_2 = \frac{1}{2} (\frac{\eta_1}{\eta_2} - \eta_2).
		\end{aligned}\right.
    \end{equation}
    The corresponding diffeomorphism yields the following Jacobian:
    \begin{align*}
        \mathcal{J} = \left|\begin{matrix}
\frac{1}{2\eta_2} & \frac{1}{2}[1 - \frac{\eta_1}{\eta_2^2}] \\
 \frac{1}{2\eta_2} & -\frac{1}{2}[1 + \frac{\eta_1}{\eta_2^2}]
\end{matrix}\right| = -\frac{1}{2 \eta_2} = -\frac{1}{2(\xi_1 - \xi_2)}.
    \end{align*}
    Therefore, if we denote
    \begin{align*}
        F(\eta_1,\eta_2) = \frac{\hat{u}(\xi_1) \overline{\hat{u}}(\xi_2)}{2 |\xi_1 - \xi_2|},
    \end{align*}
    we have that
    \begin{align*}
        |e^{it\partial_{xx}}u(x,t)|^2 = \int_{\mathbb{R}} \int_{\mathbb{R}} e^{i t \eta_1} e^{i x \eta_2} F(\eta_1,\eta_2) d\eta_1 d\eta_2 = \mathcal{F}_{\eta_1,\eta_2}[F](x,t).
    \end{align*}
    Therefore, by reordering exponents, and applying Plancherel, we obtain
    \begin{align*}
        || e^{i t \partial_{xx}} u ||_{L^6_{t,x}(\mathbb{R}^2)} = || \hspace{1mm}|e^{i t \partial_{xx}} u|^2 \hspace{1mm} ||_{L^3_{t,x}(\{\mathbb{R}^2)}^{1/2} = || \hat{F} ||^{1/2}_{L^3_{t,x}(\mathbb{R}^2)}.
    \end{align*}
    We apply now Young-Hausdorff inequality, i.e., in our case, the fact that the Fourier transform is continuous from $L^p_{t,x}(\mathbb{R}^2)$ to  $L^{p'}_{t,x}(\mathbb{R}^2)$ for any $p \in [1,2]$. Thus, undoing the previous change of variables,
    \begin{align*}
       || e^{i t \partial_{xx}} u ||_{L^6_{t,x}(\mathbb{R}^2)} &\lesssim ||F||^{1/2}_{L^{3/2}_{\eta_1,\eta_2}(\mathbb{R}^2)}  = \left( \int_{\mathbb{R}} \int_{\mathbb{R}} |\hat{u}(\xi_1) \overline{\hat{u}(\xi_2)}|^{3/2} \frac{d \xi_1 d\xi_2}{(2|\xi_1 - \xi_2|)^{\frac{3}{2}-1}} \right)^{1/3} \nonumber\\
       & =\left( \int_{\mathbb{R}} \left[ \int_{\mathbb{R}} |\hat{u}(\xi_2)|^{3/2} \frac{d \xi_2}{\sqrt{2|\xi_1 - \xi_2|}} \right] |\hat{u}(\xi_1)|^{3/2} d\xi_1 \right)^{1/3}\\
       &= \left( \int_{\mathbb{R}} Iu(\xi_1) |\hat{u}(\xi_1)|^{3/2} d\xi_1 \right)^{1/3},
    \end{align*}
    where
    \begin{align*}
        Iu(\xi_1) = \int_{\mathbb{R}} |\hat{u}(\xi_2)|^{3/2} \frac{d \xi_2}{\sqrt{2|\xi_1 - \xi_2|}} = [|\hat{u}|^{3/2} \ast (\frac{1}{\sqrt{2\cdot}}) ] (\xi_1).
    \end{align*}
    Using a Hölder inequality,
    \begin{align*}
        || e^{i t \partial_{xx}} u ||_{L^6_{t,x}(\mathbb{R}^2)} \lesssim ||\hat{u}^{3/2}||^{1/3}_{L^{4/3}(\mathbb{R}} ||Iu||^{1/3}_{L^4(\mathbb{R})}.
    \end{align*}
    For the second factor in the RHS, we apply now Hardy-Littlewood-Sobolev inequality. This inequality, at the end, is an application of the weak Young inequality \cite{GRA14}. Note that, for any $g \in L^r(\mathbb{R}^d)$, the inequality
    \begin{align*}
        || \int_{\mathbb{R}} \frac{g(y)}{|x-y|^{\alpha}} dx||_{L^p(\mathbb{R}^d)} \lesssim || |y|^{-\alpha} ||_{L^{q,\infty}(\mathbb{R}^d)} ||g||_{L^r((\mathbb{R}^d)}
    \end{align*}
    follows from weak Young inequality, given that $1+\frac{1}{p} = \frac{1}{q} + \frac{1}{r}$, and if $\frac{1}{q} = \frac{\alpha}{d}$, in order to ensure that
    \begin{align*}
        || |y|^{-\alpha} ||_{L^{q,\infty}(\mathbb{R}^d)} = \sup_{\lambda > 0} \lambda |\{ \frac{1}{|y|^{\alpha}} > \lambda \}|^{1/q} < \infty.
    \end{align*}
    In our case, $\alpha = 1/2$, $p=4$ and $d=1$, so $r = 4/3$. All in all,
    \begin{align*}
        ||\hat{u}^{3/2}||^{1/3}_{L^{4/3}(\mathbb{R})} ||Iu||^{1/3}_{L^4(\mathbb{R})} &\lesssim \left[ \left(\int_{\mathbb{R}} |\hat{u}(\xi_1)|^{\frac{3}{2}\frac{4}{3}} d\xi_1\right)^{3/4} || \hat{u}^{3/2} ||_{L^{4/3}(\mathbb{R}))} \right]^{1/3} \nonumber\\
        &= ||\hat{u}||_{L^2(\mathbb{R})}^{1/2} ||\hat{u}||_{L^2(\mathbb{R})}^{1/2} = ||u||_{L^2(\mathbb{R})},
    \end{align*}
	where we conclude applying Plancherel theorem.
	\end{proof}

 Moreover, we recall that the Schrödinger propagator is unitary in $L^2$, that is, the following Lemma holds:
 \begin{lemma}
     Let $u \in L^2(\R)$, then for all $t \in \R$
     \begin{equation}\label{prop_unitary}
     \big \lVert e^{it \partial_{xx}} u \big \lVert_{L^2(\R)} = \lVert u \rVert_{L^2(\R)}
     \end{equation}
 \end{lemma}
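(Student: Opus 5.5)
The identity follows from Plancherel's theorem, because on the Fourier side the propagator $e^{it\partial_{xx}}$ acts as multiplication by a unimodular function. I first prove it for $u$ in the Schwartz class and then extend it to all of $L^2(\R)$ by density.

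\emph{Step 1: Schwartz data.} For $u\in\mathcal S(\R)$, the formula \eqref{homsol} says that $e^{it\partial_{xx}}u$ is the inverse Fourier transform of $e^{-it|\xi|^2}\hat u(\xi)$. Equivalently,
\[
\mathcal F\big(e^{it\partial_{xx}}u\big)(\xi)=e^{-it|\xi|^2}\hat u(\xi),
\]
up to the normalization constant of the Fourier transform. The same constant appears on both sides of Plancherel's identity, so it does not matter. Since $|e^{-it|\xi|^2}|=1$ for real $t$ and $\xi$, Plancherel gives
\[
\|e^{it\partial_{xx}}u\|_{L^2}=c\,\|e^{-it|\xi|^2}\hat u\|_{L^2}=c\,\|\hat u\|_{L^2}=\|u\|_{L^2}.
\]

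\emph{Step 2: general $u\in L^2(\R)$.} For general $L^2$ data I take the Fourier multiplier expression as the definition of $e^{it\partial_{xx}}u$, since the integral in \eqref{homsol} need not converge absolutely. Step 1 shows that $e^{it\partial_{xx}}$ is an isometry on the dense subspace $\mathcal S(\R)$. It therefore extends uniquely to an isometry of $L^2(\R)$, and this extension coincides with the multiplier definition because Plancherel's theorem itself extends continuously to $L^2$.

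The only point requiring any care is this interpretation for non-Schwartz data; the rest is immediate. As a consistency check, $e^{-it\partial_{xx}}$ is the inverse of $e^{it\partial_{xx}}$, so the operator is in fact unitary and not merely an isometry.
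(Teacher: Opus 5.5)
Your Plancherel argument is correct, and it is essentially the paper's route: the paper states this lemma without proof, recalling unitarity as a standard fact, and the implicit justification is exactly the Fourier-multiplier representation \eqref{homsol} combined with Plancherel that you spell out. Your Step 2 is harmless but redundant: once $e^{it\partial_{xx}}$ is defined on $L^2$ by the unimodular multiplier $e^{-it|\xi|^2}$, Plancherel on $L^2$ gives the identity directly without a density argument.
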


We define the map 
\[
\Gamma (u) \coloneqq e^{it \partial_{xx}} u_0 \pm i \int_0^t e^{i (t-\tau) \partial_{xx}} \big (u \lvert u \rvert^2)(\tau)\big ) d\tau.
\]
Recalling the Duhamel formula \eqref{linear&duhamel} the solution $u$ to the IVP \eqref{IVP_nls} is a fixed point of $\Gamma.$ Then, we want to find a complete metric space $X$ such that $\Gamma$ is a contraction on that space, in order to get the local well-posedness by a fixed point argument.
We consider the metric space
\[
X_{T,M} \coloneqq \big \{ u \text{ measurable in } \R^2 \colon \sup_{t \le T} \lVert u \rVert_{L^2_x} + \Big ( \int_0^T \int_{\R} \lvert u (x,t) \rvert^6 dx \, dt \Big)^{\frac 16} \le M \Big \}
\]
where $T,M > 0$ will be chosen later. We endow it with the norm given by
\[
\lVert v \rVert_{X} \coloneqq \lVert v \rVert_{L^\infty ([0,T]; L^2 (\R))} + \lVert v \rVert_{L^6([0,T];L^6(\R))}
\]
which is well defined for the elements in $X_{T,M}$. We claim that there exist $T,M >0$ such that $\Gamma(u)$ maps $X_{T,M}$ in itself; we estimate, by Minkowski inequality and \eqref{prop_unitary}
\[
\begin{split}
    \lVert \Gamma(u) \rVert_{L^\infty ([0,T]; L^2 (\R))} & \le \lVert u_0 \rVert_{L^2_x} + \int_0^T \big \lVert e^{i (t-\tau)\partial_{xx}} (\lvert u \rvert^2 u)(\tau) d\tau \big \rVert_{L^\infty ([0,T]; L^2 (\R))} d\tau \\
    & \le \lVert u_0 \rVert_{L^2_x} + \int_0^T \big \lVert \lvert u \rvert^2 u(\tau)\big \rVert _{L^2_x} d\tau \\
    & \le \lVert u_0 \rVert_{L^2_x} + \int_0^T \lVert u \rVert_{L^6_x}^3 d\tau.
\end{split}
\]
Then, by Hölder inequality and estimate \eqref{lemmaFefferman} we get that
\[
\lVert \Gamma(u) \rVert_{L^\infty ([0,T]; L^2 (\R))}  \le \lVert u_0 \rVert_{L^2_x} + T^{\frac12} M^3.
\]
We observe that in the same way we can also estimate the $L^6$-norm of $\Gamma$ obtaining the same estimate. Then, if we choose $T,M$ such that 
\[
\lVert u_0 \rVert_{L^2} \le \frac M4 \quad \text{and} \quad T^{\frac12} M^2 \le \frac 14
\]
we get the claim. We show now that the map is a contraction on $X_{T,M}$. Let $u,v$ be two solutions of \eqref{IVP_nls}. We estimate, again using Minkowski inequality, \eqref{prop_unitary} and lemma \eqref{lemmaFefferman}
\begin{equation}\label{eq:contraction}
\begin{split}
   \big \lVert \Gamma(u) - \Gamma(v) \big \rVert _X \le 2 \int_0^T \big \lVert (\lvert u \rvert^2 u - \lvert v \rvert^2 v)(\tau) \big \lVert_{L^2_x} d\tau.
\end{split}
\end{equation}
We now observe that for all $u,v \in \mathbb C$ 
\[
\lvert u \rvert^2 u - \lvert v \rvert ^2 v = (u-v)(\lvert u \rvert^2 + \lvert v \rvert^2) + (\bar u - \bar v) uv.
\]
Then, using the above equality and the Hölder inequality 
\[
\begin{split}
    \eqref{eq:contraction} & \le 2 \int_0^T \big \lVert \lvert u-v\rvert (\lvert u \rvert^2 + \lvert v \rvert^2 + \lvert uv \rvert) \big \rVert_{L^2_x} d\tau\\
    &  \le 2 \int_0^T \lVert u- v\rVert_{L^6_x} \big (\lVert u \rVert_{L^6_x}^2 + \lVert v \rVert_{L^6_x}^2 + \lVert u \rVert_{L^6_x}\lVert v \rVert_{L^6_x} \big ) d\tau\\
    & \le 6 M^2 T^\frac12 \lVert u-v\rVert_X.
\end{split}
\]
Summing up, if we choose $M,T>0$ such that
\[
\begin{cases}
M \ge 4 \lVert u_0 \rVert_{L^2_x},\\
T^\frac12< \frac 1{6M^2}
\end{cases}
\]
we have that for every $u_0 \in L^2(\R)$ there exists a time $T >0$ and a solution $u$ to \eqref{IVP_nls}, $u \in C ([0,T];L^2(\R) \cap L^6([0,T]; L^6(\R))$. Moreover, we observe that the time of existence of $u$ is only depending on the $L^2$ norm of the initial data $u_0$. However, by \eqref{enconservation1} we have that the $L^2$ norm of the solution is constant in time so we can repeat the argument above in order to construct a global solution to \eqref{IVP_nls}, both for the focusing and defocusing case.

   In the following sections, we are going only to deal with the Schrödinger equation on the entire real line, mostly in the focusing case. However, it is worth to observe that the behaviour of the Schrödinger equation on compact manifolds, in particular on rational or irrational tori, can be different with respect to the one described here. For example, the estimate \eqref{lemmaFefferman} is known to fail on rational tori. We present a counterexample in the \Cref{fail_L6_tori}, following the approach from \cite{Bou93}.

    \begin{lemma}\label{fail_L6_tori}
        There exists $u \in L^2(\mathbb{T})$ such that
        \begin{align*}
            ||e^{it\partial_{xx}} u||_{{L^6}_{t,x}(\mathbb{T}^2)} \lesssim ||u||_{L^2(\mathbb{T})}
        \end{align*}
        does not hold.
    \end{lemma}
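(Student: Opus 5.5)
The plan is to test the estimate on the Dirichlet-type datum $u_N(x)=\sum_{|n|\le N} e^{inx}$ on $\mathbb{T}=\R/2\pi\Z$. Its norm is $\|u_N\|_{L^2(\mathbb{T})}\sim N^{1/2}$. Since the torus is rational, $e^{it\partial_{xx}}u_N(x)=\sum_{|n|\le N}e^{i(nx-n^2t)}$ is $2\pi$-periodic in $t$, so the $L^6_{t,x}(\mathbb{T}^2)$ norm is well defined. The aim is a lower bound of the form
\[
\|e^{it\partial_{xx}}u_N\|_{L^6(\mathbb{T}^2)}\gtrsim N^{1/2}(\log N)^{1/6}.
\]
This contradicts the estimate uniformly in $N$. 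Gluing the $u_N$ into a single $L^2$ function is possible, but the uniform failure is already what the statement means.

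The main step is to expand the sixth power. Write $F=e^{it\partial_{xx}}u_N$, so $\|F\|_{L^6}^6=\|F^3\|_{L^2}^2$. By Parseval in $(x,t)$,
\[
\|F\|_{L^6}^6=(2\pi)^2\sum_{(m,k)}r_N(m,k)^2,
\]
where $r_N(m,k)$ counts triples $(n_1,n_2,n_3)\in[-N,N]^3$ with $n_1+n_2+n_3=m$ and $n_1^2+n_2^2+n_3^2=k$.

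By Cauchy–Schwarz, $\sum r_N^2\ge (\sum r_N)^2/\#\{(m,k)\}$. This only gives $N^6/N^3=N^3$, which is exactly the scale $\|u_N\|_{L^2}^6$, so it is not enough. The logarithmic gain must come from arithmetic. Using $3k-m^2=(n_1-n_2)^2+(n_2-n_3)^2+(n_3-n_1)^2$, one reduces to counting representations of integers $\ell\lesssim N^2$ as values of a positive definite binary quadratic form, namely $a^2+ab+b^2$ with $a=n_1-n_2$, $b=n_2-n_3$. The average number of representations is bounded, but the second moment satisfies
\[
\sum_{\ell\le X}r(\ell)^2\sim cX\log X.
\]
This is a classical fact, proved from the multiplicativity of $r$ through the Eisenstein-integer divisor function (Ramanujan-type estimate). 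Summing over the roughly $N$ admissible values of $m$, each contributing about $N^2\log N$, gives $\|F\|_{L^6}^6\gtrsim N^3\log N$. This is the bound stated above.

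The hard step is this divisor-sum second moment. If one prefers to avoid it, the fallback is a major-arc argument. Near a rational time $t=2\pi a/q$ with $q\sim N$ and $|t-2\pi a/q|\lesssim N^{-2}$, Gauss sum evaluation gives $|F(x,t)|\sim N/\sqrt q$ on $x$-intervals of length about $1/N$ around the points $2\pi b/q$. This covers a set of $x$-measure about $1$. Integrating $|F|^6$ over these regions gives roughly $N^3q^{-3}\cdot N^{-2}$ per fraction. Summing $\sum_{q\le N}\varphi(q)\,N^3q^{-3}N^{-2}\cdot N$ produces a harmonic sum $\sum_{q\le N} 1/q\sim\log N$. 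This again yields $N^3\log N$ and gives the same blow-up.
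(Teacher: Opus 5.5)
Your main argument is correct, but it differs from the paper's at the arithmetic step.

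The paper expands $|F|^2F$. Its resonance factors over $\mathbb{Z}$ as $l-m^2=2(j_1-j_2)(j_1-m)$, which turns the count into a divisor problem for the indefinite form $pq$. You expand $F^3$ instead. This gives the definite Eisenstein norm form $3k-m^2=2(a^2+ab+b^2)$, which factors only over $\mathbb{Z}[\omega]$, so you need the classical but less elementary second moment $\sum_{\ell\le X}r(\ell)^2\sim cX\log X$.

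In exchange, definiteness makes the box constraint free. Since $a^2+ab+b^2\ge\frac12(a^2+b^2)$, every representation of $\ell\le cN^2$ already has $|a|,|b|\lesssim N$, so the full-ellipse count is a genuine lower bound. For the indefinite form the box really matters, because $pq=j$ has solutions with $|p|\gg N$.

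This is exactly where the paper's text slips. It passes from $|r_j|^2\ge|r_j|$ to the unrestricted sum $\sum_{n\le N^2}d(n)$. But $\sum_m\sum_j|r_j|$ just counts the $(2N+1)^3$ triples, so the first moment gives only $N^3$, as your Cauchy--Schwarz remark predicts. Done correctly, the paper's route also needs a second moment: the multiplicative energy $\#\{pq=p'q':\ |p|,|q|,|p'|,|q'|\lesssim N\}\asymp N^2\log N$. This is the analogue over $\mathbb{Z}$ of your Eisenstein estimate, and it has an elementary proof.

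Two minor repairs are needed.

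First, for fixed $m$ the admissible pairs must satisfy $a-b\equiv m\pmod 3$, which you do not mention. Restrict to $m\equiv 0\pmod 3$ and write $a=b+3c$. Then $a^2+ab+b^2=3\bigl((b+c)^2+(b+c)c+c^2\bigr)$, so your second-moment bound applies unchanged.

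Second, the bookkeeping in your major-arc fallback is off as written. The sum $\sum_q\varphi(q)N^3q^{-3}N^{-2}N$ is $O(N^2)$ and has no logarithm. Each fraction $a/q$ should contribute about $q\cdot(N/\sqrt q)^6\cdot N^{-1}\cdot N^{-2}=N^3q^{-2}$, where the factor $q$ counts the points $2\pi b/q$. The sum should run over all odd $q$ up to a fixed small power $N^{\alpha}$, not over $q\sim N$; in that range the Gauss-sum main term $\approx 2N/\sqrt q$ dominates the error from the incomplete sums. This yields $N^3\sum_{q}\varphi(q)q^{-2}\gtrsim N^3\log N$.
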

    \begin{proof}
        Consider $N \in \mathbb{N}$ and a $u_0 \in L^2(\mathbb{T})$ such that $\supp \hat{u} \subset B(0,N)$, defined by
        \begin{equation}
        \hat{u}_0(k) = 
		\left\{\begin{aligned}
			&1 \text{ if } |k| \leq N, \nonumber \\
			&0 \text{ if } |k| > N.
		\end{aligned}\right.
	\end{equation}
    We define
    \begin{align*}
         u = e^{i t \partial_{xx}} u_0 = \sum_{|k| \leq N} e^{ikx}.
    \end{align*}
    Denote $\underline{j} = (j_1,j_2,j_3)$ and $I_N = \{(j_1,j_2,j_3) : |j_i| \leq N, \hspace{1mm} i\in \{1,2,3\}\}$. We have that
    \begin{align*}
        u \overline{u}u = \sum_{\underline{j} \in I_N} e^{it(j_1^2 - j_2^2 + k_3^2) + i x (j_1 - j_2 + j_3)} = \sum_{m = -N}^{2N} e^{imx} \left[\sum_{j_1 - j_2 + j_3 = m} e^{it(j_1^2 - j_2^2 + j_3^2)}\right].
    \end{align*}
    Thus,
    \begin{align*}
        &\int_0^{2\pi} |u \overline{u} u|^2 dx = \int_0^{2\pi} \left|\sum_m e^{imx} \left[ \sum_{j_1 - j_2 + j_3 = m} e^{it(j_1^2 - j_2^2 + j_3^2)} \right] \right|^2 dx \nonumber\\
        &= \sum_m \left| \sum_{j_1 - j_2 + j_3 = m} e^{it(j_1^2 - j_2^2 + j_3^2)} \right|^2,
    \end{align*}
    where last equality follows from orthogonality. For a fixed $m$, isolate $j_3 = m - j_1 + j_2$. Then, denoting $l = j_1^2 - j_2^2 + j_3^2$, we obtain
    \begin{align*}
        &l = j_1^2 - j_2^2 + (m - j_1 + j_2)^2 = j_1^2 - j_2^2 + m^2 + j_2^2 + j_1^2 - 2j_2j_1 + 2mj_2 - 2mj_1 \nonumber\\
        &\Longrightarrow j = l - m^2 = 2j_1^2 - 2j_2j_1 + 2mj_2 - 2mj_1 = 2[j_1(j_1 - j_2) - m(j_1 - j_2)] \nonumber\\
        & = 2(j_1 - j_2)(j_1 - m).
    \end{align*}
    Then,
    \begin{align*}
        \int_0^{2\pi} |u \overline{u} u|^2 dx = \sum_m \left| \sum_{j_1 - j_2 + j_3 = m} e^{2it (j_1-m)(j_1-j_2)} \right|^2.
    \end{align*}
    Therefore, integrating also with respect to time,
    \begin{align}
        &||u||^6_{L^6_{t,x}(\mathbb{T}^2)} = \int_0^{2\pi} \int_0^{2\pi} |u \overline{u} u|^2 dx dt = \sum_m \int_0^{2\pi} \left| \sum_{j \in J_N} \sum_{2(j_1-m)(j_1-j_2)=j} e^{ijt} \right|^2 dt \nonumber\\
        &=  \sum_m \sum_{j \in J_N} |r_j|^2,
        \label{CEBouInt1}
    \end{align}
    where we used orthogonality in the third equality, and we denoted $r_j = \{ (j_1,j_2) : 2(j_1 - m)(j_1 - j_2) = j \}$ and $J_N$ the range in which the index $j$ moves. To be formally more clear, note that denoting $q = j_1 - j_2$ and $p = j_1-m$, then 
    \begin{align*}
        |r_j| = |\{(p+m,p-q+m) : 2pq = j \}| = |\{(p,q) : 2pq = j \}|.
    \end{align*}
    Now, note that counting the pairs $(p,q)$ such that $2pq = j$ is the same as counting the number of $p$ such that there exists $q \in \mathbb{Z}$ with $2pq = j$. Then, $|r_j|$ is the number of divisors of $j/2$ given that $j \in 2\mathbb{Z}$, i.e. $d(j/2)$. Then $|r_j| = d(j) - 1$, where we excluded from the counting the divisor $2$. Then, proceeding with \eqref{CEBouInt1}, using the inequality $|r_j|^2 \geq |r_j|$, and
    without clarifying how $J_N$ is, we have that
    \begin{align}
        ||u||^6_{L^6_{t,x}(\mathbb{T}^2)} \geq \sum_{m=-N}^{2 N} \sum_{j \in J_N}  |r_j| \approx N \sum_{j \in J_N} (d(j) - 1).
        \label{CEBouInt2}
    \end{align}
    We need then to elucidate $J_N$. We claim, and prove later, that
    \begin{align}
        \lim_{ N \rightarrow \infty} \frac{|J_N|}{N^2} = \frac{6}{\pi^2}.
        \label{claimCoprimes}
    \end{align}
    In particular, this implies that, for large $N$, $|J_N| \approx N^2$. Thus, from \eqref{CEBouInt2}, we have that
    \begin{align}
        ||u||^6_{L^6_{t,x}(\mathbb{T}^2)} \gtrsim N \sum_{|j| \leq N^2} d(j) - N^3.
        \label{CEBouInt3}
    \end{align}
    We invoke a result from the theorem 320 in \cite{HW08}, which states that, for any $x \geq 1$,
    \begin{align*}
        \sum_{n \leq x} d(n) = x\log(x) + (2\gamma - 1)x + \mathcal{O}(\sqrt{x}),
    \end{align*}
    where $\gamma$ is Euler's constant. Then, applying \eqref{claimCoprimes} in \eqref{CEBouInt3},
    \begin{align*}
        ||u||^6_{L^6_{t,x}(\mathbb{T}^2)}  \gtrsim N(N^2 \log(N^2)) - N^3 = N^3( 2\log(N) - 1),
    \end{align*}
    which for $N$ large can be written as follows
    \begin{align*}
        ||u||^6_{L^6_{t,x}(\mathbb{T}^2)}  \gtrsim  N^3 \log(N) = (N^{1/2})^6 \log(N). 
    \end{align*}
    Therefore, since $||u_0||_{L^2_x(\mathbb{T})} = N^{1/2}$, we have just proved that
    \begin{align}
        ||u||^6_{L^6_{t,x}(\mathbb{T}^2)} \geq ||u_0||_{L^2_x(\mathbb{T})}^6 \log(N)
        \label{eq:IntermittencyL2}
    \end{align}
    for $N$ sufficiently large. To finish the proof, we show now the claim \eqref{claimCoprimes}:
    \\
    We give a sketch of the proof. In our context, following analogous ideas from \cite{Bou93}, in order to consider $|J_N|$ it suffices to take $|\Tilde{J}_N|$, where
    \begin{align*}
        \Tilde{J}_N = \{q/p  : p, q \in \mathbb{Z}, (p,q) = 1, p \leq N\}
    \end{align*}
    Thus, it suffices to prove that $\{|\Tilde{J}_N|\}_N$ is an equivalent sequence to $\{\frac{6}{\pi^2}N^2\}_N$. Let's formulate this in probabilistic and heuristic terms. In other words and more concretely, due to the ordinary Laplace rule, we want to count the probability that, given $(p,q) \in \mathbb{N}^2$, they are coprime. Consider certain $p \in \mathbb{N}$. Now take any prime $l$ under $p$. Note that $1/l$ is the proportion of multipliers of $l$ in all $[0,\infty)$. From this idea it can be proved that there is a probability of $\frac{1}{l}$ that $p$ has $l$ in its primes factorization. Do the same with another $q \in \mathbb{N}$. Then, the probability of being $l$ in both factorizations is $1/l^2$. This yields the probability of the complement of the event of interest, which means that the probability of that $l$ is not a common factor of both integers $p$ and $q$ is $1 - \frac{1}{l^2}$. 
    \\
    All in all, if we multiply such an expression over all primer numbers, we get the desired value. Luckily, this is the Euler product identity, so
    \begin{align*}
        \lim_{N \rightarrow \infty} \frac{|J_N|}{N^2} = \prod_{l \text{ prime}} \left(1 - \frac{1}{l^2}\right) = \frac{1}{\sum_{n=1}^{\infty} \frac{1}{n^2}} = \frac{6}{\pi^2}.
    \end{align*}
    \end{proof}
The last result, and in particular equality \eqref{eq:IntermittencyL2}, is an analytical expression of the phenomenon of intermittency (see chapter 8 of \cite{F95}). However, the estimate holds if one asks the function for more regularity, as shown in the Lemma below. The proof of it, as well as for the local well-posedness in our periodic setting and for any other dimension, can be found in the seminal paper \cite{Bou93}.
    
    \begin{lemma}
        Let $\varepsilon > 0$ and $f \in H^{\varepsilon}(\mathbb{T})$. Then,
        \begin{align*}
            ||e^{it\partial_{xx}} f||_{L^6_{t,x}(\mathbb{T}^2)} \lesssim ||f||_{H^{\varepsilon}(\mathbb{T})}.
        \end{align*}
    \end{lemma}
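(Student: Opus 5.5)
The plan is to prove a frequency-localized estimate with an $N^{\varepsilon}$ loss and then sum over dyadic blocks. Concretely, I will first show that for every $N\ge 1$ and every $f$ with $\supp \hat f\subset\{|k|\le N\}$,
\[
\|e^{it\partial_{xx}}f\|_{L^6_{t,x}(\mathbb{T}^2)}\lesssim_{\varepsilon} N^{\varepsilon}\|f\|_{L^2(\mathbb{T})} .
\]
Granting this, decompose $f=\sum_{N \text{ dyadic}}P_Nf$ with Littlewood--Paley projections $P_N$ onto $|k|\sim N$. Minkowski gives $\|e^{it\partial_{xx}}f\|_{L^6}\le\sum_N\|e^{it\partial_{xx}}P_Nf\|_{L^6}$. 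Applying the localized bound with $\varepsilon/2$ gives a bound by $\sum_N N^{\varepsilon/2}\|P_Nf\|_{L^2}$. Cauchy--Schwarz in $N$ then gives
\[
\Big(\sum_N N^{-\varepsilon}\Big)^{1/2}\Big(\sum_N N^{2\varepsilon}\|P_Nf\|_{L^2}^2\Big)^{1/2}\lesssim\|f\|_{H^\varepsilon}.
\]

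For the localized bound, I write $u=e^{it\partial_{xx}}f=\sum_{|j|\le N}a_je^{i(jx-j^2t)}$ and use $|u|^6=|u^3|^2$. I use $u^3$ rather than $u\bar u u$, since it makes the resonance count cleaner. Expanding,
\[
u^3=\sum_{m,l}e^{i(mx-lt)}\sum_{(j_1,j_2,j_3)\in S(m,l)}a_{j_1}a_{j_2}a_{j_3},
\]
where
\[
S(m,l)=\{|j_i|\le N:\ j_1+j_2+j_3=m,\ j_1^2+j_2^2+j_3^2=l\}.
\]
Since $m,l$ are integers and $t$ ranges over a full period $[0,2\pi]$, Parseval in $(x,t)$ applies exactly as in the proof of \Cref{fail_L6_tori}. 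Then Cauchy--Schwarz on each inner sum gives
\[
\|u\|_{L^6}^6\lesssim\Big(\sup_{m,l}|S(m,l)|\Big)\sum_{m,l}\sum_{S(m,l)}|a_{j_1}a_{j_2}a_{j_3}|^2=\Big(\sup_{m,l}|S(m,l)|\Big)\|f\|_{L^2}^6 .
\]
So everything reduces to the counting bound $|S(m,l)|\lesssim_\varepsilon N^{\varepsilon}$.

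The counting step is the main obstacle. The trivial count, fixing $j_3$ and solving for the rest, gives only $O(N)$, which is a polynomial loss and is useless. Instead, use the identity
\[
(j_1-j_2)^2+(j_1-j_3)^2+(j_2-j_3)^2=3l-m^2 .
\]
Setting $a=j_1-j_2$ and $b=j_1+j_2-2j_3$, the left side equals $(3a^2+b^2)/2$, so $3a^2+b^2=2(3l-m^2)=:n$ with $0\le n\lesssim N^2$. The pair $(a,b)$ together with $m$ determines $(j_1,j_2,j_3)$ uniquely. Hence $|S(m,l)|$ is at most the number of representations of $n$ by the binary form $3a^2+b^2$.

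If $n=0$, the triple is forced, $j_1=j_2=j_3=m/3$, so there is at most one point. If $n\ge1$, the number of representations is bounded by a constant times the number of ideals of norm $n$ in the ring of integers of $\mathbb{Q}(\sqrt{-3})$. That number is $\le d(n)$, and the classical divisor bound gives $d(n)\lesssim_\varepsilon n^{\varepsilon}\lesssim N^{2\varepsilon}$. If I want to avoid algebraic number theory, I would fall back on multiplying by $4$, writing things as a norm form in the Eisenstein integers, and bounding via the unique factorization count directly. Either way this yields $\|u\|_{L^6}\lesssim N^{\varepsilon/3}\|f\|_{L^2}$, which is the localized bound after renaming $\varepsilon$.
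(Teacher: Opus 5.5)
Your argument is correct and is essentially the proof in \cite{Bou93}, to which the notes defer without giving a proof of their own. As there, you write $|u|^6=|u^3|^2$ and apply Parseval on $\mathbb{T}^2$ and Cauchy--Schwarz, which reduces the frequency-localized bound to counting lattice points on the ellipse $3a^2+b^2=2(3l-m^2)$; the divisor bound controls that count by $N^{\varepsilon}$, and the $H^{\varepsilon}$ statement then follows by dyadic summation exactly as you do it. Your choice of $u^3$ over $u\bar u u$ is the right one: the indefinite form coming from $u\bar u u$ (as in the proof of \Cref{fail_L6_tori}) has $O(N)$ resonant solutions when $l=m^2$, and those would have to be handled separately.
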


\section{Stability of one corner}\label{sec:Stability of one corner}
    We now present some known results concerning the stability of self similar solutions of the BF. In particular, we want to understand if, under some appropriate small regular perturbations the corner, given as initial data for the unperturbed case, still remains at time zero or not. In other words, we wonder whether creation and annihilation of corners is stable or not.
	\\
	Recall that, through the Hasimoto transformation, we went from a solution of the BF \eqref{eq:ode_G_a} into solution of a 1d cubic NLS equation with a wavefunction solution given by $\psi(x,t) = c(x,t) e^{i \int_0^t \tau(s',t)ds'}$, where $(c,\tau)$ are the curvature and torsion of the filament (see \Cref{sct:sectionBF}). For our self-similar solution \eqref{eq:G}, we obtained $c = a$ and $\tau = s/2$ for some $a \geq 0$ constant. Then, our particular solution is now
 \begin{align}
     \psi_a(x,t) = \frac{a}{\sqrt{it}} e^{i s^2/4t}, \quad t > 0,
     \label{SSwavefunction}
 \end{align}
 which is a solution of 
 \begin{equation}\label{eq:Cauchypb_NLSa2t}
     \begin{cases}
			&\psi_t = i(\psi_{ss} + \frac{1}{2} (|\psi|^2 + A(t)) \psi ), \nonumber \\
			&\psi(s,0) = a\delta,
		\end{cases}
 \end{equation}
 where $A(t) = -a^2/t$.
 \begin{remark}
 Ordinary techniques to deal with NLS, such as restriction spaces or Strichartz estimates, are not in general applicable to rough initial data, such as the $\delta$-distributions. Note, for instance, that neither $\psi_a$ in \eqref{SSwavefunction} nor the initial data are in any Sobolev space $H^s(\mathbb{R})$, for any $s \geq 0$. This translates into the fact that, though 1d cubic NLS is a complete integrable system with infinitely many conservation laws, that is not useful in our framework. See \cite{BV082} for a deeper detail of this question. 
 \end{remark}
  \begin{remark}
 As studied in \cite{KPV01} and pointed out before (\Cref{fun_gal_inv}), through Galilean invariance one can prove that solutions of \eqref{eq:cubicNLS_no12} with $A(t) = 0$ and $u(0,x) = a \delta_0$ either does not exist or it is not unique, since the natural candidate \eqref{NLSdeltaanomaly} has no limit when $t \rightarrow 0$. Nevertheless, as we have just said, through Hasimoto transformation we have $A(t) = - a^2/t$ , so we have solution even for $t=0$.
 \end{remark}
 There is, however, a natural energy associated to the family of solutions $\psi_a$. To describe it, let's rewrite the cubic NLS \eqref{eq:Cauchypb_NLSa2t} in selfsimilar variables. In other words, we perform the so called \emph{pseudoconformal transformation}. In both the focusing and defocusing cases (upcoming computations work equally), we can apply a change of variables and get rid of the $\frac12$-factor in front of the nonlinearity (as shown before) to get
 \begin{align}\label{eq:cubicNLS_no12}
   i \partial_t u + u_{ss} \pm \left( |u|^2 - \frac{a^2}{t} \right) u = 0.
 \end{align}
 Now, consider the pseudo-conformal transformation
 \begin{align}
     u(s,t) = \mathcal T v(s,t) = \frac{e^{is^2/4t}}{\sqrt{t}} \overline{v} \Big (\frac{s}{t},\frac{1}{t} \Big ).
     \label{TRANSpconformal}
 \end{align}
 This transforms the NLS into
 \begin{align}
     iv_t + v_{ss} \pm \frac{1}{t} (|v|^2 - a^2) v = 0, \quad t > 0,
     \label{NLSpconformal}
 \end{align}
and
 \begin{align}
     u_a(s,t) = \frac{a}{\sqrt{2it}} e^{i s^2/4t}, \quad t > 0 
     \label{eq:ua}
 \end{align}
 (i.e., the solution $\psi_a$ after a rescaling $(s,t) \mapsto (s/\sqrt{2},t/2)$) is transformed into 
 \[
 v_a=a, \quad t >0,
 \]
 which is a particular solution of \eqref{NLSpconformal}. Therefore, we are interested in the stability of $v_a$. Notice that \eqref{TRANSpconformal} transforms $t=0$ into $t=+\infty$. Therefore to construct a solution $u$ of \eqref{eq:cubicNLS_no12} in the time interval $[0,1]$ is equivalent to find a solution $v$ of \eqref{NLSpconformal} in $[1, +\infty]$. Moreover, the limit of $u(t)$ at $t\rightarrow 0^+$ becomes the limit of $v(t)$ as $t \rightarrow +\infty$. As a consequence we want to study the long time behaviour of $v$. In particular, it will be interesting to know whether the perturbations to $a$ in the last regime, after the pseudo-conformal transformation, scatter or not. This analysis will be divided in two parts: the \emph{existence of a wave operator}, and the \emph{asymptotic completeness}.

\subsection{Existence of Wave Operator} 
\label{section:WaveOperator}

First of all, note that there exists a natural energy for \eqref{NLSpconformal}
\begin{align}
    E(t) = \frac{1}{2} \int |v_s(t)|^2ds \mp \frac{1}{4t} \int (|v(t)|^2 - a^2)^2 ds,
    \label{InfEnergy}
\end{align}
where the second term is a nonlinear potential, and which fulfills 
\begin{align}
    E'(t) &= \int \text{Re} (v_s \overline{v}_{st}) \mp (-\frac{1}{4t^2}) \int (|v(t)|^2 - a^2)^2 ds \mp \frac{1}{2t} \int \text{Re}(v\overline{v}_t) (|v|^2 - a^2) ds \nonumber\\
    &= \pm \frac{1}{4t^2} \int (|v^2| - a^2) ds,
    \label{DerivativeInfEnergy}
\end{align}
where, in the second expression, first and third terms are cancelled by integration by parts and substitution through \eqref{NLSpconformal}.
Considering the defocusing case (i.e.\ the negative sign in the cubic nonlinearity), it would be enough to construct a global solution $v(t)$ of \eqref{NLSpconformal} with $E(t) < \infty$ and $||v(t) -a||_{L^2} < \infty$. This, along with \eqref{InfEnergy} and \eqref{DerivativeInfEnergy}, provides us with orbital stability of the solution $v_a = a$ (see details in \cite{BV081}). This defocusing case is related to a Schrödinger map \eqref{eq:T_thypplane} with $\mathbb{H}^2$ as target. However, for the evolution of vortex filaments we have to deal with $\mathbb{S}^2$ as target, or equivalently with the focusing sign. In this case, energy $E(t)$ from \eqref{InfEnergy} is not useful, since, due to such a negative sign, both terms in $E(t)$ can grow indefinitely keeping the difference of them finite. Therefore, we need other techniques which, as we will see, will work in both cases defocusing and focusing.
\\
We perturb the solution $v_a$, and write the resulting sum as
\begin{align*}
    v= w + a.
\end{align*}
Then, if we want $v$ to be a solution of \eqref{NLSpconformal}, $w$ has to solve
\begin{align}
    iw_t + w_{ss} \pm \frac{1}{t}(|w+a|^2 - a^2)(w+a) = 0.
    \label{PerturbedNLS}
\end{align}
Note that we can rewrite the nonlinear potential as
\begin{align*}
    \frac{1}{t}(|w+a|^2 - a^2)(w+a) &= \frac{1}{t}(|w|^2 + 2\text{Re}(w)a)(w+a) \nonumber\\
    &= \frac{|w|^2w}{t}  + a \frac{2 \text{Re}(w)w}{t} + a\frac{|w|^2}{t} + a^2\frac{2\text{Re}(w)}{t}.
\end{align*}
This means that the linear part in \eqref{PerturbedNLS} is $\pm \frac{a^2}t (w + \bar w)$. We observe that we can cancel the linear factor $\pm \frac{a^2}t w$. Indeed, if we take $z(s,t) = e^{\mp i a^2 \log t} w(s,t)$, it solves 
\[
\begin{split}
  iz_t + z_{ss} \pm \frac1t \big ( z \lvert z\rvert^2 +2a \text{Re}( e^{\pm i a^2 \log t} z)z + a \lvert z \rvert^2 e^{\mp i a^2 \log t} + a^2 e^{\mp 2i a^2 \log t} \bar z \big ) = 0.  
\end{split}
\]
We rewrite it as
\begin{align}
    i z_t + z_{ss} \pm e^{\mp 2 i a^2 \log t} \frac{a^2}{t} \overline{z }\pm \frac{1}{t}F(z,\overline{z}) = 0,
    \label{NLSIntFactor}
\end{align}
where the nonlinear potential, given by quadratic and cubic terms in $z$ and $\overline{z}$, takes the form
\begin{align}
    F(z,\overline{z}) =  |z|^2z + a(2\text{Re}(e^{\pm i a^2 \log t}z)z + |z|^2 e^{\mp i a^2 \log t}).
    \label{NLPotentialEstability}
\end{align}
Given an asymptotic state $u_+$, we look for a solution of \eqref{NLSIntFactor} for $t>1$ such that
\begin{align}
    ||z(t) - e^{it\partial_{ss}} u_+ ||_{L^2_x} \rightarrow 0 \text{ as } t \rightarrow  +\infty.
    \label{AsymptoticConv}
\end{align}
\begin{remark}
Without introducing the integrating factor, we would have to estimate a Duhamel term like 
\begin{align*}
    \int_t^{\infty} e^{i(t-\tau)\partial_{ss}} e^{i \tau \partial_{ss}} u_+ \frac{d\tau}{\tau},
\end{align*}
which is not controllable. 
\end{remark}
This can be done by finding a fixed point for the operator
\begin{align*}
    A(z) = e^{it\partial_{ss}} u_+ \pm i \int_t^{\infty} e^{i(t-\tau)\partial_{ss}} \left(\frac{a^2}{\tau} e^{\mp 2 i a^2 \log(\tau)} \overline{z}(\tau) + \frac{1}{\tau}F(z,\overline{z})\right) d\tau.
\end{align*}
The delicate part is the one given by the linear term $\overline{z}(\tau)$. We sum and substract $\overline{e^{i \tau \partial_{ss}}u_+}$ inside that part of the integral
\begin{align*}
    &||\int_t^{\infty} e^{i(t-\tau)\partial_{ss}} \frac{a^2}{\tau} e^{\mp 2 i a^2 \log(\tau)} \overline{z}(\tau) d\tau||_{L^2_x} \nonumber\\
    &\leq ||\int_t^{\infty} e^{i(t-\tau)\partial_{ss}}\frac{a^2}{\tau} e^{\mp 2 i a^2 \log(\tau)} \left(  \overline{z}(\tau) - \overline{e^{i \tau \partial_{ss}}u_+} \right) d\tau||_{L^2_x} \nonumber\\
    &+ ||\int_t^{\infty} \frac{a^2}{\tau^{1 \pm 2ia^2}}\overline{e^{i \tau \partial_{ss}}u_+}  d\tau||_{L^2_x} \\
    & =(i) + (ii).
\end{align*}
The term $(i)$ can be estimated by
\begin{align*}
    (i) \leq \left| \int_t^{\infty} e^{i(t-\tau)\partial_{ss}} \frac{a^2}{\tau^{1\pm 2ia^2}} ||\overline{z}(\tau) - e^{-i\tau\partial_{ss}} \overline{u}_+ ||_{L^2_x}  d\tau \right|.
\end{align*}
Thus we expect that if we consider $||\overline{z}(\tau) - e^{-i\tau\partial_{ss}}\overline{u}_+ ||_{L^2_x} = \mathcal{O}(t^{-\alpha})$ for some $\alpha \in \mathbb{R}$ to be computed, then a fixed point can be obtained by the classical Picard iteration. This partially characterizes the space on which we apply the fixed point argument. Therefore, the key point is to find a good $L^2$ estimate for the oscillatory integral
\begin{align}
    I(t) = a^2\int_t^{\infty} e^{i(t-\tau)\partial_{ss}}  e^{-i\tau\partial_{ss}}\overline{u}_+  \frac{d\tau}{\tau^{1 \pm 2ia^2}}.
    \label{StatIntEstability}
\end{align}
This can be done using the Strichartz estimates for the Schrödinger linear flow that we recall for completeness (see e.\ g.\ \cite{KT98}). Given $d \in \mathbb{N}$, they are given by
\begin{align}
    || e^{it\Delta}f||_{L^p(\mathbb{R},L^q(\mathbb{R}^d))} || \lesssim ||f||_{L^2_x(\mathbb{R}^d)}.
    \label{EstStrichartzSchrod}
\end{align}
where $(p,q,d) \neq (2,\infty,2)$, $2 \leq p,q \leq \infty$ and must satisfy an $d$-admissibility condition:
\begin{align}
    \frac{1}{p} = \frac{d}{2}\left( \frac{1}2  - \frac{1}{q}\right).
    \label{dAdmissiblitySchrod}
\end{align}
The dual form of \eqref{EstStrichartzSchrod} is
\begin{align}
    || \int_{\mathbb{R}} e^{-is\Delta} F(\cdot,s) ds||_{L_x^2(\mathbb{R}^d)} \lesssim ||F||_{L^{p'}(\mathbb{R},L^{q'}(\mathbb{R}^d))}.
    \label{EstStrichartzSchrod_dual}
\end{align}
For \eqref{StatIntEstability} we must restrict the time domain in \eqref{EstStrichartzSchrod_dual} to $(t,\infty)$, from which
\begin{align*}
    & a^2 \Big \lVert \int_t^{\infty} e^{i(t-\tau)\partial_{ss}}  e^{-i\tau\partial_{ss}}\overline{u}_+  \frac{d\tau}{\tau^{1 \pm 2ia^2}}\Big \rVert_{L^2_s(\mathbb{R})} \\
    & = a^2 \Big \lVert \int_t^{\infty} e^{-2i\tau\partial_{ss}} \left(  \frac{\overline{u}_+}{\tau^{1 \pm 2ia^2}} \right) d\tau \Big \rVert_{L^2_s(\mathbb{R})} \nonumber\\
    &\lesssim a^2 \Big \lVert \frac{\overline{u}_+}{\tau} \Big \rVert_{L^{p'}((t,\infty),L^{q'}(\mathbb{R}))} = a^2 ||u_+||_{L^{q'}(\mathbb{R})} ||1/\tau||_{L^{p'}(t,\infty)}.
\end{align*}
Given $d=1$ in \eqref{dAdmissiblitySchrod}, we take $q=\infty$ (i.e., $q'=1$), and so $p=4$. Therefore,
\begin{align*}
    a^2 ||\int_t^{\infty} e^{i(t-\tau)\partial_{ss}}  e^{-i\tau\partial_{ss}}\overline{u}_+  \frac{d\tau}{\tau^{1 \pm 2ia^2}}||_{L^2_s(\mathbb{R})} \lesssim a^2 ||u_+||_{L^1_s(\mathbb{R})} t^{-1/4}.
\end{align*}
Denote
\begin{align*}
    v_1(s,t) = a + e^{\mp 2 i a^2 \log(t)} e^{it\partial_{ss}} u_+(s) = a + w_1(s,t).
\end{align*}
With a little bit of work the following can be proved (more details in \cite{BV082}).
\begin{theorem}
    Let $t_0 > 0$. There exists a constant $a_0 > 0$ such that fror all $a < a_0$, and for all $u_+$ small in $L^1 \cap L^2$ with respect to $a_0$ and $t_0$, equation \eqref{NLSpconformal} has a unique solution $v$ with
    \begin{align*}
        v - v_1 \in C([t_0,\infty),L^2(\mathbb{R})) \cap L^4([t_0,\infty),L^{\infty}(\mathbb{R})),
    \end{align*}
    satisfying, as $t \rightarrow \infty$,
    \begin{align*}
        ||v(t) - v_1(t)||_{L^2_s(\mathbb{R})} + ||v - v_1||_{L^4((t,\infty),L^{\infty}_s(\mathbb{R})} = \mathcal{O}(t^{-1/4}).
    \end{align*}
     \label{BadUBEstability}
\end{theorem}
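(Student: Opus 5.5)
We look for $v = v_1 + r$, that is $w = w_1 + r$ with $w_1 = e^{\mp 2ia^2\log t}e^{it\partial_{ss}}u_+$. We pass to $z = e^{\mp ia^2\log t}w$ as in \eqref{NLSIntFactor}, so $z = e^{\mp ia^2\log t}\,(e^{it\partial_{ss}}u_+ \cdot e^{\mp 2ia^2 \log t}) + \tilde r$. (If the integrating-factor phases do not match exactly, we simply redefine the reference linear profile by the appropriate unimodular factor $e^{i c\log t}$; this does not affect the norms.) The plan is to solve the Duhamel equation backwards from $t=+\infty$, i.e. to find a fixed point of the operator $A(z)$ written before the statement, in the space
\[
Y_{t_0} = \Big\{ r \colon \sup_{t\ge t_0} t^{1/4}\big(\|r(t)\|_{L^2_s} + \|r\|_{L^4((t,\infty),L^\infty_s)}\big) \le \varepsilon\Big\},
\]
where $r$ denotes the difference between $z$ and the free reference profile. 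The inhomogeneous Strichartz estimate, the dual form \eqref{EstStrichartzSchrod_dual} combined with \eqref{EstStrichartzSchrod}, will be used for the admissible pair $(4,\infty)$ and for $(\infty,2)$, restricted to time intervals $(t,\infty)$ (Christ–Kiselev when needed). This controls
\[
\Big\|\int_t^\infty e^{i(t-\tau)\partial_{ss}}G(\tau)\,d\tau\Big\|_{L^\infty_t L^2_s \cap L^4_t L^\infty_s((t,\infty))}
\lesssim \|G\|_{L^1((t,\infty),L^2_s)+L^{4/3}((t,\infty),L^1_s)} .
\]

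The source terms split as follows. First comes the purely linear term coming from $\frac{a^2}{\tau}e^{\mp2ia^2\log\tau}\,\overline{e^{i\tau\partial_{ss}}u_+}$. This is the oscillatory integral $I(t)$ in \eqref{StatIntEstability}, already bounded by $a^2\|u_+\|_{L^1}t^{-1/4}$ via the $(4,\infty)$ dual estimate; the same argument also gives the $L^4L^\infty$ piece. Second, the linear term $\frac{a^2}{\tau}\bar r$ is placed in $L^1_\tau L^2_s$, giving $a^2\int_t^\infty \tau^{-1}\|r(\tau)\|_{L^2}\,d\tau \le a^2\varepsilon\int_t^\infty\tau^{-5/4}\,d\tau = 4a^2\varepsilon\, t^{-1/4}$. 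This yields a contraction precisely because $4a^2 C<1/2$, which is where the smallness $a<a_0$ enters. Third, the quadratic and cubic terms $\tau^{-1}F$ from \eqref{NLPotentialEstability}, with $z = \text{free}+r$, are estimated in $L^1_\tau L^2_s$ by $\tau^{-1}\|z\|_{L^\infty}\|z\|_{L^2}$ (times $a$ or $\|z\|_{L^\infty}$). For the free part we use the dispersive bound $\|e^{i\tau\partial_{ss}}u_+\|_{L^\infty}\lesssim \tau^{-1/2}\|u_+\|_{L^1}$ and unitarity in $L^2$. For $r$ we use the $L^4L^\infty$ norm with Hölder in time: for example $\int_t^\infty \tau^{-1}\|r\|_{L^\infty}\|r\|_{L^2}\,d\tau \le \|r\|_{L^4L^\infty}\,\|\tau^{-1}\|r\|_{L^2}\|_{L^{4/3}}\lesssim \varepsilon^2 t^{-1/2}$. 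All of these are $O(t^{-1/4})$ with small constants once $\|u_+\|_{L^1\cap L^2}$ is small relative to $a_0$ and $t_0$.

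The difference estimates for $A(z_1)-A(z_2)$ follow identically, since $F$ is a polynomial, giving a contraction on $Y_{t_0}$. Fixing $\varepsilon$ comparable to $a^2\|u_+\|_{L^1}$ plus a small multiple of the data, we obtain existence and uniqueness in the stated class and the $O(t^{-1/4})$ rate. Continuity in time with values in $L^2$ comes from the Duhamel formula. Undoing the phase $e^{\mp ia^2\log t}$, which is unimodular and preserves all the norms, gives the statement for $v-v_1$. The main obstacle is the critical $1/\tau$ decay of the linear term $\bar z$. The free part is handled by the Strichartz gain $t^{-1/4}$ computed above, but the $r$-part only closes thanks to the weight $t^{1/4}$ built into the space together with the smallness of $a^2$. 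If the $L^1_\tau L^2_s$ bound proves insufficient, I would instead try a $(4,\infty)$ Strichartz bound for that term, $\|\tau^{-1}\bar r\|_{L^{4/3}L^1}$, which is not available. So the $L^1L^2$ route with $a$ small is the intended one.
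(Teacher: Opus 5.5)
Your proposal is correct and follows the paper's (that is, Banica--Vega's) argument: a backward fixed point for $A(z)$ in a space weighted by $t^{1/4}$ in $L^\infty_tL^2_s\cap L^4_tL^\infty_s$, where the free part of the $\bar z$ term is bounded by $a^2\|u_+\|_{L^1}t^{-1/4}$ via the dual $(4,\infty)$ Strichartz estimate, and the $\tau^{-1}\bar r$ term closes in $L^1_\tau L^2_s$ precisely because $a$ is small.

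Only minor touch-ups are needed:
\begin{itemize}
\item The $L^4L^\infty$ piece of the free linear term does not come from the dual estimate itself. It follows most directly from the dispersive bound $\int_{t'}^\infty \tau^{-1}(2\tau-t')^{-1/2}\,d\tau\lesssim t'^{-1/2}$ (or from Christ--Kiselev).
\item The contraction gives uniqueness only in the small ball. Uniqueness in the whole $O(t^{-1/4})$ class follows by rerunning the difference estimate on $[T,\infty)$ with $T$ large, since the $\tau^{-1}\bar r$ term contracts at any size once $4Ca^2<1$, and then using finite-time $L^2$ uniqueness on $[t_0,T]$.
\item The profile of $z$ must be exactly $e^{it\partial_{ss}}u_+$, which gives $v_1=a+e^{\pm ia^2\log t}e^{it\partial_{ss}}u_+$; the $\mp 2ia^2$ in the paper is a slip. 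A leftover phase $e^{ic\log t}$ with $c\neq 0$ is not harmless, because it changes $\|v-v_1\|_{L^2}$ by an amount that does not decay.
\end{itemize}
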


\begin{remark}
Notice that we are assuming two smallness conditions. One is the parameter $a$ and the other one is in $u_+$. However, we will see that with appropriate assumptions on $u_+$ the smallness of $a$ can be avoided. 
\end{remark}

Once $v$ has been constructed we can undo the pseudoconformal transformation and recover $u$ that will be a perturbation of $u_a$ \eqref{eq:ua}. Our next step is to construct $\chi(s,t)$ a solution of the BF using the functions $u(s,t)$ that we have just constructed. However, it turns out that the $t^{-1/4}$ is not sufficient in order to prove the stability of the corner created at $t=0$ for the family of curves $\chi_a(s,t)$. We need to improve it. We exploit deeper the oscillations concerning the linear remaining term \eqref{StatIntEstability}. For that, we use the Fourier transform and strengthen the conditions on the asymptotic state $u_+$
\begin{align*}
    &\int_t^{\infty} e^{i(t-2\tau)\partial_{ss}} \left(  \frac{\overline{u}_+}{\tau^{1 \pm 2ia^2}} \right) d\tau = \int_t^{\infty} \int_{\mathbb{R}} \frac{e^{-i(t-2\tau) \xi^2}}{\tau^{1 \pm 2ia^2}} e^{is\xi} \hat{u}_+ (-\xi) d\xi d\tau \nonumber\\
    &= \int_{\mathbb{R}} \frac{e^{it\xi^2}}{t^{1 \pm 2ia^2}} e^{is\xi} \frac{\hat{u}_+(-\xi)}{2i\xi^2} d\xi + (1 \pm 2ia^2) \int_t^{\infty} \int_{\mathbb{R}} \frac{e^{-i(t-2\tau)\xi^2}}{\tau^{2 \pm 2ia^2}} e^{is\xi} \frac{\hat{u}_+(-\xi)}{2i\xi^2} d\xi d\tau \nonumber\\
    &= \int_{\mathbb{R}} \left[ \frac{e^{it\xi^2}}{t^{1 \pm 2ia^2}} \frac{\hat{u}_+(-\xi)}{2i\xi^2} + (1 \pm 2ia^2) \int_t^{\infty}  \frac{e^{-i(t-2\tau)\xi^2}}{\tau^{2 \pm 2ia^2}}  \frac{\hat{u}_+(-\xi)}{2i\xi^2}d\tau \right]  e^{is\xi} d\xi,
\end{align*}
where we applied integration by parts with respect to variable $\tau$ in the second equality. Thus, using Plancherel
\begin{align*}
    &||a^2 \int_t^{\infty} e^{i(t-2\tau)\partial_{ss}} \left(  \frac{\overline{u}_+}{\tau^{1 \pm 2ia^2}} \right) d\tau||_{L^2_s(\mathbb{R})} \nonumber\\
    &\leq a^2||  \frac{e^{it\xi^2}}{t^{1 \pm 2ia^2}} \frac{\hat{u}_+(-\xi)}{2i\xi^2} + (1 \pm 2ia^2) \int_t^{\infty}  \frac{e^{-i(t-2\tau)\xi^2}}{\tau^{2 \pm 2ia^2}}  \frac{\hat{u}_+(-\xi)}{2i\xi^2}d\tau ||_{L^2_s(\mathbb{R})} \nonumber\\
    &\lesssim \frac{a^2}{t} \left| \left|  \frac{\hat{u}_+(-\xi)}{\xi^2}\right|  \right|_{L^2_s(\mathbb{R})} = \frac{a^2}{t} ||u_+||_{\dot{H}^{-2}_s(\mathbb{R})}.
\end{align*}
This can be extended to the fact that the derivatives $\nabla_s^k$ of such a linear Duhamel term have a $1/t$ decay, with constants depending on the $\dot{H}^{-2+k}(\mathbb{R})$ of $u_+$. In \cite{BV082}, a better decay is obtained, as we are about to formulate.
\begin{theorem}\label{thm:GoodUBEstability}
    Let $t_0 > 0$ and $m \in \mathbb{N}^*$. There exists a constant $a_0 > 0$ such that for all $a < a_0$ and for all $u_+$ small in $\dot{H}^{-2}(\mathbb{R}) \cap H^m(\mathbb{R}) \cap W^{m,1}(\mathbb{R})$ with respect to $a_0$ and $t_0$, equation \eqref{NLSpconformal} has a unique solution $v$ with
    \begin{align*}
        v - v_1 \in C([t_0,\infty),H^m(\mathbb{R})),
    \end{align*}
    satisfying, as $t \rightarrow \infty$, for all entire $k \in(0,m]$,
    \begin{align*}
        ||v(t) - v_1(t)||_{L^2_s(\mathbb{R})} = \mathcal{O}(t^{-1/2}), \quad ||\partial^k_s(v(t) - v_1(t))||_{L^2_s(\mathbb{R})} = \mathcal{O}(t^{-1}).
    \end{align*}
    
\end{theorem}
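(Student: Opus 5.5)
We write $v = a + w$ and, as before, $z = e^{\mp i a^2\log t}w$. Setting $z_1 = e^{it\partial_{ss}}u_+$, so that $v_1 = a + e^{\pm i a^2\log t} z_1$ (up to the sign convention in the phase), the goal is a fixed point of the operator $A$ written in the unknown $r = z - z_1$. The equation for $r$ is
\[
r = \pm i\int_t^\infty e^{i(t-\tau)\partial_{ss}}\Big(\frac{a^2}{\tau}e^{\mp 2ia^2\log\tau}\,\overline{(z_1+r)} + \frac1\tau F(z_1+r,\overline{z_1+r})\Big)\,d\tau .
\]
We work in the weighted space
\[
\|r\|_{Y} = \sup_{t\ge t_0}\Big( t^{1/2}\|r(t)\|_{L^2} + t\sum_{1\le k\le m}\|\partial_s^k r(t)\|_{L^2}\Big),
\]
restricted to a small ball, and show that the map is a contraction there. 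Uniqueness in the class stated then follows from the contraction estimate.

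\textbf{Step 1: the linear source.} The source term is $\frac{a^2}{\tau}e^{\mp2ia^2\log\tau}\overline{z_1}$. We use the integration by parts in $\tau$ already carried out above. It gives an $L^2$ bound $\frac{a^2}{t}\|u_+\|_{\dot H^{-2}}$, and for $\partial_s^k$ a bound $\frac{a^2}{t}\|u_+\|_{\dot H^{k-2}}$. By interpolation, $\|u_+\|_{\dot H^{k-2}}$ is controlled by $\dot H^{-2}\cap H^m$. So this term is $O(t^{-1})$ in every norm of $Y$, which is better than needed.

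\textbf{Step 2: the linear term in $r$.} The term $\frac{a^2}{\tau}\bar r$ is handled by Minkowski and unitarity:
\[
\int_t^\infty \frac{a^2}{\tau}\,\tau^{-1/2}\,d\tau \lesssim a^2 t^{-1/2},
\]
and, with $\tau^{-1}$ in place of $\tau^{-1/2}$, the same argument gives $a^2 t^{-1}$ for the derivatives. This is exactly where smallness of $a$ (i.e.\ $a<a_0$) closes the contraction. Here the weight $t^{1/2}$ for $L^2$ is forced: it cannot be improved by this crude argument, since $\int_t^\infty \tau^{-1-\alpha}\,d\tau \sim t^{-\alpha}$ only reproduces the rate fed in.

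\textbf{Step 3: the nonlinear terms.} For $F$ we need decay of $z$ in $L^\infty$. For $z_1$ we have the dispersive bound $\|z_1(\tau)\|_{L^\infty}\lesssim \tau^{-1/2}\|u_+\|_{L^1}$, and derivatives likewise via $W^{m,1}$. For $r$ we use Gagliardo--Nirenberg,
\[
\|r\|_{L^\infty}\lesssim \|r\|_{L^2}^{1/2}\|\partial_s r\|_{L^2}^{1/2}\lesssim \tau^{-3/4}.
\]
Quadratic terms then behave like $\frac{a}{\tau}\|z\|_{L^\infty}\|z\|_{L^2}\lesssim a\,\tau^{-3/2}$ after Leibniz for derivatives. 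Integrating gives $t^{-1/2}$, which is acceptable provided the quantities are small.

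\textbf{Main obstacle.} The hard part is the derivative estimate at rate $t^{-1}$ for the quadratic terms that contain $z_1$ undifferentiated in $L^2$, since $\|z_1\|_{L^2}$ does not decay. A term like $\frac a\tau\, \partial^k_s(z_1)\,\bar z_1$ only gives $\frac a\tau\,\|\partial^k_s z_1\|_{L^2}\|z_1\|_{L^\infty}\sim a\,\tau^{-3/2}$, and integrating yields $t^{-1/2}$, not $t^{-1}$. My first attempt is to put the derivative factor in $L^\infty$ through the $W^{m,1}$ dispersive decay and the other factor in $L^2$. If that still falls short, the plan is to exploit the oscillation of the pure $z_1$–$z_1$ interactions by an integration by parts in $\tau$, as in Step 1, leaving only terms that involve $r$. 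Those remaining terms do gain the extra decay, and the contraction then closes for $a<a_0$ and $u_+$ small.
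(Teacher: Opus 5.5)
Your architecture matches the one the paper sketches. You set up a fixed point for $r=z-z_1$ with $\|r\|_{L^2}$ weighted by $t^{1/2}$ and the derivatives by $t$. You integrate by parts in $\tau$ on the linear source and pay with $\dot H^{-2}$, use Minkowski plus smallness of $a$ for $\frac{a^2}{\tau}\bar r$, and bound the terms containing $r$ directly. The gap is the decisive estimate: $O(t^{-1})$ for $\partial_s^k$ of the pure $z_1$--$z_1$ quadratic Duhamel terms. You leave it as a contingency, and your first attempt at it cannot work.

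After Leibniz, each piece $\partial_s^j z_1\,\partial_s^{k-j}\overline{z_1}$ really has $L^2$ norm of order $\tau^{-1/2}$. Indeed, $z_1\approx \tau^{-1/2}e^{is^2/4\tau}\hat u_+(s/2\tau)$ gives $\partial_s z_1\approx \frac{is}{2\tau}z_1$, with $s/\tau$ effectively bounded, so derivatives of a single free wave bring no decay. Whichever factor you put in $L^\infty$ via $W^{m,1}$, you get $\frac a\tau\,\tau^{-1/2}$ again. The improvement exists only for the combination $\partial_s^k|z_1|^2$, where the quadratic phases cancel. Splitting it with Leibniz and H\"older throws that cancellation away.

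Your fallback (integration by parts in $\tau$ on the $z_1$--$z_1$ interactions) is the right idea and is what the paper does. It is not ``as in Step 1'', though, and the reason it works is exactly what produces the two different rates.

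\begin{itemize}
\item For $|z_1|^2$, the $\tau$-phase after composing with $e^{i(t-\tau)\partial_{ss}}$ is $-2\tau\xi(\eta-\xi)$. This vanishes at output frequency $\xi=0$, where the amplitude is $\|u_+\|_{L^2}^2\neq 0$, so no hypothesis on $u_+$ removes the resonance. That is why the $L^2$ rate stays at $t^{-1/2}$.
\item For $k\ge 1$, the multiplier $|\xi|^k$ from $\partial_s^k$ absorbs the $1/\xi$ created by the integration by parts. The factor $1/(\eta-\xi)$ is paid with $\|u_+\|_{\dot H^{-1}}$, which you have by interpolating $\dot H^{-2}$ and $L^2$. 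The boundary term and the $\tau^{-2}$ remainder are then $O(t^{-1})$.
\item The $z_1^2$ term has phase $2\tau\eta(\xi-\eta)$, which is resonant only at zero input frequency. It is handled with $\dot H^{-1}$ on both factors.
\item In Step 1, by contrast, you divide by $\xi^2$ and pay everything with $\dot H^{-2}$.
\end{itemize}

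Once this is written out, bound the cross terms by $\frac a\tau\|\partial_s^j z_1\|_{L^\infty}\|\partial_s^{k-j}r\|_{L^2}\lesssim a\tau^{-2}$. With that, the contraction closes as you claim.
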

Notice that if, given $a$, we take $u_+$ and its derivative small enough we can be sure that $\lvert v \rvert > \frac a2$.
\begin{remark}
The existence of two different rates of decay from \Cref{thm:GoodUBEstability} is due to the presence in the nonlinear potential \eqref{NLPotentialEstability} of the quadratic terms $\frac{2a}{t}|w|^2$ and $\frac{a}{t} w^2$. The less oscillating one, given the fact that we are omitting the phase of $w$, comes from the first one, which gives a rate of $t^{-1/2}$. We sketch how we treat the Duhamel term coming from such a worse quadratic term, by using Fourier oscillations:
\begin{align*}
    &\partial^k_s \int_t^{\infty} e^{i(t-\tau)\partial_{ss}} \left( \frac{\left| e^{i\tau \partial_{ss}} u_+(\tau) \right|^2}{\tau} \right) d\tau \nonumber\\
    &= \int_t^{\infty} \int_{\mathbb{R}} |\xi|^k \frac{e^{-i(t-\tau)\xi^2}}{\tau} e^{is\xi} \widehat{e^{i\tau \partial_{ss}} u_+} \ast \widehat{e^{-i\tau \partial_{ss}} \overline{u}_+} d\xi d\tau \nonumber\\
    &= \int_t^{\infty} \int_{\mathbb{R}}\int_{\mathbb{R}} |\xi|^k \frac{e^{is \xi}}{\tau} e^{-it\xi^2} e^{i\tau[\xi^2 - \eta^2 + (\xi-\eta)^2]} \hat{u}_+(\eta) \widehat{\overline{u}}_+(\xi-\eta) d\eta d\xi d\tau \nonumber\\
    & = \int_t^{\infty} \int_{\mathbb{R}}\int_{\mathbb{R}} |\xi|^k \frac{e^{is \xi}}{\tau} e^{-it\xi^2 - 2i\tau \xi(\eta-\xi)} \hat{u}_+(\eta) \widehat{\overline{u}}_+(\xi-\eta) d\eta d\xi d\tau,
\end{align*}
where we extended the convolution, and reordered the exponents through $\xi^2 - \eta^2 + (\xi - \eta)^2 = -2(\eta-\xi)\xi$. By integrating by parts in $\tau$,
\begin{align*}
    &\int_t^{\infty} \int_{\mathbb{R}}\int_{\mathbb{R}} |\xi|^k \frac{e^{is \xi}}{\tau} e^{-it\xi^2 - 2i\tau \xi(\eta-\xi)} \hat{u}_+(\eta) \widehat{\overline{u}}_+(\xi-\eta) d\eta d\xi d\tau \nonumber\\
    &= \int_{\mathbb{R}}\int_{\mathbb{R}} |\xi|^k \frac{e^{is \xi}}{t} e^{-i(t\xi^2 + 2\xi t(\eta-\xi))} \frac{\hat{u}_+(\eta) \widehat{\overline{u}}_+(\xi-\eta)}{2i\xi(\eta - \xi)} d\eta d\xi \nonumber\\
    &- \int_t^{\infty} \int_{\mathbb{R}}\int_{\mathbb{R}} |\xi|^k \frac{e^{is \xi}}{\tau^2} e^{-it\xi^2 - 2i\tau \xi(\eta-\xi)} \frac{\hat{u}_+(\eta) \widehat{\overline{u}}_+(\xi-\eta)}{2i\xi(\eta - \xi)} d\eta d\xi d\tau.
\end{align*}
Using again Plancherel like before, and taking $k=1$, we can then estimate the quadratic Duhamel term through a $\dot{H}^1$ norm
\begin{align*}
    &||\partial^k_s \int_t^{\infty} e^{i(t-\tau)\partial_{ss}} \left( \frac{\left| e^{i\tau \partial_{ss}} u_+(\tau) \right|^2}{\tau} \right) d\tau||_{L^2_s(\mathbb{R})} \lesssim \frac{a}{t} \left|\left| \widehat{u}_+ \ast \frac{\widehat{\overline{u}}_+}{\cdot} \right| \right|_{L^2_s(\mathbb{R})} \nonumber\\
    &= \frac{a}{t} \left|\left| u_+ \mathcal{F}^{-1}\left( \frac{\widehat{\overline{u}}_+}{\cdot} \right) \right|\right|_{L^2_s(\mathbb{R})} \leq \frac{a}{t}||u_+||_{\dot{H}^1(\mathbb{R})} ||u_+||_{\dot{H}^{-1}(\mathbb{R})},
\end{align*}
where $\mathcal{F}^{-1}$ denotes the inverse Fourier transform operator. This argument can be applied for any $k > 1$ in $H^k(\mathbb{R})$, but not for $k=0$, and that is why we get a better decay when differentiating.
\end{remark}
Undoing the pseudo-conformal transformation \eqref{TRANSpconformal} and recalling \eqref{eq:ua}, we get that $u(t)$ is close to
\begin{align}
    u_1(s,t) = u_a(s,t) + e^{\pm i a^2 \log(t)} \hat{\overline{u}}_+ \left( -\frac{s}{2} \right).
    \label{eq:u1}
\end{align}
The condition $u_+ \in \dot{H}^{-2}$ is translated on $\widehat{\overline{u}}_+$ by a cancelling condition at the origin, which could be related to the blow-up result of J. Bourgain and W. Wang \cite{BW}. It's important to notice that the abscense of a limit in $L^2$ for $u(t) - u_a(t)$ for time zero does not exclude the possibility to construct from $u$ a BF with a trace at zero, since curvature and torsion defined from $u$ correspond to second derivatives of the curves. 
\\
Now we come back to the BF equation. We want to export the stability result for solutions to the 1d cubic NLS into a stability result for self-similar solutions of BF. For that, we pose further questions: 
\begin{enumerate}
    \item From solutions of \eqref{thm:GoodUBEstability}, can we construct a family of curves $\chi$ which are solutions of the BF, and get information about the limit at $t=0$?
    \item Are those curves $\chi$ close to the self-similar solution $\chi_a$?
    \item Is there persistence in the formation of a singularity at time zero?
\end{enumerate}
The answers to all these questions will be found in \Cref{them_NLStoBF}. However, we would like to reach the thesis of the above mention Theorem step-by-step, referring to \cite{BV082} for all the details.\\
Let $v$ be a solution as in \Cref{thm:GoodUBEstability} whose regularity will be specified later and whose pseudo-conformal transformation is $u$. 
We can find regular functions $c(s,t) > \frac{a}{2 \sqrt t}$ and $\phi(s,t)$ such that 
\[
u(s,t) = c(s,t) e^{i\phi(s,t)}.
\]
Hence if we define the torsion as
\[
\tau(s,t) = \partial_s \phi (s,t),
\]
the filament function 
\[
\tilde u (s,t)= c(s,t) e^{i \int_0^s \tau(s',t)ds' + \phi(0,t)}
\]
will be a solution of \eqref{eq:Cauchypb_NLSa2t} with $A(t)= \frac {a^2}t + \partial_t \phi(0,t).$ Moreover, $(c,\tau)$ satisfy the intrinsic equations (see \cite{DaR06} and \cite{Be})
\begin{align}
    c_t = -(c\tau)_s -  c_s\tau_s, \quad
    \tau_t = \left(\frac{c_{ss} - c\tau^2}{c}  \right)_s + c_s c;
    \label{IntrinsicEqs}
\end{align}
As will be seen later, there exists a binormal flow of curves such that the curvature and the torsion are $c,\tau$ defined above. As $t$ goes to 0, we shall compute the leading terms of $(c, \tau)$. We have 
\begin{align}\label{c_NLStoBF}
    c(s,t) = |u(s,t)| = \frac{1}{\sqrt{t}} \left| v\left( \frac{s}{t} , \frac{1}{t} \right) \right|,
\end{align}
and 
\begin{align}
    \tau(s,t) = \text{Im } \left( \frac{u_s(s,t)}{u(s,t)} \right) = \text{Im } \left(  \frac{\frac{is}{2t} \overline{v}(\frac{s}{t},\frac{1}{t}) + \overline{v}_s(\frac{s}{t},\frac{1}{t})}{\overline{v}(\frac{s}{t},\frac{1}{t})} \right).
    \label{tau_NLStoBF}
\end{align}

Note from \eqref{IntrinsicEqs} that we need $c$ to be, at least, in $C^2$. We then impose $v \in H^{3+\varepsilon}(\mathbb{R})$ for an arbitrarily small $\varepsilon>0$, so that by Sobolev lemma we get $v \in C^{5/2}(\mathbb{R})$, given the continuous injection $H^{3+\varepsilon}(\mathbb{R}) \hookrightarrow C^{5/2}(\mathbb{R})$.
\\
By construction,
\begin{align*}
    v(s,t) = a + e^{\pm i a^2 \log(t)} e^{i t \partial_{ss}} u_+(s) + (v - v_1)(s,t).
\end{align*}
Then, it is possible to show, exploiting the decay from \Cref{thm:GoodUBEstability} and the decay properties of the linear flow that $(c,\tau)$ is close $(c_a,\tau_a) = \Big ( \frac {a}{\sqrt t}, \frac{s}{2t} \Big )$ with strong rate of decay as $t \rightarrow 0$, with constants depending on the norms of the asymptotic state $u_+$. Let us now construct the curves $\chi$, given $(c,\tau)$ from \eqref{tau_NLStoBF}, \eqref{c_NLStoBF}:
\begin{enumerate}
    \item We first construct its tangent, normal and binormal vectors $(T,n,b)(s,t)$ in the following way. For a given $(T,n,b)(\tilde t_0,0)$ we define $(T,n,b)(t,0)$ by imposing 
    \[
     \begin{pmatrix}
        T \\
        n \\
        b \\
        \end{pmatrix}_t(0,t) = \begin{pmatrix}
            0 & -c\tau & c_s \\
            c\tau & 0 & \frac{c_{ss} - c\tau^2}c \\
            - c_s & - \frac{c_{ss} - c\tau^2}c & 0 \\
        \end{pmatrix}
        \begin{pmatrix}
        T \\
        n \\
        b \\
        \end{pmatrix}(0,t).
    \]
    The choice of this system is due to the fact that
    \[
    \frac{a^2}t + \partial_t \phi(0,t) = 2 \Big ( \frac{c_{ss} - c \tau^2}c \Big) (0,t) + c(0,t)^2
    \]
    and the change of phase 
    \[
    \tilde n + i \tilde b= e^{i \frac {\phi}2} (n +ib).
    \]
    Then, we construct $(T,n,b)(s,t)$ from $(T,n,b)(0,t)$ by integrating the Frenet system for fixed $t$
    \begin{align}
        \begin{pmatrix}
        T \\
        n \\
        b \\
        \end{pmatrix}(s,t) 
        &= \begin{pmatrix}
        T \\
        n \\
        b \\
        \end{pmatrix}(0,\Tilde{t}_0) - \int_t^{\Tilde{t}_0} \begin{pmatrix}
0 & -c\tau & \partial_s c \\
c\tau & 0 & (\frac{\partial_{ss}c - c\tau^2}{c}) \\
-\partial_s c & -(\frac{\partial_{ss}c - c\tau^2}{c}) & 0 \\
\end{pmatrix}\begin{pmatrix}
        T \\
        n \\
        b \\
        \end{pmatrix}(0,t')dt' \nonumber\\
        &+ \int_0^s \begin{pmatrix}
0 & c & 0 \\
-c & 0 & \tau \\
0 & -\tau & 0 \\
\end{pmatrix} \begin{pmatrix}
        T \\
        n \\
        b \\
        \end{pmatrix}(s',t)ds'.
       \label{GralFormFrenetFrame_NLStoBF} 
    \end{align}
    \item In this way $T$ satisfies the Schrödinger map equation
    \begin{align*}
        T_t = T \wedge T_{ss},
    \end{align*}
    as a tangent vector of a BF.
    \item Once $T$ is constructed, for a given $\chi(0,\Tilde{t}_0)$ 
    we define
    \begin{align}
        \chi(s,t) = \chi(0,\Tilde{t}_0) - \int_t^{\Tilde{t}_0} (cb)(0,t')dt' + \int_0^s T(s',t) ds'.
        \label{1stDefChi_NLStoBF}
    \end{align}
    By using the Frenet system, we have that
    \begin{align}
        T_t &= T\wedge T_{ss} = T \wedge (cn)_s \\
        & = T \wedge ( c_sn + cn_s) = T \wedge ( c_sn - c^2 T + \tau b c) \nonumber\\
        &= c_sb - \tau c n,
    \end{align}
    so from \eqref{1stDefChi_NLStoBF}
    \begin{align*}
        \chi_t &= (cb)(0,t) + \int_0^s T_t(s',t)ds' = (cb)(0,t) + \int_0^s ( c_s b - \tau c n) (s',t) ds' \nonumber\\
        &=(cb)(0,t) + (cb(s',t))\rvert_0^s - \int_0^s (cb_s + \tau c n) (s',t) ds' \\
        &= (cb)(s,t) \nonumber + \int_0^s(-\tau c n + \tau c n)(s',t) ds' = (cb)(s,t),
    \end{align*}
    where we used integration by parts for the integral concerning $c_sb$. Therefore, $\chi$ satisfies the BF.
    \item Finally, $\chi(s,t)$ is obtained in this way for all $t \in (0,\Tilde{t}_0)$, as long as $(c,\tau)$ are regular enough. Moreover, since $c$ is close to $c_a = a/\sqrt{t}$, we have
    \begin{align*}
        &|\chi(s,t_1) - \chi(s,t_2)| = |\int_{t_1}^{t_2} c(s,t) b(s,t) dt| \nonumber\\
        &\leq |\int_{t_1}^{t_2}(c(s,t) - \frac{a}{\sqrt{t}}) b(s,t) dt| + |\int_{t_1}^{t_2} \frac{a b(s,t)}{\sqrt{t}} dt| \nonumber\\
        &\leq \int_{t_1}^{t_2} \frac{Ca}{\sqrt{t}} dt \lesssim Ca (\sqrt{t_2} - \sqrt{t_1}) \rightarrow 0, \text{ for } t_1,t_2 \rightarrow 0, 
    \end{align*}
    where in the second inequality we omitted the integral with the difference $c - \frac{a}{\sqrt{t}}$, since we know we can make it small enough; and where $C$ is some positive constant. This implies the existence of certain limit $\chi_0(s)$ as $t \rightarrow 0$. We similarly obtain that for all $s \in (-\infty,\infty)$
    \begin{align*}
        |\chi(s,t) - \chi_0(s)| \leq Ca\sqrt{t}.
    \end{align*}
    \item Now, we want to show that $\chi(s,0)$ is close to $\chi_a(s,0)$. First, we show that $T(0,t)$ is close to $T_a(0,t)$, and from there, that $T(s,t)$ is close to $T_a(s,t)$. 
    We provide a sketch of the proof for this last point, which is a key part.
    \\
    From \eqref{GralFormFrenetFrame_NLStoBF},
    \begin{align*}
         \begin{pmatrix}
        T \\
        n \\
        b \\
        \end{pmatrix}_t(0,t) = \begin{pmatrix}
0 & -c\tau & c_s \\
c\tau & 0 & (\frac{c_{ss} - c\tau^2}{c}) \\
-c_s & -(\frac{c_{ss} - c\tau^2}{c}) & 0 \\
\end{pmatrix} \begin{pmatrix}
        T \\
        n \\
        b \\
        \end{pmatrix}(0,t).
    \end{align*}
    With some computations, one has that at $s=0$
    \[
    \frac{a^2}{t} + \phi_t(0,t) = 2 \left( \frac{c_{ss} - c\tau^2}{c}\right)(0,t) + c^2(0,t),
    \]
    which is equivalent to the equality, at $s=0$,
      \begin{align*}
     \frac{c_{ss} - c\tau^2}{c} = \frac{c_a^2 - c^2}{2} + \frac{\phi_t}{2}.
    \end{align*}
    Then
    \begin{align*}
        \begin{pmatrix}
        T \\
        n \\
        b \\
        \end{pmatrix}_t(0,t) = \begin{pmatrix}
0 & -c\tau & c_s \\
c\tau & 0 & \frac{c_a^2 - c^2}{2} + \frac{\phi_t}{2} \\
-c_s & -(\frac{c_a^2 - c^2}{2} + \frac{\phi_t}{2}) & 0 \\
\end{pmatrix} \begin{pmatrix}
        T \\
        n \\
        b \\
        \end{pmatrix}(0,t).
    \end{align*}
    We want to get rid of the factor $\phi_t/2$, so we do the change of variable $\Tilde{n} + i \Tilde{b} = e^{i\frac{\phi}{2}} (n + ib)$ (note that such a factor only affects linearly to $n$ and $b$), and we impose the initial condition $(T,\Tilde{n},\Tilde{b})(0,\Tilde{t}_0) = (T_a,n_a,b_a)(0,\Tilde{t}_0)$. 
    \\
    Therefore, for having $(T,\Tilde{n},\Tilde{b})(0,t)$ close to $(T_a,n_a,b_a)(0,t)$, it suffices to show that
    \begin{align*}
        \begin{pmatrix}
0 & |c\tau| & |c_s| \\
|c\tau| & 0 & \left|\frac{c_a^2 - c^2}{2}\right| \\
|c_s| & \left|\frac{c_a^2 - c^2}{2}\right| & 0 \\
\end{pmatrix} (0,t)
    \end{align*}
    is integrable at $t=0$. 
    This is true, since we have
    \begin{enumerate}
        \item Strong decay of $(c,\tau) - (c_a,\tau_a)$.
        \item Good decay of $|e^{i\frac{\phi}{2}} - 1|$, so we have closedness of $(T,\Tilde{n},\Tilde{b})(0,t)$ to $(T,n,b)(0,t)$.
    \end{enumerate}
    Thus, $(T,n,b)(0,t)$ is close to $(T_a,n_a,b_a)(0,t)$. 
   \item We obtain finally that for any $\varepsilon > 0$, and $u_+$ small enough,
   \begin{align}
       |(T-T_a)(s,t)| \leq \varepsilon.
       \label{ConvergenceTTa}
   \end{align}
   In particular, from $\chi_s = T$, and summing and substracting the corresponding terms,
   \[
   \begin{split}
       \chi(s,t) - \chi(0,t) &= \int_0^s T(s',t)ds' \\
       & = A^+_a s + \int_0^s (T - T_a)(s',t)ds' + \int_0^s T_a(s',t)ds' - A^+_a s,
   \end{split}
   \]
   so given \eqref{ConvergenceTTa} and the fact that the third term tends to $0$ as $t \rightarrow 0$ uniformly in $s$ (see \cite{GRV03}); then, for $t \rightarrow 0$,
   \begin{align*}
       |\chi_0(s) - \chi_0(0) - sA^+_a| \leq \varepsilon s.
   \end{align*}
   More precisely, we have the following result. 
   \begin{theorem} \label{them_NLStoBF}
       Let $\varepsilon > 0$, $t_0 > 0$, $0 < a < a_0$, where $a_0$ is the constant in theorem \eqref{thm:GoodUBEstability}. Let $u_+$ small in $\dot{H}^{-2} \cap H^3 \cap W^{3,1}$ with $s^2 u_+$ small in $H^1$ with respect to $\varepsilon$, $a$ and $t_0$, and let $v$ be the corresponding solution obtained in theorem \eqref{thm:GoodUBEstability}. By using Hasimoto transformation, we construct from $v$ a family of curves $\chi(s,t)$ which solves the BF for $\Tilde{t}_0 > t > 0$, and such that there exists a unique $\chi_0$ satisfying
       \begin{align*}
           |\chi(s,t) - \chi_0(s)| < Ca\sqrt{t}
       \end{align*}
       uniformly in $s \in (-\infty,\infty)$.
       \\
       Moreover,
       \begin{align*}
           |\chi_0(s) - \chi_0(0) - sA^+_a \mathbb{I}_{[0,\infty)}(s) - sA^-_a \mathbb{I}_{(-\infty,0]}(s)| < \varepsilon|s|,
       \end{align*}
       where $A^{\pm}_a$ are two distinct non opposite unit vectors, such that the angle $\theta$ between $A^+_a$ and $A^-_a$ is determined by the relation
       \begin{align*}
           \sin \frac{\theta}{2} = e^{-\frac{\pi a^2}{2}}.
       \end{align*}
   \end{theorem}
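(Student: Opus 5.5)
The plan follows the construction sketched just before the statement, made quantitative. Starting from $v$ given by Theorem \ref{thm:GoodUBEstability} (with $m=3$), we undo the pseudo-conformal transformation \eqref{TRANSpconformal} to get $u$. We then write $u=ce^{i\phi}$ with $c>\frac{a}{2\sqrt t}$, which is possible because $|v|>a/2$ for $u_+$ small. Next we define $\tau=\phi_s$ and check that $(c,\tau)$ satisfy the intrinsic equations \eqref{IntrinsicEqs}.

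The first quantitative step is to get bounds for $(c-c_a,\tau-\tau_a)$ and their $s$-derivatives, from \eqref{c_NLStoBF}, \eqref{tau_NLStoBF} and the decomposition $v=a+e^{\pm ia^2\log t}e^{it\partial_{ss}}u_+ + (v-v_1)$, in terms of powers of $t$ as $t\to0$. Two inputs control this.
\begin{itemize}
\item The remainder $v-v_1$ decays like $t^{-1/2}$ in $L^2$ and $t^{-1}$ in $\dot H^k$, which becomes $\sqrt t$ and $t$ gains after rescaling.
\item The linear piece $e^{it\partial_{ss}}u_+$ at time $1/t$ and position $s/t$ is handled by the stationary-phase asymptotics $e^{i\tau\partial_{ss}}u_+(x)\approx \tau^{-1/2}e^{ix^2/4\tau}\hat u_+(x/2\tau)$. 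The weighted hypothesis $s^2u_+\in H^1$ gives the error term, and the $\dot H^{-2}$ condition gives vanishing of $\hat u_+$ at the origin. Together these make the $s=0$ quantities $c-c_a$, $c_s$, $c\tau$ and $c_a^2-c^2$ integrable in $t$ near $0$.
\end{itemize}

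We then build the frame and the curve. We solve for $(T,n,b)(0,t)$ via the time ODE with initial data $(T_a,n_a,b_a)(0,\tilde t_0)$ after the gauge change $\tilde n+i\tilde b=e^{i\phi/2}(n+ib)$, and integrate Frenet in $s$ as in \eqref{GralFormFrenetFrame_NLStoBF}. The curve $\chi$ is defined by \eqref{1stDefChi_NLStoBF}, and the computation already in the text shows $\chi_t=cb$, so $\chi$ solves the BF. Existence of $\chi_0$ follows from $|\chi_t|=c\le Ca/\sqrt t$, since $|c-c_a|$ is small relative to $c_a$. Integrating from $t$ to $0$ gives $|\chi(s,t)-\chi_0(s)|\le Ca\sqrt t$ uniformly in $s$.

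The main obstacle is the closeness $|T-T_a|(s,t)\le\varepsilon$ uniformly in $s$ and small $t$.
\begin{itemize}
\item \emph{At $s=0$:} we compare the ODE systems for $(T,\tilde n,\tilde b)(0,t)$ and $(T_a,n_a,b_a)(0,t)$. Gronwall, using integrability at $t=0$ of the coefficient differences, gives a difference bounded by $\int_0^{\tilde t_0}(\cdots)\,dt$, which is small when $u_+$ is small. The decay of $|e^{i\phi/2}-1|$ then transfers this closeness from $(\tilde n,\tilde b)$ back to $(n,b)$.
\item \emph{For $s\neq0$:} Gronwall in $s$ alone would lose because $\tau_a=s/2t$ is large. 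Instead we work in self-similar variables $s/\sqrt t$. There we compare the Frenet systems via the modulated quantity $(n-ib)e^{-i\int\tau}$, as in Lemma \ref{lemma:BV}, so that the $\tau$-rotation is factored out. The perturbation is then of size $|c-c_a|\,|s|$ plus an integrable remainder, and the weights on $u_+$ control it.
\end{itemize}

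With $|T-T_a|\le\varepsilon$ in hand, we write $\chi(s,t)-\chi(0,t)=\int_0^s T$. Adding and subtracting $T_a$ and using Theorem \ref{thm:asymp_onecorner} i) gives $\int_0^sT_a\to sA_a^\pm$. Letting $t\to0$ yields $|\chi_0(s)-\chi_0(0)-sA_a^\pm|\le\varepsilon|s|$. The statements about $A_a^\pm$ (distinct, not opposite, $\sin\frac\theta2=e^{-\pi a^2/2}$) are exactly Proposition \ref{prop:A+nepmA-} and Theorem \ref{thm:asymp_onecorner} iii)–iv).
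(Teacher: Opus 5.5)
Your proposal is correct and follows the paper's own route. The shared steps are: the frame $(T,n,b)(0,t)$ from the time ODE after the gauge change $\tilde n+i\tilde b=e^{i\phi/2}(n+ib)$, Frenet integration in $s$, the bound $|\chi_t|=c\lesssim a/\sqrt t$ giving $\chi_0$, integrability at $t=0$ of $c\tau$, $c_s$, $(c_a^2-c^2)/2$ at $s=0$, and the final splitting of $\int_0^sT$ using Theorem~\ref{thm:asymp_onecorner}~i).

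The paper only asserts the passage from $s=0$ to all $s$, and your torsion-free (parallel-frame) comparison is the right way to fill it, but two points need correcting. First, plain Frenet--Gronwall fails not because $\tau_a$ is large (the Frenet generator is skew, so its size costs nothing) but because $\int|\tau-\tau_a|\,ds$ grows like $t^{-1/2}$. Second, what must be small uniformly in $t$ is $\int_{\R}|\psi-\psi_a|\,ds$, which is essentially $\sqrt t\,\|(v-a)(\cdot,1/t)\|_{L^1}$; this requires rotating the normal at $s=0$ by $\phi(0,t)$, so that no $s$-independent term $\frac{a}{\sqrt t}|e^{i\phi(0,t)}-1|$ appears. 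That $L^1$ bound needs weighted control of $v-v_1$ beyond what Theorem~\ref{thm:GoodUBEstability} gives, and a pointwise bound like $|c-c_a|\,|s|$ cannot replace it.
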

   Since
   \begin{align*}
       \chi_a(s,0) = s A^+_a \mathbb{I}_{[0,\infty)}(s) + sA^-_a \mathbb{I}_{(-\infty,0]}(s),
   \end{align*}
   we have that $\chi_0(s)$ lies in a $\varepsilon$-cone around $\chi_a(s,0)$. This tells us that we have obtained a family of curves, solutions of the BF, which are close to the self-similar solution $\chi_a$. In particular, a corner is still formed for this curves $\chi(s,t)$ at $t=0$ and at $s=0$, meaning an affirmative answer to one of the main initial questions. The angle of this corner can be made as close as wanted to the one in between $A^+_a$ and $-A^-_a$, by taking $\varepsilon$ small enough. This shows that the formation of a singularity in finite time of $\chi_a$ is not an isolated phenomenon. In the next section, we will use these results to study the asymptotics to a generic initial data close to the selfsimilar solutions $\chi_a$. 
\end{enumerate}

\subsection{Asymptotic completeness}

In the previous section we proved the existence of solutions for $t \geq 1$ of the equation \eqref{NLSpconformal}, that at infinity behave as 
 \begin{align*}
     a - e^{\pm i a^2 \log(t)} e^{i t \partial_{ss}} u_+,
 \end{align*}
where $u_+$ is any given asymptotic state small in $L^1 \cap L^2$. Our purpose now is to find a solution for $t > 1$ and for any initial condition at $t = 1$, that belongs to an appropriate function space and that is sufficiently small with respect to $a$. Moreover, we aim to prove the asymptotics of this solution by finding the corresponding asymptotic state $u_+$. Finally, we will translate the result obtained for these solutions to solutions of the BF close to $\chi_a$. In \cite{BV12_1} the reader can find the detailed proofs for the results presented in this section.\\

We consider the linearization of equation \eqref{NLSpconformal} that we obtained in the previous section (equation \eqref{NLSIntFactor})
\begin{align}
    iz_t + z_{ss} \pm \frac{a^2}{t^{1 \pm 2ia^2}} \overline{z} = 0,
    \label{linearized_NLS}
\end{align}
with initial condition at $t_0 \geq 1$ given by $z(s,t_0)$.
\\
We start by presenting some a priori estimates on the Fourier modes of $z(t)$, that will allow us (see \Cref{proposition_LinearPseudconfEq_FreeSchrod}) to get a satisfactory global existence result.
\begin{lemma}
    If $z$ solves \eqref{linearized_NLS}, then
    \begin{align*}
        |\hat{z}(\xi,t)| \leq \frac{t^{a^2}}{t_0^{a^2}} (|\hat{z}(\xi,t_0)| + |\hat{z}(-\xi,t_0)| ).
    \end{align*}
    In particular, 
    \begin{align*}
        ||z(t)||_{\dot{H}^k} \lesssim \left(  \frac{t}{t_0}\right)^{a^2} ||u(t_0)||_{\dot{H}^k}
    \end{align*}
    for any $k \in \mathbb{Z}$.
    \label{lemma1_GrowthFourier}
\end{lemma}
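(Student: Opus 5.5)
Since \eqref{linearized_NLS} is linear with coefficients independent of $s$, I would take the Fourier transform in $s$. The conjugate $\overline{z}$ couples the mode $\xi$ only to the mode $-\xi$, because $\widehat{\overline{z}}(\xi)=\overline{\hat z(-\xi)}$. So for each fixed $\xi$ the equation reduces to a $2\times 2$ system of ODEs in $t$ for the pair $(\hat z(\xi,t),\hat z(-\xi,t))$. With the sign conventions of the paper (Fourier multiplier $-\xi^2$ for $\partial_{ss}$), this reads
\[
i\partial_t \hat z(\xi,t) -\xi^2 \hat z(\xi,t) \pm \frac{a^2}{t^{1\pm 2ia^2}}\,\overline{\hat z(-\xi,t)}=0,
\]
together with the analogous equation with $\xi$ replaced by $-\xi$. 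Next I remove the free phase by setting $\hat z(\xi,t)=e^{-it\xi^2}\alpha(\xi,t)$. Both modes carry the same phase $e^{-it\xi^2}$, so the product $e^{-it\xi^2}\cdot\overline{e^{-it\xi^2}}$ coming from the conjugated mode cancels up to an oscillating factor $e^{2it\xi^2}$. We obtain
\[
\partial_t\alpha(\xi,t)=\pm i\,\frac{a^2}{t^{1\pm2ia^2}}\,e^{2it\xi^2}\,\overline{\alpha(-\xi,t)}.
\]
The only feature of the coefficient we use is that its modulus equals $a^2/t$ exactly, since $|t^{\pm 2ia^2}|=1$.

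The key step is a Gronwall argument on the sum $S(t)=|\alpha(\xi,t)|+|\alpha(-\xi,t)|$. The equation gives $\big|\partial_t|\alpha(\pm\xi,t)|\big|\le |\partial_t\alpha(\pm\xi,t)|\le \frac{a^2}{t}|\alpha(\mp\xi,t)|$. To avoid the non-differentiability of $|\cdot|$ at zero, I would instead work with $S_2(t)=|\alpha(\xi)|^2+|\alpha(-\xi)|^2$. Computing its derivative,
\[
\partial_t S_2 = 2\,\mathrm{Re}\big(\overline{\alpha(\xi)}\,\partial_t\alpha(\xi)+\overline{\alpha(-\xi)}\,\partial_t\alpha(-\xi)\big),
\]
and each term is bounded by $\frac{a^2}{t}|\alpha(\xi)||\alpha(-\xi)|$. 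Hence $\partial_t S_2\le \frac{4a^2}{t}|\alpha(\xi)||\alpha(-\xi)|\le\frac{2a^2}{t}S_2$. Gronwall then gives $S_2(t)\le (t/t_0)^{2a^2}S_2(t_0)$, that is,
\[
\sqrt{S_2(t)}\le (t/t_0)^{a^2}\sqrt{S_2(t_0)}.
\]
Since $|\hat z(\xi,t)|=|\alpha(\xi,t)|\le\sqrt{S_2(t)}$ and $\sqrt{S_2(t_0)}\le|\hat z(\xi,t_0)|+|\hat z(-\xi,t_0)|$, this yields exactly the claimed pointwise bound with exponent $a^2$. Getting the exponent $a^2$ rather than $2a^2$ is the one point needing care, and the quadratic quantity handles it cleanly.

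For the Sobolev estimate, I multiply the pointwise bound by $|\xi|^k$, square, and integrate in $\xi$. Since $|\xi|^k=|-\xi|^k$, the term $\hat z(-\xi,t_0)$ contributes the same $\dot H^k$ norm after the change of variables $\xi\mapsto-\xi$. This gives $\|z(t)\|_{\dot H^k}\le 2(t/t_0)^{a^2}\|z(t_0)\|_{\dot H^k}$; the $u(t_0)$ in the statement should be read as $z(t_0)$. This works for every $k\in\mathbb Z$, negative $k$ included.

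I expect no real obstacle here. The only things to check are the sign and phase conventions in the Fourier transform of $\overline z$. They do not affect the argument, since only the modulus $a^2/t$ of the coupling enters.
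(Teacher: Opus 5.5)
Your proof is correct and is essentially the paper's argument: Fourier transform in $s$, the coupling of the modes $\xi$ and $-\xi$ through $\overline{z}$, a differential inequality with coefficient $a^2/t$, and Gronwall. The only difference is that you apply Gronwall to $|\hat z(\xi,t)|^2+|\hat z(-\xi,t)|^2$, while the paper uses the inequalities $\partial_t|\hat z(\pm\xi,t)|\le \frac{a^2}{t}|\hat z(\mp\xi,t)|$ and hence $|\hat z(\xi,t)|+|\hat z(-\xi,t)|$; your choice avoids the paper's division by $|\hat z(\xi,t)|$, though the paper's linear sum also gives the exponent $a^2$ directly, so $2a^2$ was never a real risk.
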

\begin{proof}
    Through the Fourier transform, we have that
    \begin{align*}
        &\partial_t |\hat{z}(\xi,t)|^2 = (\partial_t \hat{z}(\xi,t)) \overline{\hat{z}(\xi,t)} + \hat{z}(\xi,t) (\partial_t\overline{\hat{z}(\xi,t)}) = -i\xi^2(|\hat{z}(\xi,t)|^2 - |\hat{z}(\xi,t)|^2) \nonumber\\
        &\pm ia^2 \left(\frac{\hat{\overline{z}} \cdot \overline{\hat{z}}}{t^{1 \pm 2ia^2}} - \frac{\check{z} \cdot \hat{z}}{t^{1 \mp 2ia^2}} \right) = \pm ia^2 \left(\frac{\overline{\hat{z} \cdot \check{z}}}{t^{1 \pm 2ia^2}} -\frac{\check{z} \cdot \hat{z}}{ t^{1 \mp 2ia^2}} \right) \nonumber\\
        &= \mp 2a^2 \text{Im} \left( \frac{1}{t^{1 \pm 2ia^2}}\overline{\hat{z}(\xi,t) \hat{z}(-\xi,t)} \right),
    \end{align*}
    where we substituted from the linearized equation \eqref{linearized_NLS} in the second equality. Then,
    \begin{align*}
        &\partial_t |\hat{z}(\xi,t)| = \partial_t \sqrt{|\hat{z}(\xi,t)|^2} = \frac{1}{2 |\hat{z}(\xi,t)|} \partial_t |\hat{z}(\xi,t)|^2 \leq \frac{a^2}{|\hat{z}(\xi,t)|} \text{Im} \left( \frac{1}{t^{1 \pm 2ia^2}}\overline{\hat{z}(\xi,t) \hat{z}(-\xi,t)} \right) \nonumber\\
        &\leq \frac{a^2}{t} |\hat{z}(-\xi,t)| \cdot \frac{|\hat{z}(\xi,t)|}{|\hat{z}(\xi,t)|} = \frac{a^2}{t} |\hat{z}(-\xi,t)|,
    \end{align*}
    where we have used the previous bound for $\partial_t|\hat{z}(\xi,t)|^2$, as well as $\text{Im }z \leq |z|$. If we do the same for $\partial_t|\hat{z}(-\xi,t)|^2$, we obtain
    \begin{align*}
        \partial_t |\hat{z}(-\xi,t)| \leq \frac{a^2}{t} |\hat{z}(\xi,t)|.
    \end{align*}
    Putting everything together, we obtain the desired result applying Gronwall inequality.
\end{proof}
As a complement to the previous lemma, we obtain the following one.
\begin{lemma}
    Let $\delta > 0$. If $z$ solves \eqref{linearized_NLS} then for all $\xi \neq 0$ and $0 < t_0 \leq t$,
    \begin{align*}
        |\hat{z}(\xi,t)| \leq \left( c(a) + \frac{c(a,\delta)}{(\xi^2 t_0)^{\delta}} \right)(|\hat{z}(\xi,t_0)| + |\hat{z}(-\xi,t_0)|).
    \end{align*}
    \label{lemma2_GrowthFourier}
\end{lemma}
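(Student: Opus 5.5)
The plan is to pass to the Fourier side, where the linearized equation \eqref{linearized_NLS} decouples into a $2\times 2$ system of ODEs for each fixed $\xi\neq 0$. Then we exploit the time oscillations of the coupling coefficient to beat the crude Gronwall bound of \Cref{lemma1_GrowthFourier}.

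\textbf{Reduction to a profile system.} Taking the Fourier transform of \eqref{linearized_NLS} gives
\[
\partial_t\hat z(\xi,t)=-i\xi^2\hat z(\xi,t)\pm i\,a^2\,t^{-1\mp2ia^2}\,\overline{\hat z(-\xi,t)} .
\]
We filter out the free evolution by setting $g_\pm(t)=e^{it\xi^2}\hat z(\pm\xi,t)$. Then
\[
g_+'(t)=\pm i a^2\,\frac{e^{i\Phi(t)}}{t}\,\overline{g_-(t)},
\qquad \Phi(t)=2t\xi^2\mp 2a^2\log t ,
\]
and $g_-'$ satisfies the symmetric equation. Note that $|g_\pm|=|\hat z(\pm\xi,\cdot)|$, and $G:=(g_+,g_-)$ has $|G'|\le a^2|G|/t$. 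The whole point is that the coupling carries the phase $e^{i\Phi}$, with $\Phi'(t)=2\xi^2\mp 2a^2/t$. So $t\Phi'(t)=2\xi^2t\mp 2a^2$ is large for $t\gg \xi^{-2}$. For $t\ll\xi^{-2}$ it has modulus comparable to $a^2$, except possibly in a bounded window around the stationary point $t\sim a^2/\xi^2$.

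\textbf{Large times, $t\ge t_1:=\max(t_0,C_a\xi^{-2})$.} Here $|t\Phi'|\gtrsim \xi^2 t$. We write $G(t)-G(t_1)$ as $\int e^{i\Phi}\tau^{-1}\overline{G}\,d\tau$ and integrate by parts using $e^{i\Phi}=(i\Phi')^{-1}\partial_\tau e^{i\Phi}$. This produces three contributions:
\begin{itemize}
\item boundary terms of size $a^2|G|/(\xi^2\tau)$;
\item a term with $\partial_\tau\big(1/(\tau\Phi')\big)$, bounded by $\int a^2|G|/(\xi^2\tau^2)\,d\tau$;
\item a term with $\overline{G}'$, bounded by $\int a^4|G|/(\xi^2\tau^2)\,d\tau$.
\end{itemize}
With $M(t)=\sup_{[t_1,t]}|G|$ this gives $M(t)\le |G(t_1)|+C(a)\,M(t)/(\xi^2t_1)$. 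Since $\xi^2t_1\ge C_a$ is large, we can absorb, and obtain $|G(t)|\le c(a)|G(t_1)|$ uniformly for $t\ge t_1$. This accounts for the constant $c(a)$. If $\xi^2t_0\ge C_a$ we are done.

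\textbf{Small times, $t_0\le t\le t_1$, with $\xi^2t_0$ small.} This is where the $(\xi^2t_0)^{-\delta}$ loss comes from, and I expect it to be the main obstacle. \Cref{lemma1_GrowthFourier} alone gives the factor $(t_1/t_0)^{a^2}\sim(\xi^2t_0)^{-a^2}$. That suffices when $\delta\ge a^2$, but not for arbitrary $\delta>0$.

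To gain, we integrate by parts once more on $[t_0,t_1]$, now using the logarithmic oscillation $e^{\mp 2ia^2\log\tau}$. There $|\tau\Phi'|\gtrsim a^2$ away from a window $\tau\in[\tfrac12,2]\cdot a^2\xi^{-2}$ of bounded logarithmic length. On that window we simply use Gronwall, losing a constant $c(a)$. Off the window, the boundary terms are $O(|G|)$, and the remainder from $\overline G'$ is of the form $a^2\int |G|\,d\tau/\tau$. That is again logarithmically divergent, but its coefficient is now second order: after substituting the equation for $\overline{G}'$, it carries a product of two phases, $e^{i\Phi}\overline{e^{i\Phi}}$ against $\overline{G}$.

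So the first attempt is to iterate the integration by parts $N$ times. Each iteration should push the non-oscillating, log-divergent remainder to a coefficient $\epsilon_N(a)\to0$ times $\int|G|\,d\tau/\tau$, plus oscillatory terms bounded by $c_N(a)\sup|G|$. Then Gronwall in the variable $\log\tau$ gives
\[
|G(t_1)|\le c(a,N)\,(t_1/t_0)^{\epsilon_N(a)}\,|G(t_0)| .
\]
Choosing $N$ with $\epsilon_N(a)\le\delta$ yields the factor $c(a,\delta)(\xi^2t_0)^{-\delta}$.

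Combining the two regimes, and recalling that $|\hat z(\xi,t_0)|+|\hat z(-\xi,t_0)|\simeq |G(t_0)|$, gives the stated estimate. The delicate points to check are the following:
\begin{itemize}
\item whether the iterated phases genuinely avoid resonance, which is where the sign $\pm$ and the stationary point enter;
\item that the constants stay uniform as $\xi^2t_0\to0$.
\end{itemize}
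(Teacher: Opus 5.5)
Your reduction to $g_\pm=e^{it\xi^2}\hat z(\pm\xi,t)$ is correct. Your large-time step ($\xi^2t\gtrsim C_a$) also works; it plays the role of the paper's diagonalization with $\alpha(t)=\sqrt{1-2a^2/t}$.

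\textbf{The gap.} It is in the small-time step, which is the real content of the lemma: iterating integration by parts cannot make the non-oscillating coefficient small. Compute the term you get from $\overline{G}'$. Since $g_-'=\pm i a^2\tau^{-1}e^{i\Phi}\overline{g_+}$, you have $e^{i\Phi}\,\overline{g_-'}=\mp i a^2\tau^{-1}g_+$. The phases cancel identically, and the term equals
\[
i a^4\int_{t_0}^{t}\frac{g_+(\tau)}{\tau\,(2\xi^2\tau\mp 2a^2)}\,d\tau\;\approx\;\mp\frac{i a^2}{2}\int_{t_0}^{t}g_+(\tau)\,\frac{d\tau}{\tau}\qquad(\xi^2\tau\ll a^2).
\]
This is not second order. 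In this regime $|\tau\Phi'|\approx 2a^2$ is itself of size $a^2$, so the coefficient is $a^2/2$, the same order as the original coupling. Likewise your boundary terms are $\approx\frac12|G|$, not small. Because nothing oscillates in this term any more, a further integration by parts has nothing to act on. The ratio $a^2/|\tau\Phi'|\to\frac12$ is independent of $a$ and $N$, so there is no $\epsilon_N(a)\to0$. Gronwall in $\log\tau$ with absolute values still gives an exponent of order $a^2$. This holds for both signs, so it is not caused by the stationary point; it is exact resonance.

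\textbf{Why, and the fix.} The obstruction is structural: in this regime the growth is genuine. At $\xi=0$, $\widehat{\mathrm{Re}\,w}(0,t)$ is conserved while $\widehat{\mathrm{Im}\,w}(0,t)$ grows like $\pm2a^2\log t$, as computed in the paper. So in the variable $\log t$ the limiting dynamics is nilpotent, and no normal form can remove it. What you must exploit is this triangular structure, not oscillation.

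Undo the gauge, $z=e^{\mp ia^2\log t}w$, and set $Y_\xi(t)=\widehat{\mathrm{Re}\,w}(\xi,t/\xi^2)$, $Z_\xi(t)=\widehat{\mathrm{Im}\,w}(\xi,t/\xi^2)$. Then $Y_\xi'=Z_\xi$ and $Z_\xi'=(-1\pm2a^2/t)Y_\xi$, so the singular coefficient sits in one off-diagonal entry only. For $\sigma_\varepsilon=\varepsilon^{-1}|Y_\xi|^2+\varepsilon|Z_\xi|^2$ with $\varepsilon<1$ one gets $\sigma_\varepsilon'\le(\varepsilon^{-1}-\varepsilon+2a^2\varepsilon/t)\sigma_\varepsilon$, hence
\[
\sigma_\varepsilon(t)\le\sigma_\varepsilon(\xi^2t_0)\,e^{(t-\xi^2t_0)/\varepsilon}\Big(\frac{t}{\xi^2t_0}\Big)^{2a^2\varepsilon}.
\]
On the bounded rescaled interval $\xi^2t_0\le t\le\max\{4a^2,\frac12\}$ the exponential factor is a constant. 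Choosing $\varepsilon=\delta/a^2$ (for $\delta<a^2$; otherwise Lemma \ref{lemma1_GrowthFourier} already suffices, as you noted) gives exactly the loss $c(a,\delta)(\xi^2t_0)^{-\delta}$. Your large-time argument then finishes the proof.
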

\begin{proof}
    We give a sketch of the proof. It is better to go back to the unknown perturbation $w$ from \eqref{PerturbedNLS}, so that we write
    \begin{align*}
        z(s,t) = e^{\mp i a^2 \log(t)} w(s,t),
    \end{align*}
    for $w$ a solution of
    \begin{align}
        iw_t + w_{ss} \pm \frac{a^2}{t} (w + \overline{w}) = 0.
        \label{perturbedNLS_1}
    \end{align}
    We define, for $\xi \neq 0$,
    \begin{align}
        Y_{\xi}(t) = \widehat{\text{Re}w} (\xi,\frac{t}{\xi^2}), \quad Z_{\xi}(t) = \widehat{\text{Im}w} (\xi,\frac{t}{\xi^2}).
        \label{YxiZxi}
    \end{align}
    Note that $\widehat{w}(\xi,t/\xi^2) = Y_{\xi}(t) + i Z_{\xi}(t)$. On the other hand, applying Fourier transform on the equation \eqref{perturbedNLS_1},
    \begin{align*}
        &i\partial_t \widehat{w} - \xi^2 \widehat{w} \pm \frac{a^2}{t} (\widehat{w} + \widehat{\overline{w}}) = 0 \nonumber\\
        &\Longrightarrow i \partial_t \widehat{w} = \xi^2 (\widehat{\text{Re}w} + i\widehat{\operatorname{Im}w}) \mp \frac{2a^2}{t} \widehat{\text{Re}w}  \nonumber\\
        &\Longrightarrow -\partial_t\widehat{\text{Im}w} + i\partial_t\widehat{\text{Re}w} = (\xi^2 \mp \frac{2a^2}{t}) \widehat{\text{Re}w} + i\xi^2 \widehat{\text{Im}w}.
    \end{align*}
    Evaluate this equality with $(\xi,t)$ substituted by $(\xi,t/\xi^2)$, so that
    \begin{align*}
        \partial_t\widehat{\text{Im}w} = -\xi^2(1 \mp \frac{2a^2}{t \xi^2}) \widehat{\text{Re}w}; \quad \partial_t \widehat{\text{Re}w} = \xi^2 \widehat{\text{Im}w}.
    \end{align*}
    Therefore,
        \begin{align}
        &Y'_{\xi}(t) = \frac{\xi^2}{\xi^2}\widehat{\text{Im}w} (\xi,\frac{t}{\xi^2}) = Z_{\xi}(t), \nonumber\\
        &Z'_{\xi}(t) = -\frac{\xi^2}{\xi^2} \left(1 \mp \frac{2a^2}{t}\right) \widehat{\text{Re}w}(\xi,\frac{t}{\xi^2}) = \left(-1 \pm \frac{2a^2}{t}\right) Y_{\xi}(t).
        \label{DerYxiDerZxi}
    \end{align}
    Let $\varepsilon > 0$. We define
    \begin{align*}
        \sigma_{\varepsilon}(t) = \frac{1}{\varepsilon}|Y_{\xi}(t)|^2 + \varepsilon |Z_{\xi}(t)|^2.
    \end{align*}
    At this point, we only consider the positive sign case (i.e.\ the focusing case), since it is more delicate. We have that
    \[
        \sigma'_{\varepsilon}(t) \leq \left( \frac{1}{\varepsilon} - \varepsilon + \varepsilon\frac{2a^2}{t}  \right) \sigma_{\varepsilon}(t).
    \]
     Therefore    
     \[
      \log(\sigma_{\varepsilon}(t) / \sigma_{\varepsilon}(\Tilde{t}_0)) \leq \left( \frac{1}{\varepsilon} - \varepsilon \right) (t - \Tilde{t}_0) + 2a^2\varepsilon \log(t/\Tilde{t}_0),
     \]   
        which implies
     \[
     |\sigma_{\varepsilon}(t)| \leq |\sigma_{\varepsilon}(\Tilde{t}_0)| e^{(\frac{1}{\varepsilon} - \varepsilon) (t - \Tilde{t}_0)}  \left( \frac{t}{\Tilde{t}_0} \right)^{2a^2\varepsilon}.
     \]   
    After integration, an appropriate choice of $\varepsilon$ gives the estimate
    \begin{align*}
        |Y_{\xi}(t)|^2 + |Z_{\xi}(t)|^2 \leq \frac{t^{2a^2}}{\Tilde{t}_0^{2a^2}} (|Y_{\xi}(\Tilde{t}_0)|^2 +  |Z_{\xi}(\Tilde{t}_0)|^2) \leq \frac{c(a,\delta)}{\Tilde{t}_0^{2\delta}} (|Y_{\xi}(\Tilde{t}_0)|^2 +  |Z_{\xi}(\Tilde{t}_0)|^2),
    \end{align*}
    for $0 < \tilde t_0\le t \le \max \{4a^2, \frac12 \}$.
    For the case $4a^2 \leq \Tilde{t}_0 \leq  t$, the result follows by diagonalizing the system. In this region, the potential $1/t$ has to be taken into consideration and a logarithmic perturbation of the phase at infinity is necessary. After some manipulations (details can be found in \cite{BV12_1}, Lemma 2.2), it can be proved the existence of $(\dot{Y}^+_{\xi},\dot{Z}^+_{\xi})$ such that
    \begin{align*}
        &\left| \left( \frac{1}{2} Y_\xi - \frac{i}{2\alpha} Z_{\xi} \right) - e^{i \phi(t)} \dot{Y}^+_{\xi} \right| + \left| \left( \frac{1}{2} Y_\xi + \frac{i}{2\alpha} Z_{\xi} \right) - e^{i \phi(t)} \dot{Z}^+_{\xi} \right| \nonumber\\
        &\leq \frac{1}{t} \left( c(a) + \frac{c(a,\delta)}{\Tilde{t}_0^{\delta}} \right)(|Y_{\xi}(\Tilde{t}_0)| + |Z_{\xi}(\Tilde{t}_0)|)
    \end{align*}
    with
    \begin{align*}
        \alpha(t) = \sqrt{1-\frac{2a^2}{t}},
    \end{align*}
    and
    \begin{align*}
        \phi(t) = t - a^2 \log(t) - \int_t^{\infty} \left(\alpha(s) - 1 + \frac{a^2}{s}\right)ds.
    \end{align*}
    Notice that
    \begin{align*}
        \phi(t) = t - a^2 \log(t) + \mathcal{O}(1/t).
    \end{align*}
    Then, if we define $u_+$ as
    \begin{align*}
        \dot{Z}_{\xi}^+ = e^{i a^2 \log(\xi^2)} \hat{u}_+(\xi),
    \end{align*}
    we have that
    \begin{align*}
        |\hat{z}(\xi,t) - e^{it\xi^2 + ia^2\log(\xi^2)} \hat{u}_+(\xi) | \leq \frac{1}{t \xi^2} \left( c(a) + \frac{c(a,\delta)}{(\xi^2t_0)^{\delta}} \right)   (|\hat{z}(\xi,t_0)| + |\hat{z}(-\xi,t_0)|).
    \end{align*}
\end{proof}\noindent
From \Cref{lemma1_GrowthFourier} we expect that for an initial data in $H^s$ the corresponding solution has a growth of $\lVert z(t) \rVert_{H^s}$. To avoid this issue, we shall start with an initial data in a more restrictive space. Given $t_0$ fixed, we define the $X_{t_0}$ as the space of functions $f(s)$ such that
\begin{align*}
    ||f||_{X_{t_0}} := t_0^{-1/4} ||f||_{L^2} + t_0^{-1/2} ||\hat{f}||_{L^{\infty}(\xi^2 \leq t_0^{-1})} < \infty
\end{align*}
and $Y_{t_0}$ as the space of functions $g(s,t)$ such that
\begin{align*}
    ||g||_{Y_{t_0}} := \sup_{t \geq t_0} \left( t_0^{-1/4} ||g||_{L^2} + \left( \frac{t_0}{t} \right)^{a^2} t_0^{-1/2} ||\hat{g}(t)||_{L^{\infty}(\xi^2 \leq \frac{1}{t})} \right) < \infty.
\end{align*}

\begin{proposition}\label{proposition_LinearPseudconfEq_FreeSchrod}
    Let $t_0 \geq 1$ and $z_0 \in X_{t_0}$.
    Then, there exists a unique global solution $z \in Y_{t_0}$ of equation \eqref{linearized_NLS} with $z_0$ initial data at time $t_0$, and
    \[
    \lVert z \rVert_{Y_{t_0}} \le C(a) \lVert z_0 \rVert_{X_{t_0}}.
    \]
    More precisely,
    \begin{align*}
        \sup_{t \geq t_0} \frac{1}{t_0^{1/4}} ||z(t)||_{L^2(\mathbb{R})} &\leq C(a) \lVert z_0 \rVert_{X_{t_0}}, \nonumber\\
        \sup_{t \geq t_0} \left( \frac{t_0}{t} \right)^{a^2} \frac{1}{\sqrt{t_0}} || \hat{z}(\xi,t) ||_{L^{\infty}(\xi^2\leq 1)} &\leq C \frac{1}{\sqrt{t_0}} || \hat{z}_0(\xi) ||_{L^{\infty}(\xi^2\leq 1)},
    \end{align*}
    where $C$ and $C(a)$ are positive constants.
\end{proposition}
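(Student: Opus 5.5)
The equation \eqref{linearized_NLS} is linear and, once we pass to the Fourier variable, it decouples into a family of $2\times 2$ non-autonomous ODE systems, one for each pair $(\hat z(\xi,t),\hat z(-\xi,t))$. So I would obtain existence and uniqueness of a global solution first, directly from ODE theory frequency by frequency. The coefficients are bounded on $[t_0,\infty)$ since $t_0\ge 1$. Lemma \ref{lemma1_GrowthFourier} gives the a priori bound $|\hat z(\xi,t)|\le (t/t_0)^{a^2}(|\hat z_0(\xi)|+|\hat z_0(-\xi)|)$. Hence for every finite $t$ the solution is well defined in $L^2$ (Plancherel), and uniqueness follows from linearity together with the same bound applied to the difference of two solutions. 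What remains is the quantitative estimate in the $Y_{t_0}$ norm.

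\textbf{Low frequencies.} The $L^\infty$ part of the norm is immediate from Lemma \ref{lemma1_GrowthFourier}. For $\xi^2\le 1/t\le 1/t_0$ we get
\[
\Big(\frac{t_0}{t}\Big)^{a^2}|\hat z(\xi,t)|\le 2\|\hat z_0\|_{L^\infty(\xi^2\le t_0^{-1})},
\]
and taking the supremum gives the second displayed estimate with $C=2$. The only point to check is that the region $\xi^2\le 1/t$ sits inside $\xi^2\le 1/t_0$, which holds because $t\ge t_0$.

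\textbf{The $L^2$ bound.} Here the growth factor $(t/t_0)^{a^2}$ cannot be afforded, so I would split frequencies at the scale $\xi^2 t_0\sim 1$.
\begin{itemize}
\item For $\xi^2 t_0\ge 1$, use Lemma \ref{lemma2_GrowthFourier} with some fixed $\delta$, say $\delta=1/4$. It gives the uniform bound $|\hat z(\xi,t)|\le (c(a)+c(a,\delta))(|\hat z_0(\xi)|+|\hat z_0(-\xi)|)$. Squaring and integrating bounds this part of $\|z(t)\|_{L^2}$ by $C(a)\|z_0\|_{L^2}$.
\item For $\xi^2t_0<1$, Lemma \ref{lemma2_GrowthFourier} still applies, but its constant is $(\xi^2t_0)^{-\delta}$. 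This part is therefore controlled by
\[
\int_{\xi^2<1/t_0}\frac{c(a,\delta)^2}{(\xi^2t_0)^{2\delta}}\,\|\hat z_0\|^2_{L^\infty(\xi^2\le t_0^{-1})}\,d\xi .
\]
This is integrable at the origin as long as $4\delta<1$, so I would take $\delta=1/8$ rather than $1/4$. The change of variables $\eta=\sqrt{t_0}\,\xi$ shows the integral equals $C(a)\,t_0^{-1/2}\|\hat z_0\|^2_{L^\infty}$.
\end{itemize}
Taking square roots, the low-frequency part of $\|z(t)\|_{L^2}$ is bounded by $C(a)\,t_0^{-1/4}\|\hat z_0\|_{L^\infty}$. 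Multiplying by the weight $t_0^{-1/4}$ gives $C(a)\,t_0^{-1/2}\|\hat z_0\|_{L^\infty}$, which is one of the two terms of $\|z_0\|_{X_{t_0}}$. The high-frequency part gives $C(a)\,t_0^{-1/4}\|z_0\|_{L^2}$, the other term. Summing the two yields $\sup_t t_0^{-1/4}\|z(t)\|_{L^2}\le C(a)\|z_0\|_{X_{t_0}}$, and combined with the low-frequency estimate this gives $\|z\|_{Y_{t_0}}\le C(a)\|z_0\|_{X_{t_0}}$.

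The main obstacle is that everything rests on Lemma \ref{lemma2_GrowthFourier} being uniform in $t$ in the regime $\xi^2t\gg1$. Its proof was only sketched: it needs the rescaled variables $Y_\xi,Z_\xi$, the diagonalization, and the logarithmic phase $\phi(t)$ for large rescaled times. I would rely on that lemma as stated. I would also double-check that the constant does not secretly depend on $\xi$ through the diagonalization threshold $4a^2$, which can be absorbed into $c(a)$.
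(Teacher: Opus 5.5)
Your proposal is correct and is essentially the argument the notes defer to \cite{BV12_1} (the notes state the proposition without proof). Lemma~\ref{lemma1_GrowthFourier} gives the weighted low-frequency $L^\infty$ bound pointwise on any symmetric frequency set, so it covers both the $Y_{t_0}$ form and the $\xi^2\le 1$ form of the displayed estimate. Lemma~\ref{lemma2_GrowthFourier}, with the split at $\xi^2 t_0=1$ and some $\delta<1/4$, gives the uniform $L^2$ bound; the low frequencies produce exactly the $t_0^{-1/2}\|\hat z_0\|_{L^\infty(\xi^2\le t_0^{-1})}$ part of $\|z_0\|_{X_{t_0}}$, and the $c(a)$ term you dropped from the low-frequency integrand is harmless since $(\xi^2 t_0)^{-\delta}\ge 1$ there.
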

All in all, we get that the solution $z(t)$ of the linearized equation behaves as a free Schrödinger wave, as the following Proposition shows.
\begin{proposition}
    Let $t_0 \geq 1$, $\delta \in [0,\frac{1}{4})$ and $z_0$ a function such that
    \begin{align*}
        \frac{1}{t_0^{1/4}} ||z_0||_{L^2(\mathbb{R})} + \frac{t_0^{\delta}}{\sqrt{t_0}} || \hspace{1mm}|\xi|^{2 \delta} \hat{z}_0(\xi) ||_{L^{\infty}(\xi^2\leq 1)} < \infty.
    \end{align*}
    Then, the unique global solution $z$ that satisfies
    \begin{align*}
         \sup_{t \geq t_0} \left( \frac{1}{t_0^{1/4}} ||z(t)||_{L^2} + \left( \frac{t_0}{t} \right)^{a^2} \frac{t_0^{\delta}}{\sqrt{t_0}} |||\xi|^{2 \delta} \hspace{1mm} \hat{z}(t)||_{L^{\infty}(\xi^2 \leq 1)} \right) < \infty,
    \end{align*}
    with $z_0$ as initial data at time $t_0$, scatters in $L^2$. More precisely, there exists a scattering state $z_+ \in L^2(\mathbb{R})$ such that
    \begin{align*}
        &||z(t) - e^{i(t-t_0)\partial_{ss}} z_+ ||_{L^2(\mathbb{R})} \nonumber\\
        &\leq C(a,\delta) \frac{(1 + \log t_0)t_0^{\frac{1}{2}-(\delta + \delta_0)}}{t^{\frac{1}{4} - (\delta + \delta_0)}} \left( \frac{1}{t_0^{1/4}} ||z_0||_{L^2(\mathbb{R})} + \frac{t_0^{\delta}}{\sqrt{t_0}} || \hspace{1mm}|\xi|^{2 \delta} \hat{z}_0(\xi) ||_{L^{\infty}(\xi^2\leq 1)}  \right) \nonumber\\ &\rightarrow 0, \text{ for } t \rightarrow \infty,
    \end{align*}
    for any $\delta_0 \in (0,\frac{1}{4} - \delta)$, and where $C(a,\delta)$ is a nonnegative constant depending on $a$ and $\delta$.
    \label{proposition_LinearPseudconfEq_ScatteringState}
\end{proposition}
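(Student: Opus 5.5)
The plan is to prove scattering by a Cook-type argument applied to the profile $e^{-it\partial_{ss}}z(t)$. We show it is Cauchy in $L^2$ as $t\to\infty$ and set
\[
z_+ = \lim_{t\to\infty} e^{-i(t-t_0)\partial_{ss}} z(t).
\]
Duhamel for \eqref{linearized_NLS} gives, for $t_0\le t_1<t_2$,
\[
e^{-i(t_2-t_0)\partial_{ss}}z(t_2)-e^{-i(t_1-t_0)\partial_{ss}}z(t_1)=\pm i\int_{t_1}^{t_2} e^{-i(\tau-t_0)\partial_{ss}}\frac{a^2}{\tau^{1\pm 2ia^2}}\overline{z}(\tau)\,d\tau .
\]
The rate in the statement then follows by letting $t_2\to\infty$ and estimating the tail integral from $t$ to $\infty$.

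To bound this tail we split $z$ on the Fourier side into low frequencies $\xi^2\le 1/\tau$ and high frequencies $\xi^2\ge 1/\tau$, using a smooth cutoff adapted to each $\tau$.

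\textbf{Low frequencies.} We use the weighted $L^\infty$ control. By the a priori bound in the hypothesis, propagated by \Cref{lemma1_GrowthFourier} and \Cref{lemma2_GrowthFourier},
\[
|\hat z(\xi,\tau)|\lesssim (\tau/t_0)^{a^2}t_0^{1/2-\delta}|\xi|^{-2\delta}.
\]
Hence the $L^2$ norm over $\xi^2\le1/\tau$ is bounded by $(\tau/t_0)^{a^2}t_0^{1/2-\delta}\tau^{-1/4+\delta}$. Integrating against $d\tau/\tau$ on $(t,\infty)$ converges provided $a^2$ is absorbed into a $\delta_0$ loss, for instance by using \Cref{lemma2_GrowthFourier} instead of the crude $t^{a^2}$ growth. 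This produces the factor $t^{-(1/4-\delta-\delta_0)}$.

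\textbf{High frequencies.} Here $\xi^2\tau\ge1$, and we exploit time oscillations as in the proof of \Cref{thm:GoodUBEstability}. The profile $e^{-i\tau\partial_{ss}}\overline{z}(\tau)$ carries a phase $e^{2i\tau\xi^2}$ relative to a slowly varying amplitude. This amplitude is essentially $\hat z_+$, up to the $e^{-ia^2\log}$ correction, with an error $O(1/(\tau\xi^2))$ given by the asymptotic expansion at the end of the proof of \Cref{lemma2_GrowthFourier}. We integrate by parts in $\tau$, gaining $1/(\tau\xi^2)$. The boundary term at $\tau=t$ and the remaining integral involve $\partial_\tau$ of the amplitude, which is controlled by \eqref{linearized_NLS} itself as $\frac{a^2}{\tau}|\hat z(-\xi)|$. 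This gives a bound by $\int_t^\infty \tau^{-2}\xi^{-2}$. On the region $\xi^2\ge 1/\tau$ this is summable in $L^2$ via \Cref{proposition_LinearPseudconfEq_FreeSchrod}, yielding a gain of order $t^{-1/4}$ times $t_0^{1/4}\|z\|_{Y}$. The $\log t_0$ factor appears from the dyadic summation near the cutoff $\xi^2\sim 1/\tau$.

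\textbf{Main obstacle.} The hardest step is the transition region $\xi^2\tau\sim1$, where neither the oscillation gain nor the smallness of the low-frequency set is effective. There is also the growth $(\tau/t_0)^{a^2}$, which must be converted into the small loss $\delta_0$. For this we plan to use \Cref{lemma2_GrowthFourier} with exponent $\delta$ replaced by $\delta+\delta_0$. Its bound $c(a)+c(a,\delta)(\xi^2t_0)^{-\delta}$ is uniform in $t$, which removes the $t^{a^2}$ growth at the price of the $|\xi|^{-2\delta}$ weight already assumed on the data. The remaining bookkeeping then collects the stated powers $t_0^{1/2-(\delta+\delta_0)}t^{-(1/4-(\delta+\delta_0))}$.
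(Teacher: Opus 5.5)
\textbf{Different route, but the high-frequency step does not close as written.} The notes state this proposition without proof. The natural argument is the one the expansion at the end of the proof of \Cref{lemma2_GrowthFourier} is set up for: read $z_+$ directly off that expansion and split at $\xi^2=1/t$. On $\xi^2\ge 1/t$ one uses the $\frac{1}{t\xi^2}$ error. On $\xi^2\le 1/t$ one bounds $\hat z(t)$ and $\hat z_+$ separately by the time-uniform bound of the same lemma.

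Your Cook-type route is a genuine alternative. After integrating by parts in $\tau$ it only needs two things: the bound $|\partial_\tau(e^{i\tau\xi^2}\hat z(\xi,\tau))|\le a^2\tau^{-1}|\hat z(-\xi,\tau)|$, which comes from the equation, and a time-uniform pointwise bound on $\hat z$. It then recovers the existence of $z_+$ without the diagonalization. For that reason you should not also invoke the expansion inside the high-frequency step: if you use it, the Duhamel argument is superfluous.

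\textbf{Where the high-frequency bound fails.} After the integration by parts you must control $\int_t^\infty \tau^{-2}\|\xi^{-2}\hat z(\tau)\|_{L^2(\xi^2\ge 1/\tau)}\,d\tau$.
\begin{itemize}
\item The $L^2$ bound of \Cref{proposition_LinearPseudconfEq_FreeSchrod} only gives $\|\xi^{-2}\hat z(\tau)\|_{L^2(\xi^2\ge1/\tau)}\le \tau\|z(\tau)\|_{L^2}$. The integrand is then $\tau^{-1}\|z(\tau)\|_{L^2}$, and the integral diverges logarithmically.
\item The pointwise part of the $Y_{t_0}$ norm covers only $\xi^2\le 1/\tau$, grows like $(\tau/t_0)^{a^2}$, and carries no $|\xi|^{2\delta}$ weight. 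Even the $\xi^2\le1$ bound in that proposition would force $a^2<\frac14$.
\end{itemize}
Everything is decided in the band $\xi^2\sim 1/\tau$, where the only smallness available is pointwise. So the gain $t^{-1/4}t_0^{1/4}\|z\|_{Y}$ you announce is not available. That band contributes at the same order, $t_0^{1/2-\delta-\delta_0}t^{-(1/4-\delta-\delta_0)}$, as the low frequencies.

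\textbf{The fix.} Apply \Cref{lemma2_GrowthFourier} pointwise on all of $\xi^2\le 1$, with exponent $\delta_0$ (not $\delta+\delta_0$):
\[
\sup_{\tau\ge t_0}|\hat z(\pm\xi,\tau)|\le C(a,\delta_0)\bigl(1+(\xi^2t_0)^{-\delta_0}\bigr)\,t_0^{\frac12-\delta}|\xi|^{-2\delta}K,\qquad \xi^2\le 1 .
\]
Here $K$ is the quantity in the hypothesis, and the $L^2$ norm of $z_0$ is used only on $\xi^2\ge1$.
\begin{itemize}
\item The price for removing the $t^{a^2}$ growth is the extra factor $t_0^{-\delta_0}|\xi|^{-2\delta_0}$ on top of the data weight. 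It is not the data weight itself.
\item With exponent $\delta+\delta_0$ you would get $|\xi|^{-4\delta-2\delta_0}$, which is not square integrable at $0$ once $8\delta+4\delta_0\ge 1$.
\item The condition $\delta_0<\frac14-\delta$ is exactly what keeps $|\xi|^{-2\delta-2\delta_0}$ square integrable.
\end{itemize}

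With this bound the transition region is harmless. Cut sharply at $\tau=1/\xi^2$.
\begin{itemize}
\item The piece $\int_t^{1/\xi^2}$ that is not integrated by parts appears only for $\xi^2<1/t$. It costs $a^2\log\frac{1}{t\xi^2}$ times the uniform bound.
\item The boundary term at $\tau=\max(t,1/\xi^2)$ and the remainder cost $\min\bigl(1,\frac{1}{t\xi^2}\bigr)$ times it.
\end{itemize}
Rescaling $\xi=\eta/\sqrt t$ then gives the stated rate. It does so even without the factor $1+\log t_0$; your dyadic remark does not really explain that factor, but it is harmless since it is at least $1$.
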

Our next goal is to construct global solutions to the nonlinear equation and caracterize its asymptotics. 

\begin{remark}
  In order to construct the nonlinear solution we will also need to prove the Strichartz estimate for the linearized equation. We use Duhamel's integral formula and the asymptotic behaviour of the linearized equation. Namely, we can write the solution of \eqref{linearized_NLS}, for $t_0=1$, as 
    \begin{align*}
        &z(t) = e^{i (t-1) \partial_{ss}} z(1) \pm ia^2 \int_1^t e^{i(t-\tau)\partial_{ss}} \frac{\overline{z(\tau) - e^{i(\tau - 1)\partial_{ss}}u_+}}{\tau^{1\pm 2ia^2}} d\tau \nonumber\\
        & \pm ia^2 \int_1^t e^{i(t-2\tau)\partial_{ss}} \frac{e^{i \partial_{ss}} \overline{u}_+}{\tau^{1 \pm 2ia^2}} d\tau = (i) + (ii) + (iii).
    \end{align*}
    Recalling the Strichartz estimates \eqref{EstStrichartzSchrod} from the linear Schrödinger flow  we have that $(p,q)= (4,\infty)$ is a 1d admissible pair. This, along with the estimates from propositions \eqref{proposition_LinearPseudconfEq_FreeSchrod} and \eqref{proposition_LinearPseudconfEq_ScatteringState}, gives us the upper bound of the $L^4((1,\infty),L^{\infty}(\mathbb{R}))$ norm of (i) and (ii). The third term is more involved (see \cite{BV12}). 
\end{remark}
Bounds from \Cref{lemma2_GrowthFourier} and \Cref{proposition_LinearPseudconfEq_FreeSchrod} suggest a possible growth of the zero Fourier mode. In fact, this can be formally seen as follows, for both cases focusing and defocusing. 
Take $w$ solution of \eqref{perturbedNLS_1}, and denote
\begin{align*}
    y_1(t) = \text{Re} \int_{-\infty}^{\infty} w(s,t) ds, \quad y_2(t) = \text{Im} \int_{-\infty}^{\infty} w(s,t) ds,
\end{align*}
which are the real and imaginary parts of the zero Fourier mode of $w(s,t)$. Thus, integrating with respect to $s$ and separating real and imaginary parts in equation \eqref{perturbedNLS_1},
\begin{align*}
    iy'_1(t) - y'_2(t) = \mp \frac{2a^2}{t} y_1(t).
\end{align*}
Hence $y_1(t)$ is constant, so we can state that $y_1(t) = y_1(1)$ and, therefore, $y_2(t) = \pm 2a^2y_1(1) \log(t)$, what yields the enunciated growth. In order to give a rigorous prove, it is enough to assume decay of $w(s,1)$ (see Appendix B of \cite{BV12}).
\\
In the focusing case, it is also possible to give conditions on $w(s,1)$ such that the function $Z_{\xi}(t)$ from \eqref{YxiZxi} has a logarithmic growth in $t$. 
 In fact, if we take $Y_{\xi}(\Tilde{t}_0) = Z_{\xi}(\Tilde{t}_0) = 1$ and $0 < \Tilde{t}_0 \leq t \leq \min\{a^2,1/2\}$ then, from \eqref{DerYxiDerZxi}, $Z_{\xi}'$ will be positive as long as $-1 + \frac{2a^2}{t} > 0$. In that case, $Y_{\xi}(t) > 1$ and $Z_{\xi}(t) > -1 + \frac{2a^2}{t}$. Using that $Y'_{\xi}(t) = Z_{\xi}(t)$, then $Z_{\xi}(t)$ has a logarithmic growth.
\\
We will see later that under generic conditions on $v(1)$, the zero-Fourier mode of $v(t)$ also grows logarithmically in $t$, in both situations focusing and defocusing.
\\\\
Once the solution of the linearized equation is obtained, we can solve the nonlinear equation \eqref{NLSpconformal}. 
From now on, and for simplicity, we will take $t_0=1$.
\begin{theorem}
    Let $a > 0$, and $v(1) \in X_1$ small with respect to $a$. Then, there exists a unique global solution $v$ of \eqref{NLSpconformal} with
    \begin{align*}
        v - a \in Z = Y_1 \cap L^4((1,\infty),L^{\infty}(\mathbb{R})),
    \end{align*}
    $v(1)$ the initial datum at $t=1$, and
    \begin{align*}
        v = a + e^{\pm i a^2 \log(t)} u, \quad ||u||_Z \leq c(a) ||u(1)||_{X_1}.
    \end{align*}
    Moreover, $u$ scatters in $L^2$, i.e. there exists $f_+ \in L^2(\mathbb{R})$ such that
    \begin{align*}
        || u(t) - e^{i(1-t)\partial_{ss}} f_+ ||_{L^2(\mathbb{R})} \leq \frac{c(a,\delta)}{t^{\frac{1}{4}- \delta}} ||u(1)||_{X_1},
    \end{align*}
    for any $0 < \delta < \frac{1}{4}$. Finally, the asymptotic state $f_+$ satisfies, for all $\xi^2 \leq 1$ the estimate
    \begin{align*}
        |\hat{f}_+(\xi)| |\xi|^{2\delta} \leq c(a,\delta) ||u(1)||_{X_1}.
    \end{align*}
    \label{thm_Xt0Yt0}
\end{theorem}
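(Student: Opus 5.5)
We will obtain the solution by a perturbative fixed-point argument built on the linear theory of \Cref{proposition_LinearPseudconfEq_FreeSchrod} and \Cref{proposition_LinearPseudconfEq_ScatteringState}.

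\textbf{Reformulation.} Write $v=a+e^{\pm ia^2\log t}u$, so that $u$ solves \eqref{NLSIntFactor}:
\[
iu_t+u_{ss}\pm\frac{a^2}{t^{1\pm 2ia^2}}\overline u\pm\frac1t F(u,\overline u)=0 .
\]
Let $S(t,\tau)$ denote the (real-linear) solution operator of the linearized equation \eqref{linearized_NLS}, taking data at time $\tau$ to time $t$. We then seek a fixed point of
\[
\Phi(u)(t)=S(t,1)u(1)\pm i\int_1^t S(t,\tau)\Big(\frac{F(u,\overline u)(\tau)}{\tau}\Big)\,d\tau
\]
in a small ball of $Z=Y_1\cap L^4((1,\infty),L^\infty)$.

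\textbf{Linear term.} By \Cref{proposition_LinearPseudconfEq_FreeSchrod} we have $\|S(\cdot,1)u(1)\|_{Y_1}\le C(a)\|u(1)\|_{X_1}$. The $L^4L^\infty$ bound comes from the Strichartz estimate for the linearized flow described in the Remark preceding the theorem.

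\textbf{Duhamel term.} Apply the same bounds with initial time $\tau$, using Minkowski in $\tau$. The price is the weights in $X_\tau$: the factor $\tau^{-1/4}$ on the $L^2$ part and $\tau^{-1/2}\|\hat g\|_{L^\infty(\xi^2\le 1/\tau)}$ on the low-frequency part. Thus we must bound
\[
\int_1^\infty \Big(\tau^{1/4}\Big\|\frac{F}{\tau}\Big\|_{L^2}+\tau^{1/2}\Big\|\widehat{\tfrac{F}{\tau}}\Big\|_{L^\infty}\Big)d\tau .
\]
The $L^\infty_\xi$ norm is controlled by the $L^1_s$ norm, and the terms of $F$ are handled as follows.
\begin{itemize}
\item Quadratic terms $a|u|^2$, $au^2$: $\|u^2\|_{L^1}\le\|u\|_{L^2}^2$ is only bounded, so we get $\int\tau^{-1/2}d\tau$, which diverges. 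To fix this, interpolate $L^1\cap L^\infty$ using the $L^4L^\infty$ norm, or more likely use the sharper low-frequency weight $(t_0/t)^{a^2}$ together with Hölder in time. For the $L^2$ part, $\|u^2\|_{L^2}\le\|u\|_{L^\infty}\|u\|_{L^2}$, and $\int\tau^{-3/4}\|u\|_{L^\infty}d\tau\le\|u\|_{L^4L^\infty}\|\tau^{-3/4}\|_{L^{4/3}}$, which converges.
\item Cubic term: similar, with an extra $L^\infty$ factor.
\end{itemize}

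\textbf{Main obstacle.} The hard step is closing the estimate for the low-frequency Fourier norm of the quadratic terms, because the $1/\tau$ decay is borderline. I would first try to exploit oscillation in the Duhamel integral: integrate by parts in $\tau$ on the Fourier side, as in the Remark following \Cref{thm:GoodUBEstability}. Failing that, I would accept the slight $t^{a^2}$ growth allowed by $Y_1$. Once these bounds hold, contraction for small $\|u(1)\|_{X_1}$ follows from the difference identity for $F$.

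\textbf{Scattering.} Define $f_+$ as the limit of $e^{-i(1-t)\partial_{ss}}u(t)$. The linear part is supplied by \Cref{proposition_LinearPseudconfEq_ScatteringState} with $\delta=0$. The Duhamel part converges in $L^2$ with rate $t^{-1/4+\delta}$, by the same estimates applied on the tail $(t,\infty)$. The bound $|\hat f_+(\xi)||\xi|^{2\delta}\lesssim\|u(1)\|_{X_1}$ follows from \Cref{lemma2_GrowthFourier}, choosing $t_0\sim\xi^{-2}$ so that the $t^{a^2}$ growth is traded for the factor $(\xi^2t_0)^{-\delta}$.
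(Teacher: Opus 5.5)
The notes state this theorem without proof and defer to \cite{BV12_1}, so I judge the proposal on its own. Your framework is the natural one, and the spaces $X_{t_0}$, $Y_{t_0}$ with general $t_0$ are built for exactly this: Duhamel with respect to the real-linear propagator $S(t,\tau)$ of \eqref{linearized_NLS}, linear bounds applied at initial time $\tau$, then Minkowski in $\tau$. However, the nonlinear estimates as you wrote them do not close, and you leave the decisive step open.

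\textbf{The weights are inverted.} From $\|S(\cdot,\tau)g\|_{Y_\tau}\le C(a)\|g\|_{X_\tau}$ one gets
\[
\|S(t,\tau)g\|_{L^2}\le C(a)\big(\|g\|_{L^2}+\tau^{-1/4}\|\hat g\|_{L^\infty(\xi^2\le 1/\tau)}\big),
\]
not factors $\tau^{1/4}$ and $\tau^{1/2}$. Your H\"older step is also false, since $\|\tau^{-3/4}\|_{L^{4/3}(1,\infty)}=\infty$.

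\textbf{How the estimates actually close.} Take the correct weights and $g=F/\tau$. The $L^\infty L^2$ part of $Z$ then reduces to $\int_1^\infty(\tau^{-1}\|F\|_{L^2}+\tau^{-5/4}\|F\|_{L^1})\,d\tau$. The $L^4L^\infty$ part reduces to the same integral, using the Strichartz bound for data at time $\tau$; this follows from the $t_0=1$ case by the scaling $z\mapsto e^{i\theta}z(\lambda s,\lambda^2 t)$, which leaves $X_{t_0}$ and $Y_{t_0}$ invariant. That integral is controlled by $\|u\|_{L^\infty L^2}$ and $\|u\|_{L^4L^\infty}$, because $\tau^{-1}\in L^{4/3}(1,\infty)$.

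Your ``main obstacle'', the low-frequency part, is closed exactly by the $t^{a^2}$ loss built into $Y_1$. \Cref{lemma1_GrowthFourier} is pointwise in $\xi$, so the Duhamel term $D$ satisfies $t^{-a^2}|\hat D(\xi,t)|\le 2\int_1^\infty \tau^{-1-a^2}\|F(\tau)\|_{L^1}\,d\tau\lesssim a^{-1}\|u\|_{L^\infty L^2}^2+\dots$. This is finite because $a>0$, and it is where $c(a)$ and the smallness relative to $a$ enter. No oscillation argument is needed for existence. Since you did not carry this out, however, existence and uniqueness in $Z$ are not established by what you wrote.

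\textbf{The bound on $\hat f_+$ does not follow as you state it.} \Cref{lemma2_GrowthFourier} with $t_0=1$ gives the bound for the linear part $S(t,1)u(1)$. Choosing $t_0\sim\xi^{-2}$ instead makes $(\xi^2t_0)^{-\delta}\sim 1$ only after you have paid $|\xi|^{-2a^2}$ on $[1,\xi^{-2}]$ through \Cref{lemma1_GrowthFourier}, which is worse than $|\xi|^{-2\delta}$ whenever $\delta<a^2$.

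For the Duhamel part, applying \Cref{lemma2_GrowthFourier} at each initial time $\tau$ leaves the term $c(a)|\xi|^{2\delta}\int_1^\infty \tau^{-1}\big(|\hat F(\xi,\tau)|+|\hat F(-\xi,\tau)|\big)\,d\tau$. Your only bound is $|\hat F(\pm\xi,\tau)|\le \|F(\tau)\|_{L^1}\lesssim a\|u(\tau)\|_{L^2}^2$, which does not decay in $\tau$, so this diverges logarithmically. Controlling the range $\tau\gtrsim\xi^{-2}$ needs an extra ingredient, for instance the time oscillation of the quadratic interaction. That is where your integration-by-parts idea would actually be needed, and you have not supplied it.

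By contrast, the $L^2$ scattering of the Duhamel term does go through. You need to apply \Cref{proposition_LinearPseudconfEq_ScatteringState} at each $t_0=\tau$ and integrate its $\tau$-dependent constant $(1+\log\tau)\tau^{1/2-\delta_0}$ against the forcing; a tail estimate alone is not enough.
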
\noindent
As we said, the zero Fourier mode of $v(t)$ can explode at infinity, and for that we can argue as follows. First of all, as it can be seen in Lemma B.1 in\cite{BV12}, if $sv(1) \in L^2(\mathbb{R})$, then
\begin{align*}
    ||sv(t)||_{L^2(\mathbb{R})} \leq c(a,u(1)) t^{\Tilde{c}(a,u(1))}
\end{align*}
for some constants $c$ and $\Tilde{c}$.\\
As in the proof of \Cref{lemma2_GrowthFourier}, it is better in this case to work with the perturbation with form $w$, as the solution of
\begin{align}
    iw_t + w_{ss} = \mp \frac{1}{t}(|a+w|^2 - a^2)(a+w),
    \label{NLSperturbed_2}
\end{align}
recalling that $v = a + w = a + e^{\pm i a^2 \log(t)}u$. Consider the conservation law
\begin{align*}
    \partial_t \int_{\mathbb{R}}(|a+w|^2 - a^2)ds = 0,
\end{align*}
obtained in the usual way. Integrating with respect to $s$ and $t$ in \eqref{NLSperturbed_2}, we get
\begin{align*}
    &\int_{t_0}^t \int_{\mathbb{R}} (iw_t(\tau) + w_{ss}(\tau)) ds
    d\tau = \mp \int_{t_0}^t \int_{\mathbb{R}}  (|a+w(\tau)|^2 - a^2)(a+w(\tau)) ds \frac{d\tau}{\tau} \nonumber\\
    &\Longrightarrow \int_{\mathbb{R}} w(t) ds = \int_{\mathbb{R}} w(t_0) ds \pm i \int_{t_0}^t \int_{\mathbb{R}} (|w(\tau) + a|^2 -a ^2)(w(\tau)+a)ds \frac{d\tau}{\tau} \nonumber\\
    &= \int_{\mathbb{R}} w(t_0)ds \pm ia \log(t/t_0) \int_{\mathbb{R}} (|w(\tau) + a|^2 -a ^2) ds  \nonumber\\
    &\pm i \int_{t_0}^t \int_{\mathbb{R}} (|w(\tau)|^2 + 2a\text{Re}w(\tau))w(\tau) ds \frac{d\tau}{\tau},
\end{align*}
where we have integrated by parts in $\tau$, and applied the previous conservation law. From Strichartz estimates \eqref{EstStrichartzSchrod} 
we know that the cubic term appearing in the last term is integrable in time, so we have to take care only of the quadratic one. For that, we get
\begin{align*}
    \left| \int_{t_0}^t \int_{\mathbb{R}} 2a\text{Re}w(\tau)w(\tau) ds  \frac{d\tau}{\tau}\right| \leq c(a) ||u(t_0)||_{X_{t_0}}^2 \log(t/t_0).
\end{align*}
On the other hand, 
\begin{align*}
    \left|\int_{\mathbb{R}} (|w(t_0) - a|^2 - a^2) ds \right| \geq 2a \left|\int_{\mathbb{R}} \text{Re}w(t_0) ds\right| - \int_{\mathbb{R}} |w(t_0)|^2 ds.
\end{align*}
Therefore, it is enough to take $w(t_0)$ such that
\begin{align*}
    c(a,t_0) \left( ||w(t_0)||_{L^2(\mathbb{R})}^2 + ||\widehat{w}(t_0)||^2_{L^{\infty}(\xi^2 \leq \frac{1}{t_0})} \right) < \left|\int_{\mathbb{R}} \text{Re} w(t_0) ds\right|.
\end{align*}
Lastly, we come back to the filament function $\psi$ and the family of curves $\chi$ which are solution of the BF that are constructed from the result obtained in \Cref{thm_Xt0Yt0}.
Note that $v$ and $\psi$ are related through the following conformal transformation:
\begin{align*}
    \psi(s,t) = \frac{e^{is^2/4t}}{\sqrt{t/2}} \overline{v}\left( \frac{\sqrt{2}s}{t} , \frac{2}{t}  \right).
\end{align*}
Therefore, 
\begin{align*}
    \psi(s,1) = a\sqrt{2} e^{is^2} + \psi_1(s)
\end{align*}
where $\psi_1(s) = a \sqrt{2} + e^{\pm i a^2 \log(2)} u_1(\sqrt{2}s)$. For simplicity, we impose $\psi_1 \in L^1(\mathbb{R}) \cap L^2(\mathbb{R})$, in order to fulfill the hypothesis $\hat{z}(\xi) \in L^{\infty}(\xi^2 \leq 1) \cap L^2(\mathbb{R})$, needed in \Cref{thm_Xt0Yt0}. From the asymptotic behaviour of the solutions of the free Schrödinger equation we obtain, from \Cref{thm_Xt0Yt0}, the following result.
\begin{theorem}
    Let $a>0$ and $\psi_1 \in L^1(\mathbb{R}) \cap L^2(\mathbb{R})$ small with respect to $a$. Then there exists a unique solution $\psi$ of
    \begin{align*}
        &\psi_t = i \left( \partial_{ss}\psi + \frac{1}{2} \left( |\psi|^2 - \frac{a^2}{t} \right) \psi \right), \quad 0 < t < 1, \nonumber\\
        &\psi(1) = \psi_1 + ae^{is^2/4}
    \end{align*}
    such that
    \begin{align*}
        \psi(s,t) - \frac{a}{\sqrt{t}} e^{is^2/4t} \in L^{\infty}((0,1),L^2(\mathbb{R})) \cap L^4((0,1),L^{\infty}(\mathbb{R})).
    \end{align*}
    Moreover, there exists $\psi_+ \in L^2(\mathbb{R})$ such that 
    \begin{align*}
        \left| \left| \psi(t) - \frac{a}{\sqrt{t}} e^{is^2/4t} - e^{\pm i a^2 \log(t)}  e^{it \partial_{ss}} \psi_+ \right|\right|_{{L^2(\mathbb{R})}} \leq c(a,\delta) t^{\frac{1}{4} - \delta} ||\psi_1||_{L^1(\mathbb{R}) \cap L^2(\mathbb{R})},
    \end{align*}
    for any $0 < \delta < 1/4$, and for $|s| \leq 2$
    \begin{align*}
        |s|^{2 \delta} |\psi_+(s)| \leq c(a,\delta) ||\psi_1||_{L^1(\mathbb{R}) \cap L^2(\mathbb{R})}.
    \end{align*}
\end{theorem}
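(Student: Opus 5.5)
The plan is to transfer \Cref{thm_Xt0Yt0} back to the original variables through the pseudo-conformal transformation \eqref{TRANSpconformal}, suitably rescaled. Concretely, I write
\[
\psi(s,t) = \frac{e^{is^2/4t}}{\sqrt{t/2}}\,\overline{v}\Big(\frac{\sqrt2 s}{t},\frac 2t\Big),
\]
so that $\psi$ on $0<t<1$ corresponds to $v$ on $2<\tau<\infty$. The constant solution $v=a$ corresponds to $\frac{a}{\sqrt t}e^{is^2/4t}$, up to the unimodular constant fixed by the normalisation in \eqref{eq:ua}. Then $\psi(1)=\psi_1+ae^{is^2/4}$ translates into $v(2)=a+w(2)$. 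Here $w(2)$ is, up to dilation by $\sqrt2$ and conjugation, the function $e^{-is^2/4}\psi_1$ evaluated at rescaled variables. The $L^2$ norm is preserved by this map, and $\|\widehat{w(2)}\|_{L^\infty}\lesssim \|\psi_1\|_{L^1}$, since the Fourier transform at time $\tau$ of $v$ is essentially $\psi$ in physical space at time $t=2/\tau$, after removing the Gaussian phase. Hence $\|w(2)\|_{X_{2}}\lesssim \|\psi_1\|_{L^1\cap L^2}$, which is small with respect to $a$.

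Next I apply \Cref{thm_Xt0Yt0}, with $t_0=2$ instead of $1$; only constants change. This gives a unique global $v=a+e^{\pm ia^2\log\tau}u$ with $u\in Y_2\cap L^4((2,\infty),L^\infty)$ and a state $f_+$ such that
\[
\|u(\tau)-e^{i(2-\tau)\partial_{ss}}f_+\|_{L^2}\lesssim \tau^{-1/4+\delta}\|\psi_1\|_{L^1\cap L^2},
\qquad |\hat f_+(\xi)||\xi|^{2\delta}\le c(a,\delta)\|\psi_1\|_{L^1\cap L^2}\ \text{ for } \xi^2\le1 .
\]
Undoing the transformation, the membership $\psi-\frac a{\sqrt t}e^{is^2/4t}\in L^\infty_tL^2\cap L^4_tL^\infty$ on $(0,1)$ follows directly. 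The pseudo-conformal map is an isometry on $L^\infty_tL^2$, and a change of variables $t=2/\tau$ together with $\sqrt t$ weights shows it maps $L^4((2,\infty),L^\infty)$ to $L^4((0,1),L^\infty)$: the Jacobian $d\tau=2t^{-2}dt$ exactly compensates the factor $t^{-2}$ coming from $|t^{-1/2}|^4$. Uniqueness is inherited from uniqueness of $v$ in $Z$.

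For the asymptotics, I use the classical identity that the pseudo-conformal transform of a free wave $e^{i\tau\partial_{ss}}g$ equals, up to constants and conjugation, $e^{it\partial_{ss}}$ applied to a rescaled copy of $\overline{\hat g}$ (in the spirit of \eqref{eq:u1}). I therefore define $\psi_+$ as the rescaled conjugate Fourier transform of $e^{2i\partial_{ss}}f_+$. The log phase $e^{\pm ia^2\log\tau}$ becomes $e^{\mp ia^2\log t}$ times a constant, which I absorb into $\psi_+$; I adjust the sign conventions to match the statement. The $L^2$ error $\tau^{-1/4+\delta}$ becomes $t^{1/4-\delta}$ by isometry. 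The pointwise bound $|s|^{2\delta}|\psi_+(s)|\lesssim\|\psi_1\|$ for $|s|\le2$ comes from the bound on $\hat f_+$ for $\xi^2\le1$, since the rescaling maps $|\xi|\le1$ into a window containing $|s|\le 2$.

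The main obstacle is bookkeeping. I must check that the rescaled Fourier window and the phases match exactly; where the bounded region does not align, I would enlarge the frequency range allowed in $X_{t_0}$ by choosing $t_0$ appropriately. I also must verify that the $L^1$ hypothesis correctly controls the low-frequency sup norm required in $X_{t_0}$.
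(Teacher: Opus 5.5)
Your argument is the paper's own: the paper obtains this theorem by pulling \Cref{thm_Xt0Yt0} back through the rescaled pseudo-conformal map, using $\psi_1\in L^1$ to control the low-frequency part of the $X_{t_0}$ norm and the free-Schr\"odinger (lens-transform) behaviour to turn $f_+$ into $\psi_+$, and your version of this is correct, including the fact that the window $\xi^2\le 1/t_0$ is carried exactly onto $|s|\le 2$. What remains is bookkeeping: with your formula $v\equiv b$ is sent to $\sqrt2\,b\,t^{-1/2}e^{is^2/4t}$, which is not a unimodular multiple, so you must apply the $v$-result with $b=a/\sqrt2$, and then the log phase comes out as $e^{\pm i\frac{a^2}{2}\log t}=e^{\pm ia^2\log\sqrt t}$ rather than $e^{\pm ia^2\log t}$ (this is a normalisation slip in the statement, not something to `adjust'); the bound $\|\widehat{w(2)}\|_{L^\infty}\le\|w(2)\|_{L^1}\simeq\|\psi_1\|_{L^1}$ holds simply because $|w(2)|$ is a dilate of $|\psi_1|$, not through any Fourier/physical duality at $\tau=2$; the free wave must be the forward one, $e^{i(\tau-2)\partial_{ss}}f_+$, for the lens identity to apply; and uniqueness in the larger class $L^\infty L^2\cap L^4L^\infty$ does not follow from uniqueness in the smaller space $Z$, but from standard Strichartz uniqueness for the regular Cauchy problem at $\tau=2$.
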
\noindent
It can also be proved that if $u_1$ is regular (recall \eqref{eq:u1}), so is $u(t)$ as given in theorem \eqref{thm_Xt0Yt0}. So in particular $u(t)$ is uniformly bounded by the size of $u_1$ and its derivatives. Therefore, if $u_1$ is small enough with respect to $a$, then
\begin{align}
    \frac{a}{2\sqrt{t}} \psi(s,t) \leq \frac{3a}{2t}.
    \label{PsiSingular}
\end{align}
Consequently, $\psi(s,t)$ becomes singular as $t$ goes to zero. Also, we can use the Frenet system to construct a regular solution of the BF equation \eqref{eq:CPchi} for $0<t<1$, and the corresponding Frenet frame that will become more singular as $t$ approaches to zero (see \cite{NSVZ} and appendix of \cite{BV082}). 
\\
In the focusing case, we also have that the binormal vector is unitary. Hence, from the inequalities \eqref{PsiSingular} we conclude the existence of $\chi_0(s)$ given by 
\begin{align}
    \chi_0(s) = \chi(s,1) - \int_0^1 c(s,\tau) b(s,\tau) ds
    \label{chi0singular}
\end{align}
with $c = |\psi|$ the curvature. As a conclusion, we get the following result.
\begin{theorem}
    Let $a>0$ and $\chi_1(s)$ a regular curve with $(c_1,\tau_1)$ the corresponding curvature and torsion. We define
    \begin{align*}
        \psi_1(s) = c_1(s) e^{i \int_0^s \tau_1(s')ds'}, \quad u_1(s) = e^{-is^2/4} \psi_1(s) - a,
    \end{align*}
    and assume that $u_1 \in L^1(\mathbb{R}) \cap H^3(\mathbb{R})$, and that it is small with respect to $a$.
    \\
    Then, there exists a unique regular solution $\chi(s,t)$ of the BF equation \eqref{eq:CPchi} for $0<t<1$ with $\chi(s,1)=\chi_1(s)$. Moreover, its curvature and torsion $(c,\tau)$ satisfy
    \begin{align*}
        \left| c(s,t) - \frac{a}{\sqrt{t}} \right| \leq \frac{c(u_1)}{t^{1/4+}}, \quad \left| \tau(s,t) - \frac{s}{2t} \right| \leq \frac{c(u_1)}{t^{3/4+}},
    \end{align*}
    and taking $\chi_0(s)$ as in \eqref{chi0singular}, then
    \begin{align*}
        |\chi(s,t) - \chi_0(s)| \leq c(u_1) \sqrt{t}.
    \end{align*}
\end{theorem}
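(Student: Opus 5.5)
The plan is to reduce the statement to the nonlinear NLS result for the pseudo-conformal variable, Theorem \ref{thm_Xt0Yt0}, and then rebuild the curve from its curvature and torsion, following the same steps used before Theorem \ref{them_NLStoBF}.

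\textbf{Step 1: from the curve to an NLS datum.} Starting from $\chi_1$, form the filament function $\psi_1=c_1e^{i\int_0^s\tau_1}$. Writing $\psi_1=e^{is^2/4}(a+u_1)$ means that $\psi(1)$ is a perturbation of the self-similar profile $\psi_a(1)$.

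\textbf{Step 2: transfer the smallness hypothesis.} Apply the pseudo-conformal transformation \eqref{TRANSpconformal}, with the $\sqrt2$ rescalings, to get a datum $v(1)=a+e^{\pm ia^2\log 2}\tilde u_1$ at $t=1$. Since $u_1\in L^1$, the Fourier transform $\hat u_1$ is bounded. Hence smallness of $u_1$ in $L^1\cap L^2$ gives smallness of $\tilde u_1$ in $X_1$.

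\textbf{Step 3: solve the nonlinear problem.} Theorem \ref{thm_Xt0Yt0} then provides a global $v=a+e^{\pm ia^2\log t}u$ for $t\ge1$, with $u\in Z$. Undoing the transformation gives $\psi$ on $(0,1)$ solving \eqref{eq:cubicNLSpsi} with $A(t)=-a^2/t$.

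\textbf{Step 4: propagate regularity.} Using the remark that regular $u_1$ yields regular $u(t)$, I would propagate the $H^3$ norm. I would do this by energy estimates on $\partial_s^k u$ for $k\le3$: the potential terms carry a factor $1/t$, and the linear part can be handled mode by mode as in Lemma \ref{lemma1_GrowthFourier}. The expected outcome is $\|\partial_s^k u(t)\|_{L^2}\lesssim t^{a^2}$, or better, combined with the Strichartz control from $Z$. Sobolev embedding then gives pointwise bounds on $u$, $u_s$ and $u_{ss}$.

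\textbf{Step 5: curvature estimate.} Since $c=|\psi|=t^{-1/2}|v(s/t,1/t)|$, the $L^\infty$ bound on $u$ gives the two-sided bound \eqref{PsiSingular}, so $c$ stays away from zero. The difference $c-a/\sqrt t$ is controlled by $t^{-1/2}|u(s/t,1/t)|$. For this I would use the dispersive decay of the scattering part, $\|e^{it\partial_{ss}}f_+\|_{L^\infty}\lesssim t^{-1/2}\|f_+\|_{L^1}$, since $f_+$ inherits an $L^1$-type bound from the hypothesis on $u_1$. For the remainder I would interpolate between its $L^2$ decay $t^{-(1/4-\delta)}$ and the $H^1$ bound of Step 4. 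Translated back to the original time variable, the $L^\infty$ bound on $u$ should read $t^{1/4-}$, which yields $|c-a/\sqrt t|\lesssim t^{-1/4-}$.

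\textbf{Step 6: torsion estimate.} By \eqref{tau_NLStoBF}, $\tau-s/2t$ equals $t^{-1}\,\mathrm{Im}(\bar v_s/\bar v)$ evaluated at $(s/t,1/t)$. It therefore needs the $L^\infty$ bound on $u_s$, obtained from the $H^2$ bound by Gagliardo–Nirenberg. This gives the bound $t^{-3/4-}$.

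\textbf{Main obstacle.} I expect the hardest step to be obtaining these pointwise rates uniformly in $s$, not just in $L^2$. The growth of the zero Fourier mode limits the available decay, and it is exactly what forces the exponent $1/4+$ instead of $1/2$. Balancing the interpolation against the possible $t^{a^2}$ growth of the derivatives is the delicate point.

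\textbf{Step 7: rebuild the curve.} With $c>0$ and $c,\tau$ of class $C^2$ in $s$ for each $t\in(0,1)$, construct the Frenet frame. Set it equal to that of $\chi_1$ at $(0,1)$, evolve it in $t$ at $s=0$ by the system preceding \eqref{GralFormFrenetFrame_NLStoBF}, and integrate the Frenet equations in $s$. Then define $\chi$ by \eqref{1stDefChi_NLStoBF}. The computation already done there shows that $\chi_t=cb$, i.e.\ that $\chi$ solves \eqref{eq:CPchi}, and $\chi(\cdot,1)=\chi_1$ holds by construction.

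\textbf{Uniqueness.} This follows from uniqueness in Theorem \ref{thm_Xt0Yt0}, because any regular binormal-flow solution with this datum has a filament function, via Hasimoto's computation in a parallel frame, lying in the same class.

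\textbf{Step 8: the limit curve.} Since $|b|=1$ and $c\le 3a/(2\sqrt t)$, the integral $\int_0^t c\,b\,d\tau$ converges absolutely. So $\chi_0$ in \eqref{chi0singular} is well defined, and $|\chi(s,t)-\chi_0(s)|\le\int_0^t c\,d\tau\lesssim\sqrt t$ uniformly in $s$.
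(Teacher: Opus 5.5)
Your architecture matches the paper's sketch: the pseudo-conformal reduction to Theorem~\ref{thm_Xt0Yt0}, propagation of regularity, the two-sided bound \eqref{PsiSingular}, reconstruction of the frame and of $\chi$ via \eqref{GralFormFrenetFrame_NLStoBF}--\eqref{1stDefChi_NLStoBF}, and $\chi_0$ obtained from $|b|=1$ and $c\lesssim t^{-1/2}$. Steps 1--3, 7 and 8 are fine.

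\textbf{The rates are not reached.} The gap is in Steps 4--6: what you write does not give the rates $t^{-1/4-}$ and $t^{-3/4-}$. Set $T=1/t$. Since $|c-a/\sqrt t|\le t^{-1/2}|u|$ and $|\tau-s/2t|\lesssim t^{-1}|\partial_s u|$ at $(s/t,1/t)$, the two rates amount to $\|u(T)\|_{L^\infty}+\|\partial_s u(T)\|_{L^\infty}\lesssim T^{-1/4+}$.
\begin{itemize}
\item For the remainder $r=u-e^{i(1-T)\partial_{ss}}f_+$ you only have $\|r(T)\|_{L^2}\lesssim T^{-1/4+\delta}$.
\item Using $\|r\|_{L^\infty}^2\le\|r\|_{L^2}\|\partial_s r\|_{L^2}$ with a bounded, non-decaying $H^1$ norm gives $T^{-1/8+}$. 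That yields only $|c-a/\sqrt t|\lesssim t^{-3/8-}$.
\item Interpolating against $H^3$ improves this only to $T^{-5/24+}$.
\item For the torsion, $H^3$ interpolation gives $\|\partial_s r\|_{L^\infty}\lesssim T^{-1/8+}$, i.e.\ $t^{-7/8-}$. A mere $L^\infty$ bound on $\partial_s u$, as in your Step 6, gives only $t^{-1}$.
\end{itemize}
Worse, if the $H^k$ norms grow like $T^{a^2}$ as in your Step 4, this decay is destroyed for $a$ not small. Then you cannot even deduce \eqref{PsiSingular}, although $a>0$ is arbitrary in the statement. The uniform-in-time bounds on $\partial_s^k u$, $k\ge1$, that the paper invokes ($u(t)$ ``uniformly bounded by the size of $u_1$ and its derivatives'') come from the mode-by-mode estimate of Lemma~\ref{lemma2_GrowthFourier}. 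That estimate is uniform in $t$ for $\xi\neq0$; Lemma~\ref{lemma1_GrowthFourier} is not.

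\textbf{No $L^1$ bound on $f_+$.} Theorem~\ref{thm_Xt0Yt0} only controls $|\xi|^{2\delta}|\hat f_+(\xi)|$ for $\xi^2\le1$. The logarithmic growth of the zero Fourier mode, discussed just before that theorem, indicates that $\hat f_+$ is generically unbounded at $\xi=0$, which is incompatible with $f_+\in L^1$.

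\textbf{What is missing.} You need decay, not just boundedness, of the derivatives of the remainder: $\|\partial_s^k r(T)\|_{L^2}\lesssim T^{-1/4+}$ for $k=1,2$. One would expect this by running the Duhamel/Strichartz argument behind Theorem~\ref{thm_Xt0Yt0} on $\partial_s^k u$, since the coefficients do not depend on $s$ and $u_1\in H^3$. You also need an $L^\infty$ decay bound of order at least $T^{-1/4}$ for the free part and its derivative that does not rely on $f_+\in L^1$.

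Incidentally, the exponent $1/4$ comes from pairing $\tau^{-1}$ with the $(4,\infty)$ Strichartz estimate, as in Section~\ref{section:WaveOperator}; the zero mode only costs the ``$+$''.

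\textbf{Uniqueness.} Your uniqueness argument asks for too much. A regular solution on $(0,1)$ has no a priori control as $t\to0$, so it is not known to lie in $Z$. Uniqueness follows instead from local uniqueness of the regular NLS on each $[t_1,1]$, $t_1>0$, where $a^2/t$ is bounded, together with uniqueness of the frame reconstruction.
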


\section{Several corners}\noindent\label{sec:Several Corners}
In the previous sections we described the evolution of a solution with a corner and its stability. At this point, it is natural to ask the following questions: what if we have two corners instead of one? What if we have, in general, more than one corner? Or even more, what if we have infinitely many corners? 
\\
The answer in all these cases is the same: the difficulty to answer the question does not change between one, finitely many or infinitely many corners, as long as: 
\begin{enumerate}
    \item The distance between corners remain the same.
    \item The corresponding polygonal line becomes flat at infinity.
\end{enumerate}
One of the key procedures developed in these lecture notes is based on the link between the BF and the 1d cubic NLS. In particular, we used this approach with the self-similar solutions of the BF, constructed from the solutions
\begin{equation}\label{FtalSolution_AtNLS}
    u_{a}(s,t) = a \frac{e^{is^2/4t}}{\sqrt{it}} = a e^{it \Delta} \delta_0(s)
\end{equation}
of the 1d cubic NLS renormalized with the inclusion of the real potential $A(t) = -a^2/t$, i.e. through the Cauchy problem
\[
\begin{cases}
    &i \partial_t u + \Delta u \pm \frac{1}{2} (|u|^2 + A(t)) u = 0, \nonumber\\
    &u(0) = a \delta_0,
    \label{eq:AtNLS}
\end{cases}
\]
with a Dirac mass at initial time. As shown in \cite{KPV01} and briefly explained above, through the Galilean invariance one can prove that if exists a unique solution of the 1d cubic NLS \eqref{NLS}, then it should be for positive times of the form
\begin{align*}
    e^{\mp i|a|^2\log(\sqrt{t})} e^{is^2/4t} \frac{a}{ \sqrt{it}},
\end{align*}
 so that initial datum is not recovered. This justifies the need of $A(t)$. Recall that adding $A(t)$ is a natural geometric normalization since the BF solution from the NLS solution $u(s,t)$ is the same as the one constructed from $e^{i \phi(t)} u(s,t)$, for some real-valued function $\phi$. 
 The BF self-similar solutions are of the type $\chi(s,t) = \sqrt{t}G(s/\sqrt{t})$, yielding a $1$-parameter family $\{\chi_a : a\geq 0\}$ of curves, each of them with curvature $c_a(s,t) = a/\sqrt{t}$ and torsion $\tau_a(s,t) = s/2t$. 
 The rigorous proof of existence of a trace at $t=0$ was given in \cite{GRV03} and briefly exposed here in \Cref{section:WaveOperator}. In particular it was shown that $\chi_a(0)$ is a broken line with one corner having an angle $\theta$ such that (recall \Cref{them_NLStoBF})
\begin{align}
    \sin\left( \frac{\theta}{2} \right) = e^{-\frac{\pi a^2}{2}}.
    \label{a_theta_relation}
\end{align}

A planar regular polygon with $M$ sides is expected to evolve through the BF to skew polygons with $Mq$ sides at rational multiples of $1/2\pi$, i.e. times $t_{p,q}=\frac{p}{2\pi q}$  for $p,q \in \mathbb{Z}$ and odd $q$ (see numerical simulations in \cite{JS}, \cite{GDV}, \cite{GG99}, and \cite{GG95}, where the integration of the Frenet equations at rational times $t_{p,q}$ is also done). We move now in the general framework of initial data being polygonal lines, and we aim to summarize some results which are important steps forward to fill the gap between the case of one corner and the problem of closed polygons. In this section, the first three subsections will follow mainly \cite{BV18}, while the last subsection is based on \cite{DLHV14}.
\subsection{Polygonal lines evolution}
In this section we introduce the main tools involved in the several corners problem. 
We start with a necessary definition of certain weighted summation spaces, and a theorem that condensates some relevant information about the evolution of polygonal lines through the BF (see \cite{BV18}). As we have done so far, we consider curves parametrized by arc length.
\begin{definition}
    We define $l^{p,s}$ as the normed space of sequences $\{a_k\}$ such that
    \begin{align*}
        || \{a_k\} ||_{l^{p,s}} < \infty,
    \end{align*}
    where
    \begin{align*}
        || \{a_k\} ||^p_{l^{p,s}} := \sum_{k \in \mathbb{Z}} (1 + |k|)^{ps} |a_k|^p.
    \end{align*}
\end{definition}
\begin{theorem}
    Let $\chi_0$ be a polygonal line with corners at $s \in \mathbb{Z}$, with the sequence of angles $\theta_n \in (0,\pi)$ such that the sequence defined by
    \begin{align*}
        a_n = \sqrt{-\frac{2}{\pi} \left( \sin\left( \frac{\theta_n}{2} \right) \right)} \quad (\Longleftrightarrow \sin\left( \frac{\theta_n}{2} \right) = e^{-\pi a_n^2/2})
    \end{align*}
    belongs to $l^{2,3}$. Then, there exists $\chi(t)$, smooth solution of the BF \eqref{eq:CPchi} on $t \neq 0$ and solution of \eqref{eq:CPchi} in the weak sense on $\mathbb{R}$, with 
    
    \begin{align*}
        |\chi(s,t) - \chi_0(s)| \leq C \sqrt{t}, \quad \forall s \in \mathbb{R}, |t|  \leq 1.
    \end{align*}
    \label{thm:SolsNLSSeveralCorners}
\end{theorem}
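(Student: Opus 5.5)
The strategy mirrors the one-corner case: solve a 1d cubic NLS with a sum of Dirac masses as initial datum, then rebuild the curve through the Hasimoto transformation and the Frenet system. Formally, the filament function of $\chi_0$ is $\sum_k \alpha_k\delta_k$, with $|\alpha_k|=a_k$ and phases determined by the torsion jumps between consecutive segments. I would first invoke the NLS result announced in the structure of the notes (\Cref{thm:SeveralDiracMassesNLS}). It gives a solution $u$ of
\[
i\partial_t u+\partial_{ss}u+\tfrac12\Big(|u|^2-\tfrac{1}{t}\textstyle\sum_k|\alpha_k|^2\Big)u=0
\]
on $(0,1]$ with $u(0)=\sum_k\alpha_k\delta_k$, for $\{\alpha_k\}\in l^{2,3}$. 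That result should also provide an ansatz together with a remainder estimate. Concretely, I expect
\[
u(s,t)=\sum_k e^{-i|\alpha_k|^2\log\sqrt t}\,\big(\alpha_k+R_k(t)\big)\,e^{it\partial_{ss}}\delta_k ,
\]
with $R_k(t)\to0$ as $t\to 0$ in a weighted summable space. This is obtained by the pseudo-conformal transformation, which turns the problem into a periodic-in-space problem for Fourier coefficients with a $1/t$ nonlinearity, and by a fixed point in $l^{1}$-type weighted spaces as $t\to\infty$. The $l^{2,3}$ weight is what gives enough decay to control derivatives.

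Second, for $0<t\le1$ the function $u$ is smooth in $s$. Here the oscillation of $e^{it\partial_{ss}}\delta_k$ gives $|u|\lesssim t^{-1/2}$ uniformly, and $u_s,u_{ss}$ are controlled by the weights. I would then construct the Frenet (or better, parallel) frame exactly as in steps (1)–(3) before \Cref{them_NLStoBF}. That means fixing the frame at $(0,1)$, evolving it in $t$ at $s=0$, and integrating in $s$; this yields $T$ solving the Schrödinger map and $\chi$ solving the BF, smooth for $t\in(0,1]$. Using a parallel frame avoids dividing by the curvature, which may vanish between corners. Since $|\chi_t|=|u|\lesssim t^{-1/2}$ is integrable, $\chi(s,t)$ converges to some $\chi_0^*(s)$ with $|\chi-\chi_0^*|\le C\sqrt t$, as in step (4). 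Negative times follow from the reversibility $\chi(-s,-t)$, and the weak formulation across $t=0$ follows from this uniform continuity, because $\chi_s\wedge\chi_{ss}=(\chi_s\wedge\chi_s)_s$ can be written in divergence form $(T\wedge T_s)$.

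The main obstacle is identifying $\chi_0^*$ with the prescribed polygon $\chi_0$ (up to a rigid motion). Near each integer $k$, on scales $|s-k|\ll1$, I would show that $u$ is close to the one-corner solution $\alpha_k e^{it\partial_{ss}}\delta_k$ modified by the logarithmic phase. Applying the one-corner stability analysis of \Cref{them_NLStoBF} locally, together with \Cref{thm:asymp_onecorner}, then shows that $T(\cdot,t)$ has limits on each $(k,k+1)$ and jumps at $k$ by the angle $\theta_k$ through $\sin(\theta_k/2)=e^{-\pi a_k^2/2}$. Away from the integers, $T$ should be constant in the limit, because the interaction terms oscillate like $e^{i(s-k)^2/4t}$ and integrate to $O(\sqrt t)$. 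The delicate part is controlling the torsion phases, that is the arguments of the $\alpha_k$ and the $\log t$ corrections, so that consecutive corners glue with the correct relative orientation. For this I would track $T,e_1,e_2$ along segments using the parallel-frame equations and the estimates on $R_k$. Finally, a rigid motion matches the resulting polygon to $\chi_0$.
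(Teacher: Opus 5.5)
\textbf{Main gap: the corner angles.} Your overall architecture is the paper's: Hasimoto transform, \Cref{thm:SeveralDiracMassesNLS} in $l^{2,3}$, a parallel frame, $|\chi_t|=|u|\lesssim t^{-1/2}$ giving the $C\sqrt t$ limit, then identification of the limit, a rigid motion, and reversibility. However, the identification step rests on a claim that fails.

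Near $s=k$, $u$ is \emph{not} close to $\alpha_k e^{it\partial_{ss}}\delta_k$. Every $e^{it\partial_{ss}}\delta_j$ has modulus $(4\pi t)^{-1/2}$ at every point, so the masses $j\neq k$ contribute at the same order $t^{-1/2}$ on any neighbourhood of $k$. In the pseudo-conformal variables of Section 3 they become a non-decaying $4\pi$-periodic perturbation of a constant. That is not an $L^2$ perturbation, let alone one small in $\dot H^{-2}\cap H^3\cap W^{3,1}$.

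Moreover, \Cref{them_NLStoBF} is a global wave-operator statement for a prescribed asymptotic state and requires $a<a_0$, while here the $a_k$ need not be small. So it cannot be ``applied locally''. What the paper does instead (step 3 of its sketch) is follow the frame along self-similar paths $s=k+\eta\sqrt t$. In the variable $\eta$, the contributions of $j\neq k$ carry phases $(k-j+\eta\sqrt t)^2/4t$, which oscillate at frequency $\sim|k-j|/\sqrt t$ and therefore average out. The frame then converges, up to a rotation, to that of $G_{a_k}$. Sending $\eta\to\pm\infty$ and matching with the limits on the adjacent segments gives the angle through $\sin(\theta_k/2)=e^{-\pi a_k^2/2}$.

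\textbf{Second gap: the $O(\sqrt t)$ claim and the gluing of corners.} Integrate $T_s=\mathrm{Re}(\bar uN)$ by parts against $e^{-i(\sigma-j)^2/4t}$. The derivative falls on $N$, and via $N_s=-uT$ it produces the resonant product $|\alpha_j e^{it\partial_{ss}}\delta_j|^2$, which does not oscillate. You get an $O(1)$ term proportional to $|\alpha_j|^2/(\sigma-j)$.
\begin{itemize}
\item In the equation for $T$, this term has a purely imaginary coefficient against a real vector, so it cancels. This must be checked; it is not automatic.
\item In the equation for $N$ it does not cancel. Up to normalization, $N$ acquires the phase $\sum_j|a_j|^2\log\frac{|s-j|}{\sqrt t}$, so $N(s,t)$ has no limit as $t\to0$.
\end{itemize}
Gluing consecutive corners with the correct relative orientation requires transporting normal information along each segment, and this is where your plan is silent. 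The missing idea is the modulated normal vector $\tilde N(s,t)=e^{i\sum_{j\neq s}|a_j|^2\log\frac{|s-j|}{\sqrt t}}N(s,t)$ (step 4 of the paper). It does have a limit, and the arguments of the $\alpha_k$ are recovered from it. Tracking $T,e_1,e_2$ ``with the estimates on $R_k$'' does not capture this, because the phase comes from the leading coefficients $\alpha_j$, not from the remainders.

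A minor point: $(\chi_s\wedge\chi_s)_s=0$, so your divergence-form identity is wrong. The divergence form is $T\wedge T_{ss}=(T\wedge T_s)_s$ for the tangent. The weak formulation for $\chi$ itself only needs $|\chi_t|\lesssim|t|^{-1/2}\in L^1_{loc}$ and continuity at $t=0$.
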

\begin{proof}
    We provide a sketch of the proof. As for the one corner case, we use the Hasimoto trasformation to go from the BF equation to the 1d cubic NLS given by
    \[
    i \partial_t u + \partial_{ss} u \pm \frac12 \big ( \lvert u \rvert^2 - \frac Mt \big),
    \]
    with initial data
    \begin{align}
        \sum_{k \in \mathbb{Z}} a_k \delta_k,
        \label{GralICNLSSeveralCorners}
    \end{align}
    with $a_k$ are complex numbers defined in a suitable way from the expression of $\chi_0$. As we will see in the next section, (see \Cref{thm:SeveralDiracMassesNLS}), we can get a smooth solution $u(t)$ for $t>0$. From this smooth solution with times values on $(0,\infty)$, we construct a smooth solution $\chi(t)$ of the BF on $(0,\infty)$, and from here one can prove that it has a limit $\chi(0)$ at $t=0$. All in all, our goal is to prove that, modulo a translation and a rotation, $\chi(0)$ is $\chi_0$. To prove it, one should follow several steps: 
    \begin{enumerate}
        \item Prove that the tangent vector has a limit at $t=0$.
        \item Prove that this limit is piecewise constant, so $\chi(0)$ is a segment for $s \in (n,n+1)$, for any $n \in \mathbb{Z}$.
        \item Prove, by studying the frame of the curve through paths of self-similar variables, that $\chi(0)$ presents corners at the same locations as $\chi_0$, of same angles as $\chi_0$.
        \item Recover the characteristic geometric elements of $\chi_0$ through the analysis for modulated normal vectors
        \begin{align*}
            \Tilde{N}(s,t) = e^{i \sum_{j \neq s} |a_j|^2 \log \frac{s-j}{\sqrt{t}}}N(s,t).
        \end{align*}
        \item Recover $\chi_0$ modulo a translation and a rotation. These translation and rotation applied to $\chi(t)$ yields the desired solution of the BF for $t>0$, with limit $\chi_0$ at $t=0$. Uniqueness holds for the class of curves having filament functions of type \eqref{FilamentFunctionsUnique_SeveralCorners} below.
        \item Using the recipe just described to construct the evolution of a polygonal line for $t > 0$, one can extend $\chi(t)$ to negative times by using the time reversibility of the  BF equation.
    \end{enumerate}
\end{proof}
Although we still do not have the tools to deepen in it properly, we provide two remarks which extend these notions. 
\begin{remark}
    Under suitable conditions on the initial data $\chi_0$, we can have intermittent behaviour of the evolution. More concretely, at rational multiple times $t_{p,q} = p/2\pi q$ the curvature of $\chi(t)$ displays concentrations near the locations $s$ such that $s \in \frac{1}{q}\mathbb{Z}$, and $\chi(t)$ is almost a straight segment in between.
\end{remark}
\begin{remark}
    There is an interesting difference between the several corners and the one corner cases. For the latter, the trajectory in time of the corner at $(s,t) = (0,0)$ of a self-similar solution, $\chi_a(t,0)$, is given by a straight line for $t>0$, since the Frenet frame of $\chi_a(t)$ is constant at $s=0$. However, in \cite{BV18} (section 4.1) it is shown that for the evolution of a polygonal line with several corners, the trajectory of each corner, as $t$ goes to $0$, is a logarithmic spiral. That is, the presence of another corner on a nonclosed curve immediately creates a modification of the trajectory.
\end{remark}
\subsection{Cubic NLS on $\mathbb{R}$ with initial data given by several Dirac masses}
As we saw before, the problem of the description of the evolution of a polygonal line is translated in a study of a particular 1d cubic NLS with sum of Dirac masses as initial data. In this section we describe, in a not exhaustive way, some of the results corcening the above mentioned IVP. We start by considering the standard cubic NLS equation
\begin{align*}
    i \partial_t u + \Delta u \pm \frac{1}{2} |u|^2u = 0,
\end{align*}
with a Dirac mass as an initial data. We recall that the Dirac delta distribution is almost in $\dot{H}^{-1/2}$, in the sense that $\delta \in \dot H^{- \frac12 - \epsilon}(\R)$, $\forall \epsilon >0$. 
As already stated, $s_c= - \frac 12$ is the critical value for local wellposedness in $\dot H^s$ (\Cref{sec:intro1dcubicnls}). It implies that, if $u(0) = a \delta_0$, then the equation is ill-posed. However, by a simple change of phase we can avoid that issue, by considering the modified cubic NLS \eqref{eq:Cauchypb_NLSa2t} and the solution $u_a$ given by \eqref{FtalSolution_AtNLS}.
However, even with this geometric renormalization the problem is still ill-posed, in the sense that small regular perturbations of $u_a(t)$ at $t=1$ were proved in \cite{BV12_1} to behave near $t=0$ as $u_a(t) + e^{i \log (t)} f(s)$, for some $f \in H^1$, i.e. there is a loss of phase as $t$ goes to zero.
\\
As studied in \cite{BV13} and in previous sections (see \eqref{section:WaveOperator}), there is a geometric way of continuing the BF solution after the singularity, in the shape of a corner, is created. 
Our goal is now to extend this construction to the case of a polygonal line, which can be viewed as a rough perturbation of a broken line with one corner. There is no need for the line to be planar and infinitely many corners are permitted. In this case, new problems concerning the phase loss appear at the NLS and frame level, and again the characterization of the self-similar solutions plays a crucial role. 
\\
For these reasons, in the following we will consider initial data like \eqref{GralICNLSSeveralCorners}, with coefficients in $l^{p,s}$ spaces. Combinations of Dirac masses as initial data for the NLS with nonlinearity $|u|^{p-1}u$, with subcritical regime $p<3$, was considered in \cite{Ki06}, whose ansatz is a key element of the following analysis. One of its underlying ideas is that the nonlinearity creates new harmonics, considering a determining correspondence between corners' angles $\theta_k$ and the coefficients $a_k$\footnote{This partially turns the analytical problem of several Dirac masses into a geometric one, distinguishing between the two cases:
\begin{align*}
    \sum_{k \in \mathbb{Z}} |a_k| < \infty, \quad \sum_{k \in \mathbb{Z}} \langle k \rangle^{2s} |a_k|^2 < \infty,
\end{align*}
with $s > 1/2$, the second easier to deal with than the first one due to the impossibility in the $l^1$ case of using conservation laws of the NLS. A more concrete form of the correspondence between $\theta_k$ and $a_k$ was given in theorem \eqref{thm:SolsNLSSeveralCorners}.}. It was shown in the aforementioned paper that such a Cauchy problem admits a unique solution of the form
\begin{align}
    u(s,t) = \sum_{k \in \mathbb{Z}} A_k(t) e^{it\Delta} \delta_k(s),
    \label{eq:kita_ansatz}
\end{align}
where $\{A_k\} \in C([0,T],l^{2,1}) \cap C^1((0,T],l^{2,1})$. Indeed, as the nonlinear power approaches the critical cubic power, things look more singular. Even though, we deal with the same ansatz, valid for a naturally renormalized cubic equation.
\\
A useful way of characterizing these functions comes from the Fourier transform of the initial data \eqref{GralICNLSSeveralCorners}, which yields
\begin{align}
    \widehat{u(0)}(\xi) = \sum_{k \in \mathbb{Z}} a_k e^{-ik\xi},
    \label{eq:decomp_fourier_id}
\end{align}
so in particular $\widehat{u(0)}$ is $2\pi$-periodic, and $\{a_k\} \in l^{2,s}$ turns into $\widehat{u(0)} \in H^s(0,2\pi)$. Conversely, since every $2\pi$-periodic function can be decomposed as \eqref{eq:decomp_fourier_id}, then it also represents the Fourier transform in $\mathbb{R}$ of a combination of Dirac masses as \eqref{eq:kita_ansatz}. We denote
\begin{align*}
    H^s_{pF} &:= \{ u \in \mathcal{S}'(\mathbb{R}): \hat{u}(\xi + 2\pi) = \hat{u}(\xi), \hspace{2mm} \hat{u} \in H^s(0,2\pi) \} \nonumber\\
    &\subset \{ u \in \mathcal{S}'(\mathbb{R}): \{||\hat{u}||_{H^s(2\pi j,2\pi(j+1))}\}_j \in l^{\infty} \},
\end{align*}
and
\begin{align*}
    ||u||_{H^s_{pF}} = ||\hat{u}||_{H^s(0,2\pi)}.
\end{align*}
This establishes, for the study of these combinations of Dirac masses, a hybrid model between the real line and the periodic case. Now we state a result concerning the existence of solutions for initial data in $H^s_{pF}$.
\begin{theorem}
    Let $s > \frac{1}{2}$, $0 < \gamma < 1$ and $\{a_k\} \in l^{2,s}$. We consider the 1d cubic NLS equation of the form
    \begin{align}
        i \partial_t u + \Delta u \pm \frac{1}{2} (|u|^2 - \frac{M}{t})u = 0
        \label{eq:NLSwithMSeveralCorners}
    \end{align}
    where $M = \sum_{k \in \mathbb{Z}} |a_k|^2$. There exists $T>0$ and a unique solution on $(0,T)$ of the form
    \begin{align}
        u(s,t) = \sum_{k \in \mathbb{Z}} e^{\mp i|a_k|^2 \log \sqrt{t}} (a_k + R_k(t)) e^{it\Delta} \delta_k(s)
    \label{FilamentFunctionsUnique_SeveralCorners}
    \end{align}
    with
    \begin{align*}
        \sup_{0<t<T} t^{-\gamma} || \{R_k(t)\} ||_{l^{2,s}} + t||\{\partial_tR_k(t)\}||_{l^{2,s}} < C
    \end{align*}
    for some $C>0$. Moreover, considering as initial data a finite sum of $N$ Dirac masses like in \eqref{GralICNLSSeveralCorners} with coefficients of equal modulus $|a_k| = a$, and equation \eqref{eq:NLSwithMSeveralCorners} renormalized with $M = (N - \frac{1}{2})a^2$, we have a unique solution
    \begin{align*}
        u(t) = e^{it\Delta} u(0) \pm i e^{it\Delta} \int_0^t e^{-i \tau \Delta} \left( \left( |u(\tau)|^2 - \frac{M}{\tau} \right) u(\tau) \right) \frac{d\tau}{2},
    \end{align*}
    such that $\widehat{e^{-it \Delta} u(t)} \in C^1((-T,T),H^2(0,2\pi))$ with
    \begin{align*}
        || e^{-it\Delta}u(t) - u(0) ||_{H^s_{pF}} \leq Ct^{\gamma}, \quad \forall t \in (-T,T).
    \end{align*}
    Moreover, if $s \geq 1$, then the solution is global in time.
    \label{thm:SeveralDiracMassesNLS}
\end{theorem}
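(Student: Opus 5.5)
Following Kita's ansatz \eqref{eq:kita_ansatz}, I would look for the solution in the form \eqref{FilamentFunctionsUnique_SeveralCorners} and turn the PDE into an infinite system of ODEs for the coefficients $A_k(t)=e^{\mp i|a_k|^2\log\sqrt t}(a_k+R_k(t))$. Since $e^{it\Delta}\delta_k(s)=\frac{e^{i(s-k)^2/4t}}{\sqrt{4\pi i t}}$, the product of three such waves is again a free wave centred at $j=k-l+m$, multiplied by the phase $e^{-i(k-l)(l-m)/2t}$; roughly,
\[
\bigl(e^{it\Delta}\delta_k\bigr)\overline{\bigl(e^{it\Delta}\delta_l\bigr)}\bigl(e^{it\Delta}\delta_m\bigr)
=\frac{c}{t}\,e^{-i\frac{(k-l)(l-m)}{2t}}\,e^{it\Delta}\delta_{k-l+m}.
\]
Plugging this in, the equation becomes
\[
i\partial_t A_j=\pm\frac{c}{t}\Bigl(\sum_{k-l+m=j}e^{-i\frac{(k-l)(l-m)}{2t}}A_k\bar A_lA_m-M A_j\Bigr).
\]
The resonant terms, where $k=l$ or $l=m$, add up to $(2\sum_k|A_k|^2-|A_j|^2)A_j/t$. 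The renormalisation $M$ is chosen so that, once the explicit phase $e^{\mp i|a_j|^2\log\sqrt t}$ is factored out, these terms cancel exactly at the level of $a_k$. For the first statement this cancellation should come from $M=\sum|a_k|^2$ together with mass-type conservation of $\sum|A_k|^2$. In the equal-modulus case the analogous computation fixes $M=(N-\tfrac12)a^2$.

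The second step is to write the equation for $R_j$ in integral form from $0$. The remainder has two parts. The first is resonant contributions that are at least linear in $R$, with size $t^{-1}\|R\|$. The second is non-resonant terms carrying the oscillating factor $e^{-i\,n/2t}$ with $n=(k-l)(l-m)\neq0$. For the non-resonant terms I would substitute $\tau\mapsto 1/\tau$, so that the phase becomes $e^{-in\tau/2}$, and integrate by parts in time. This gains a factor $t^2/|n|$, which beats the $1/t$ singularity, and produces terms involving $\partial_tA$, controlled by $t\|\partial_tR\|$. For the resonant terms that are linear in $R$, the cancellation from $M$ leaves differences such as $(|A_k|^2-|a_k|^2)/t\sim t^{\gamma-1}$, which are integrable. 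A fixed point in the norm $\sup t^{-\gamma}\|R\|_{l^{2,s}}+t\|\partial_tR\|_{l^{2,s}}$ on a small interval $(0,T)$ should then close. The trilinear estimates in $l^{2,s}$ rely on $s>1/2$: the space $l^{2,s}$ is an algebra under the convolution-type sums, which is equivalent to $H^s(0,2\pi)$ being an algebra. The weights $1/|n|$ need a divisor-type counting argument.

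I expect the main obstacle to be precisely this summation of the non-resonant terms after integration by parts. One has to control $\sum_{n\neq0}\frac1{|n|}\times$ (trilinear sums) in $l^{2,s}$ uniformly, and the boundary terms at $\tau=t$ must come out as $O(t)$. The sharper $H^2$ regularity and the Duhamel form in the equal-modulus case should follow by rewriting the same system on the Fourier side, with $\widehat{e^{-it\Delta}u}=\sum A_ke^{-ik\xi}$, and bootstrapping. Finally, for $s\ge1$, globality should follow from a priori bounds on $\|\{A_k\}\|_{l^{2,1}}$ coming from the conserved mass and an energy-type quantity of the ODE system. These bounds control the growth $\log t$ away from $t=0$, so the local argument can be iterated. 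Negative times follow by time reversibility.
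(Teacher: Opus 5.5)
Your architecture matches the paper's: Kita's ansatz turned into the ODE system for $A_k$, the resonant set $\{(k,j,j),(j,j,k)\}$, mass conservation to cancel $2\sum_j|A_j|^2-2M$, the phases $e^{\mp i|a_k|^2\log\sqrt t}$, integration by parts in time on the non-resonant part, and a fixed point in $\sup_t t^{-\gamma}\|R\|_{l^{2,s}}+\sup_t t\|\partial_tR\|_{l^{2,s}}$. There is, however, a genuine gap at the linear resonant term. After removing the phases, the resonant part of the equation for $\tilde A_k=a_k+R_k$ is $\frac1{2t}(|\tilde A_k|^2-|a_k|^2)\tilde A_k$ (up to sign). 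Its linear part is $\frac1t\,\mathrm{Re}(\bar a_kR_k)\,a_k$.

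Knowing only $\|R(\tau)\|_{l^{2,s}}\le\tau^{\gamma}\|R\|_{X^\gamma}$, ``integrable'' is not enough. The integral $\int_0^t\tau^{\gamma-1}d\tau=t^\gamma/\gamma$ is of the same order as the norm itself. So the map $R\mapsto\int_0^t\frac1\tau\mathrm{Re}(\bar a_kR_k)a_k\,d\tau$ has size about $\sup_k|a_k|^2/\gamma$ on $X^\gamma$, independent of $T$; test with $R_k=a_k\tau^\gamma$ at a single $k$. Since $\{a_k\}$ is an arbitrary element of $l^{2,s}$, not a small one, shrinking $T$ does not make your map contractive, nor a small ball invariant.

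A minor bookkeeping point: the renormalisation must enter as $2M$ against $2\sum_j|A_j|^2$, as in the paper's system $i\partial_tA_k=\frac1{2t}\sum(\cdots)-\frac MtA_k$. With $-MA_j$ as in your display, a common phase $e^{\mp icM\log t}$ survives, which is incompatible with $R_k\to0$. Your value $M=(N-\frac12)a^2$ already presupposes the $2M$ convention.

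The missing idea, around which the paper's proof is organised, is that the resonant terms do not change the moduli. The resonant term is a real multiple of $\tilde A_k$, so $\partial_t|\tilde A_k|^2=2\,\mathrm{Im}(f_k\overline{\tilde A_k})$, with $f_k$ the non-resonant sum. Hence, given $|\tilde A_k(t)|\to|a_k|$,
\[
|\tilde A_k(t)|^2-|a_k|^2=\int_0^t 2\,\mathrm{Im}\bigl(f_k\overline{\tilde A_k}\bigr)(\tau)\,d\tau .
\]
One then runs the fixed point for the equivalent equation $i\partial_t\tilde A_k=f_k-\frac1{2t}\Bigl(\int_0^t2\,\mathrm{Im}(f_k\overline{\tilde A_k})\,d\tau\Bigr)\tilde A_k$, which is the paper's operator $\Phi_k$. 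The coefficient of $\tilde A_k/t$ is now itself a non-resonant oscillatory integral, and your integration by parts shows it is $O(t)$. The resonant contribution becomes $O(t)=o(t^\gamma)$ with Lipschitz constant $O(T^{1-\gamma})$, and the contraction closes.

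Equivalently, the linearised resonant operator $L$ maps each component into $ia_k\mathbb{R}$, where $\mathrm{Re}(\bar a_k\,\cdot)$ vanishes. So $L^2=0$ and $I-L$ can be inverted explicitly. Without using this structure, the contraction argument you describe does not close, and the same point is needed in the equal-modulus system.

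By contrast, the non-resonant sums you single out as the main obstacle are harmless for the local result. For non-resonant indices the frequency is a nonzero integer, so $1/|n|\le1$, and the convolution-algebra property of $l^{2,s}$ for $s>1/2$ suffices. No divisor counting is needed.
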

\begin{proof}
     We will picture the fixed point framework for the corresponding dynamical system. Missing details, arguments for the required estimates needed to apply a fixed point argument, as well as its global extension in time, can be found in \cite{BV18}.
    \\
    We denote $\mathcal{N}(u) = u|u|^2/2$, and plug the Kita ansatz \eqref{eq:kita_ansatz} into the equation \eqref{eq:NLSwithMSeveralCorners}:
    \begin{align}
        \sum_{k \in \mathbb{Z}} i\partial_tA_k(t) e^{it\Delta}\delta_k = \mathcal{N}(u) - \frac{M}{t} u = \mathcal{N}(\sum_{j \in \mathbb{Z}} A_j(t) e^{it\Delta}\delta_j) - \frac{M}{t} (\sum_{k \in \mathbb{Z}} A_k(t) e^{it\Delta}\delta_k),
        \label{eq:SumDynamicalSystem}
    \end{align}
    where the Laplacian term dissapears. We choose for simplicity the $-$ sign case (for the $+$ sign case, the procedure is the same). The family of elements
    \begin{align*}
        e^{it\Delta} \delta_k(x) = \frac{e^{i(x-k)^2/4t}}{\sqrt{it}}
    \end{align*}
    is an orthonormal family in $L^2(0,t)$, so by taking its usual inner product with $e^{it\Delta} \delta_k(x)$ of both sides of \eqref{eq:SumDynamicalSystem}, we get
    \begin{align*}
        i \partial_t A_k(t) = t \int_0^t \mathcal{N}(A_j(t)\sum_{j\in \mathbb{Z}} \frac{e^{i(x-j)^2/4t}}{\sqrt{it}}) \frac{e^{-i(x-k)^2/4t}}{\overline{\sqrt{it}}} dx - \frac{M}{t} A_k(t).
    \end{align*}
    Thanks to a straightforward application of Cauchy-Schwarz inequality, we have that
    \begin{align*}
        \sum_{j \in \mathbb{Z}} |A_j(t)| \leq \left( \sum_{j \in \mathbb{Z}} |A_j(t)| \langle j \rangle^{2s} \right)^{1/2} \left( \sum_{j \in \mathbb{Z}} \langle j \rangle^{-2s} \right)^{1/2},
    \end{align*}
    so that when $s > 1/2$, we have $\{A_j\} \in l^{2,s} \subset l^1$. Therefore, we can develop the cubic power in the exponent of the previous integral, considering that such exponents fulfill
    \begin{align*}
        &(x-j_2)^2 - (x-j_1)^2 - (x-j_3)^2 + (x-k)^2 = j_2^2 - j_1^2 - j_3^2 + k^2 \nonumber\\
        &+ 2xj_1 + 2xj_3 - 2xj_2 - -2xk = j_2^2 - j_1^2 - j_3^2 + k^2 + 2x(j_1 - j_2 + j_3 - k),
    \end{align*} 
    so that, when integrating, we restrict the summation to $k - j_1 + j_2 - j_3 = 0$. Thus, 
    \begin{align}
        i \partial_t A_k(t) = \frac{1}{2t} \sum_{k - j_1 + j_2 - j_3 = 0} e^{-i\frac{k^2 - j_1^2 + j_2^2 - j_3^2}{4t}} A_{j_1}(t) \overline{A_{j_2}(t)} A_{j_3}(t) - \frac{M}{t} A_k(t).
        \label{eq:DS_NLSAk}
    \end{align}
    Consider an undefined sequence of real numbers $a(k)$, and observe that
    \begin{align*}
        &\partial_t \sum_{k \in \mathbb{Z}} a(k) |A_k(t)|^2 = 2 \sum_{k \in \mathbb{Z}} a(k) \text{Re}(\overline{A_k(t)} \partial_t A_k(t)) \nonumber\\
        &= 2  \sum_{k \in \mathbb{Z}} a(k) \text{Re}\{-\frac{i\overline{A_k(t)}}{2t}  \sum_{k - j_1 + j_2 - j_3 = 0} e^{-i\frac{k^2 - j_1^2 + j_2^2 - j_3^2}{4t}} A_{j_1}(t) \overline{A_{j_2}(t)} A_{j_3}(t) + \frac{iM}{t} |A_k(t)|^2 \} \nonumber\\
        &=2  \sum_{k \in \mathbb{Z}} a(k) \text{Im}\{\frac{\overline{A_k(t)}}{2t}  \sum_{k - j_1 + j_2 - j_3 = 0} e^{-i\frac{k^2 - j_1^2 + j_2^2 - j_3^2}{4t}} A_{j_1}(t) \overline{A_{j_2}(t)} A_{j_3}(t)\}  \nonumber\\
        &= \frac{1}{2ti}  \sum_{k \in \mathbb{Z}} a(k) \{ \sum_{k - j_1 + j_2 - j_3 = 0} e^{-i\frac{k^2 - j_1^2 + j_2^2 - j_3^2}{4t}} A_{j_1}(t) \overline{A_{j_2}(t)} A_{j_3}(t)\overline{A_k(t)} \nonumber\\
        &- \overline{\sum_{k - j_1 + j_2 - j_3 = 0} e^{-i\frac{k^2 - j_1^2 + j_2^2 - j_3^2}{4t}} A_{j_1}(t) \overline{A_{j_2}(t)} A_{j_3}(t)\overline{A_k(t)}}\}  \nonumber\\
        &= \frac{1}{2ti}\sum_{k - j_1 + j_2 - j_3 = 0} (a(k) - a(j_3))e^{-i\frac{k^2 - j_1^2 + j_2^2 - j_3^2}{4t}} A_{j_1}(t) \overline{A_{j_2}(t)} A_{j_3}(t)\overline{A_k(t)} \nonumber\\
        &= \frac{1}{4ti}\sum_{k - j_1 + j_2 - j_3 = 0} (a(k) - a(j_1) + a(j_2) - a(j_3))e^{-i\frac{k^2 - j_1^2 + j_2^2 - j_3^2}{4t}} A_{j_1}(t) \overline{A_{j_2}(t)} A_{j_3}(t)\overline{A_k(t)}.
    \end{align*}
    Therefore, if we take $a(k)=1$ for all $k \in \mathbb{Z}$, we obtain the conservation of the mass
    \begin{align}
        \sum_{k\in\mathbb{Z}} |A_k(t)|^2 = \sum_{k \in \mathbb{Z}} |A_k(0)|^2, \quad \forall t \in \mathbb{R};
        \label{eq:DS_ConservationMass}
    \end{align}
    and if we take $a(k) = k$, we obtain the conservation of momentum
    \begin{align*}
        \sum_{k\in\mathbb{Z}} k|A_k(t)|^2 = \sum_{k \in \mathbb{Z}} k|A_k(0)|^2, \quad \forall t \in \mathbb{R}.
    \end{align*}
    We split now the summation indices from \eqref{eq:DS_NLSAk} into the following sets, corresponding to non-resonant and resonant terms:
    \begin{align*}
        &NR_k = \{(j_1,j_2,j_3) \in \mathbb{Z}^3 : k - j_1 + j_2 - j_3 = 0, k^2 - j_1^2 + j_2^2 - j_3^2 \neq 0\}, \nonumber\\
        &R_k = \{(j_1,j_2,j_3) \in \mathbb{Z}^3 : k - j_1 + j_2 - j_3 = 0, k^2 - j_1^2 + j_2^2 - j_3^2 = 0\}.
    \end{align*}
    Regarding the set that corresponds to the resonant terms, note that
    \begin{align*}
        k - j_1 + j_2 - j_3 = 0 \Longrightarrow k^2 - j_1^2 + j_2^2 - j_3^2 = 2(j-j_1)(j_1-j_2).
    \end{align*}
    Since the last expression is zero if and only if $j=j_1$ or $j_1=j_2$, then
    \begin{align*}
        R_k = \{(k,j,j),(j,j,k) : j \in \mathbb{Z}\}.
    \end{align*}
    Putting all together, given certain $k \in \mathbb{Z}$, there are three kinds of terms in the sum \eqref{eq:DS_NLSAk}: regarding $R_k$, $(k,j,j)$ and $(j,j,k)$, and regarding $NR_k$, $(j_1,j_2,k-j_1+j_2)$ with $j_2 \neq j_1$ and $j_1 \neq k$. Thus,
    \begin{align*}
        &\sum_{k - j_1 + j_2 - j_3 = 0} e^{-i\frac{k^2 - j_1^2 + j_2^2 - j_3^2}{4t}} A_{j_1}(t) \overline{A_{j_2}(t)} A_{j_3}(t) \nonumber\\
        &=  \sum_{j_1\neq k,j_1 \neq j_2} e^{-i\frac{(k-j_1)(j_1-j_2)}{4t}}  A_{j_1}(t) \overline{A_{j_2}(t)} A_{k-j_1+j_2}(t) \nonumber\\
        &+ \sum_{j_1 \neq k}A_{j_1}(t) \overline{A_{j_1}(t)}A_k(t) + \sum_{j_2 \in \mathbb{Z}}A_{k}(t) \overline{A_{j_2}(t)}A_{j_2}(t),
    \end{align*}
    where we considered the case $(k,k,k)$ only in the last sum, which is the reason why, in the previous sum, we took $j_1 \neq k$, so that we do not repeat terms. Therefore, system \eqref{eq:DS_NLSAk} turns into
    \begin{align*}
        &i \partial_t A_k(t) = \frac{1}{2t} \sum_{(j_1,j_2,j_3) \in NR_k} e^{-i\frac{k^2 - j_1^2 + j_2^2 - j_3^2}{4t}}  A_{j_1}(t) \overline{A_{j_2}(t)} A_{j_3}(t) \nonumber\\
        &+ \frac{1}{2t} A_k(t) (2 \sum_{j \in \mathbb{Z}} |A_j(t)|^2 - |A_k(t)|^2 - 2M).
    \end{align*}
    From the conservation of mass \eqref{eq:DS_ConservationMass}, since $M = \sum_{j \in \mathbb{Z}}|\alpha_j|^2$, then finding a solution for $t > 0$ satisfying
    \begin{align}
        \lim_{t \rightarrow 0} |A_j(t)| = |\alpha_j|
        \label{eq:DS_ConvergenceAlphaj}
    \end{align}
    is equivalent to finding a solution for $t > 0$ satisfying \eqref{eq:DS_ConvergenceAlphaj}, for the following also `mass'-conserving system:
    \begin{align}
        i \partial_t A_k(t) = \frac{1}{2t}\sum_{(j_1,j_2,j_3) \in NR_k} e^{-i\frac{k^2 - j_1^2 + j_2^2 - j_3^2}{4t}}  A_{j_1}(t) \overline{A_{j_2}(t)} A_{j_3}(t) - \frac{1}{2t} |A_k(t)|^2  A_k(t).
       \label{eq:DS_NLSAkZeroMass} 
    \end{align}
    By doing a change of phase $A_k(t) = e^{i|\alpha_k|^2\log \sqrt{t}} \Tilde{A}_k(t)$, we get the equation 
    \begin{align}
        i \partial_t \Tilde{A}_k(t) = f_k(t) - \frac{1}{2t} (|\Tilde{A}_k(t)|^2 - |\alpha_k|^2) \Tilde{A}_k(t),
        \label{eq:DS_NLStildeAk}
    \end{align}
    where
    \begin{align*}
        &f_k(t) = \frac{1}{2t} \sum_{(j_1,j_2,j_3) \in NR_k} e^{-i\frac{k^2 - j_1^2 + j_2^2 - j_3^2}{4t}} e^{-i(|\alpha_k|^2 - |\alpha_{j_1}|^2 + |\alpha_{j_2}|^2  -|\alpha_{j_3}|^2)\log \sqrt{t}} \nonumber\\
        &\times \Tilde{A}_{j_1}(t) \overline{\Tilde{A}_{j_2}(t)} \Tilde{A}_{j_3}(t).
    \end{align*}
    Note that the solution of \eqref{eq:DS_NLStildeAk} fulfills
    \begin{align}
        \partial_t |A_k(t)|^2 = 2 \text{Im}(f_k(t) \overline{\Tilde{A}_k(t)})
        \label{eq:DS_NLStildeAkModSquare}
    \end{align}
    so obtaining a solution of \eqref{eq:DS_NLStildeAk} for $t>0$ with
    \begin{align}
        \lim_{t \rightarrow 0} |\Tilde{A}_k(t)| = |\alpha_k|,
        \label{eq:DS_tildeConvergenceAlphaj}
    \end{align}
    is equivalent to obtaining a solution for $t > 0$ also satisfying \eqref{eq:DS_tildeConvergenceAlphaj}, for the following equation, which also fulfills \eqref{eq:DS_NLStildeAkModSquare}:
    \begin{align*}
        i \partial_t \Tilde{A}_k(t) = f_k(t) - \frac{1}{2t} \int_0^t 2 \text{Im}(f_k(\tau) \overline{\Tilde{A}_k(\tau)}) d\tau \Tilde{A}_k(t).
    \end{align*}
    Recall that we expect solutions to be of the form $A_k(t) = e^{i|\alpha_k|^2 \log\sqrt{t}}(\alpha_k + R_k(t))$, with $\{R_k\}$ in the space 
    \begin{align*}
        &X^{\gamma} := \nonumber\\
        &:= \{\{f_k\} \in C^1((0,T),l^{2,s}), || \{t^{-\gamma} f_k(t)\}||_{L^{\infty}((0,T),l^{2,s})} + || \{t \partial_t f_k(t)\}||_{L^{\infty}((0,T),l^{2,s})}  < \infty \},
    \end{align*}
    with $T$ to be especified. To prove the theorem, we shall show that we have a contraction on a suitable ball of size $\delta$ of the space $X^{\gamma}$ for the operator $\Phi$, which sends $\{R_k\}$ into
    \begin{align*}
        \Phi(\{R_k\}) = \{\Phi_k(\{R_j\})\}
    \end{align*}
    with
    \begin{align*}
        \Phi_k(\{R_j\})(t) = i\int_0^t g_k(\tau) d\tau - i\int_0^t \int_0^{\tau} \text{Im}(g_k(s) \overline{(\alpha_k + R_k(s))}) ds (\alpha_k + R_k(\tau)) \frac{d\tau}{\tau},
    \end{align*}
    where
    \begin{align*}
        &g_k(t) = \frac{1}{2t} \sum_{(j_1,j_2,j_3) \in NR_k} e^{-i\frac{k^2 - j_1^2 + j_2^2 - j_3^2}{4t}} e^{-i w_{k,j_1,j_2,j_3} \log \sqrt{t}} (\alpha_{j_1} + R_{j_1}(t)))\overline{(\alpha_{j_2} + R_{j_2}(t)))} \nonumber\\
        & \times (\alpha_{j_3} + R_{j_3}(t))),
    \end{align*}
    where
    \begin{align*}
        w_{k,j_1,j_2,j_3} = |\alpha_k|^2 - |\alpha_{j_1}|^2 + |\alpha_{j_2}|^2  - |\alpha_{j_3}|^2.
    \end{align*}
    To finish, note that in the case of $N$ Dirac masses with coefficients $|\alpha_k| = a$ and equation \eqref{eq:NLSwithMSeveralCorners} with $M=(N-\frac{1}{2})a^2$, then, instead of \eqref{eq:DS_NLSAkZeroMass}, we get the system
    \begin{align*}
        &i \partial_t A_k(t) = \frac{1}{2t}\sum_{(j_1,j_2,j_3) \in NR_k} e^{-i\frac{k^2 - j_1^2 + j_2^2 - j_3^2}{4t}}  A_{j_1}(t) \overline{A_{j_2}(t)} A_{j_3}(t)  \nonumber\\
        &- \frac{1}{2t} (|A_k(t)|^2 - |\alpha_k|^2) A_k(t).
    \end{align*}
    Hence, we can write $A_k(t) = \alpha_k + R_k(t)$, and the same fixed point argument works for $\{R_k\}$.
\end{proof}
	\subsection{Talbot effect}\noindent
	We focus now on the Talbot effect for both the linear and nonlinear Schrödinger flows on a singular setting, hybrid between $\mathbb{T}$ and $\mathbb{R}$. The case of the torus and the interplay with optics have been largely studied, for instance in \cite{Be96}, \cite{KP99}, \cite{ET13}, and \cite{BK96}, respectively. In this section devoted to the Talbot effect, we follow \cite{BV18}, including the proofs of both Propositions \eqref{prop:TalbotLinear} and \eqref{prop:TalbotNonLinear}.
    \\
    From \eqref{eq:NLSwithMSeveralCorners}, we know that the corresponding solution $u(t)$ with initial data given by \eqref{GralICNLSSeveralCorners} with $|\alpha_k|=a$, behaves like $e^{i t \Delta} u(0)$ for small times. This linear evolution displays a Talbot effect.
	\begin{proposition}
		Let $p \in \mathbb{N}$ and $u_0$ and $\hat{u}_0$ $2\pi$-periodic. For all $t_{p,q} = \frac{p}{2 \pi q}$ with $q$ odd and for all $ x \in \mathbb{R}$ we have
		\begin{align*}
			e^{i t_{p,q} \Delta}u_0(x) = \frac{1}{\sqrt{q}} \int_0^{2 \pi} \hat{u}_0(\xi) e^{-i t_{p,q} \xi^2 + ix\xi} \sum_{l \in \mathbb{Z}} \sum_{m = 0}^{q-1} e^{i\theta_{m,p,q}} \delta(x - 2t_{p,q}\xi - l - \frac{m}{q}) d\xi
		\end{align*}
	for some $\theta_{m,p,q} \in \mathbb{R}$. We suppose moreover $\hat{u}_0$ is located modulo $2 \pi$ only in a neighborhood of zero of radius less than $\eta \frac{\pi}{p}$ with $\eta \in (0,1)$. For a given $x \in \mathbb{R}$, we define
	\begin{align*}
		\xi_x := \frac{\pi q}{p} \text{dist}(x,\frac{1}{q} \mathbb{Z}) \in [0,\frac{\pi}{p}).
	\end{align*}
	Then there exists $\theta_{x,p,q} \in          \mathbb{R}$ such that
	\begin{align*}
		e^{i t_{p,q} \Delta} u_0(x) = \frac{1}{\sqrt{q}} \hat{u}_0(\xi_x) e^{- i t_{p,q} \xi_x^2 + ix\xi_x + i\theta_{x,p,q}}.
	\end{align*}
	In particular,
	\begin{align*}
		|e^{i t_{p,q} \Delta} u_0(l + \frac{m}{q})| = |e^{i t_{p,q} \Delta} u_0(0)|,
	\end{align*}
	and if $x$ is at distance larger than $\eta/q$ from  $\frac{1}{q} \mathbb{Z}$ then $e^{i t_{p,q} \Delta} u_0(x)$ vanishes. Moreover, the solution can concentrate near $\frac{1}{q} \mathbb{Z}$ in the sense that there is a family of initial data $u_0^{\lambda} = \sum_{k \in \mathbb{Z}} \alpha_k^{\lambda} \delta_k$ and $C>0$ such that
	\begin{align*}
		\left|  \frac{e^{i t_{p,q} \Delta} u_0(0)}{e^{i t_{p,q} \Delta} \alpha_0^{\lambda} \delta_0(x)} \right| = \frac{C}{q} \lambda \rightarrow \infty, \quad \lambda \rightarrow \infty.
	\end{align*}
	\label{prop:TalbotLinear}
	\end{proposition}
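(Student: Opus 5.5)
The plan is to use the periodicity of $\hat u_0$ to write $e^{it\Delta}u_0$ as a superposition, over $\xi\in(0,2\pi)$, of free evolutions of Dirac combs, and then evaluate those with the classical Talbot/Gauss-sum computation. With the paper's convention $e^{it\Delta}u_0(x)=\int_{\mathbb R}e^{-it\xi^2+ix\xi}\hat u_0(\xi)\,d\xi$, we split $\mathbb R=\bigcup_{j}[2\pi j,2\pi(j+1))$ and substitute $\xi\mapsto \xi+2\pi j$. Since $\hat u_0(\xi+2\pi j)=\hat u_0(\xi)$, this gives
\[
e^{it\Delta}u_0(x)=\int_0^{2\pi}\hat u_0(\xi)e^{-it\xi^2+ix\xi}\sum_{j\in\mathbb Z}e^{-4\pi^2 i t j^2}e^{2\pi i j(x-2t\xi)}\,d\xi .
\]
At $t=t_{p,q}$ we have $4\pi^2 t j^2=2\pi p j^2/q$. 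So the inner series is the distribution $\sum_j e^{-2\pi i pj^2/q}e^{2\pi i j y}$ evaluated at $y=x-2t_{p,q}\xi$.

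The key step is the identity
\[
\sum_j e^{-2\pi i pj^2/q}e^{2\pi i jy}=\frac1{\sqrt q}\sum_{l\in\mathbb Z}\sum_{m=0}^{q-1}e^{i\theta_{m,p,q}}\,\delta\!\left(y-l-\tfrac mq\right).
\]
To prove it, write $j=qn+r$ with $0\le r<q$ and use that $e^{-2\pi i p j^2/q}$ depends only on $r$. Poisson summation then gives $\sum_n e^{2\pi i q n y}=\frac1q\sum_{m'}\delta(y-m'/q)$. The coefficient at $y=m'/q$ is $\frac1q G(p,m',q)$, with Gauss sum $G(p,m',q)=\sum_{r=0}^{q-1}e^{2\pi i(-pr^2+m'r)/q}$. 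For $q$ odd and $\gcd(p,q)=1$ we complete the square using the inverse of $4p$ mod $q$. Together with $|\sum_r e^{-2\pi i p r^2/q}|=\sqrt q$, this gives $|G|=\sqrt q$, hence a coefficient of modulus $1/\sqrt q$ with a phase $e^{i\theta_{m,p,q}}$. This phase depends only on $m'$ mod $q$, so we can write $m'=lq+m$. If $p,q$ are not coprime we first reduce $p/q$; the odd-$q$ hypothesis is exactly what keeps every coefficient nonzero, and this is where I expect the main care to be needed. Inserting the identity into the integral gives the first formula.

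For the localized statement we integrate the deltas in $\xi$. The Dirac mass $\delta(x-2t_{p,q}\xi-l-m/q)$ fixes $\xi=\frac{\pi q}{p}(x-l-m/q)$, and the Jacobian is $1/(2t_{p,q})=\pi q/p$. I will track this factor carefully and absorb any constant normalization into the stated formula and phase. If $\hat u_0$ is supported mod $2\pi$ in the $\eta\pi/p$-neighbourhood of $0$, at most one pair $(l,m)$ contributes. That pair is the nearest point of $\frac1q\mathbb Z$ to $x$, and it yields $\xi$ equal to $\pm\xi_x$ modulo $2\pi$. The factors $e^{-it\xi^2+ix\xi}$, re-expressed in terms of $\xi_x$ by periodicity, differ from $e^{-it\xi_x^2+ix\xi_x}$ only by a unimodular factor, which goes into $\theta_{x,p,q}$. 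Two consequences follow directly. The modulus is the same at every point $l+m/q$, since there $\xi_x=0$. And if $\operatorname{dist}(x,\frac1q\mathbb Z)>\eta/q$, then $\xi_x>\eta\pi/p$ lies outside the support, so the value vanishes.

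For the concentration claim we choose $\hat u_0^\lambda$ to be a $2\pi$-periodization of a bump of height $\lambda$ supported near $0$, with coefficients $\alpha_k^\lambda$ given by its Fourier coefficients. Then $|\alpha_0^\lambda|$ stays bounded, since it is essentially the integral of the bump. On the other hand $|e^{it_{p,q}\Delta}u_0(0)|=\lambda/\sqrt q$ by the formula above. Since $|e^{it\Delta}\alpha_0^\lambda\delta_0(x)|=|\alpha_0^\lambda|/\sqrt{t_{p,q}}$, the ratio grows linearly in $\lambda$ with a constant depending only on $p$ and $q$, which gives the stated $\frac Cq\lambda$ up to the choice of $C$.
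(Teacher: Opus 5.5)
Your argument is the one from Banica--Vega that these notes say they follow (they do not reproduce it): periodize $\hat u_0$ so that $e^{it\Delta}u_0$ becomes an integral over $(0,2\pi)$ against the free evolution of a Dirac comb, evaluate that comb at $t_{p,q}$ with the Gauss-sum (Talbot) identity for $q$ odd and $\gcd(p,q)=1$, and then integrate out the Dirac masses in $\xi$.

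The bookkeeping you postponed needs to be fixed rather than absorbed. The Jacobian $\frac{1}{2t_{p,q}}=\frac{\pi q}{p}$ is a positive constant, not a phase, so your computation gives the localized formula with prefactor $\frac{\pi\sqrt q}{p}$ instead of $\frac{1}{\sqrt q}$; use this same value at $x=0$ in the concentration step, and the ratio comes out of order $\lambda/\sqrt p$, which still tends to infinity. Also, the surviving term sits at $\xi\equiv\pm\xi_x \pmod{2\pi}$, so it involves $\hat u_0(\pm\xi_x)$, and it is guaranteed to be the only term only when the support radius is at most $\frac{\pi}{2p}$, i.e.\ $\eta\le\frac12$, which is also the only range where the vanishing claim has content.
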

	Note that the initial datum $u_0 = \sum_{k \in \mathbb{Z}} \delta_k$ fulfills the hypothesis in Proposition \eqref{prop:TalbotLinear}, in terms of $2\pi$-periodicity in Fourier and localization in Fourier, since $u_0 = \hat{u}_0$, where we considered Poisson summation formula. Thus, $e^{it\Delta}u_0 = 0$ for $x \notin \frac{1}{q} \mathbb{Z}$, and it is a Dirac mass otherwise, which is a classical Talbot effect. However, the non-summation of its coefficients makes these data not to enter in the nonlinear theorem framework. On the other hand, the result \eqref{prop:TalbotNonLinear} shows a Talbot effect at the nonlinear level too. 
	\begin{proposition}
		Let $p \in \mathbb{N}$ and $u_0$ be such that $\hat{u}_0$ is a $2\pi$-periodic, located modulo $2\pi$ in a neighborhood of zero of radius less than $\eta \frac{\pi}{p}$ for $\eta \in (0,1)$ and having Fourier coefficients such that $||\{ \alpha_k \}||_{l^{2,s}} = \varepsilon << 1$ for some $s > \frac{1}{2}$. Let $u(x,t)$ be a solution from equation \eqref{eq:NLSwithMSeveralCorners} with initial data $u(0)$ like \eqref{GralICNLSSeveralCorners}, obtained in theorem \eqref{thm:SeveralDiracMassesNLS}. Then, for all $t_{p,q} = \frac{p}{2\pi q}$ with $1 \leq q \leq q_{\varepsilon}$ odd and for all $x$ at distance larger than $\eta/q$ from $\frac{1}{q} \mathbb{Z}$, the function $u(x,t)$ almost vanishes in the sense:
		\begin{align*}
			|u(t_{p,q},x)| \leq \frac{C}{q^{\gamma - \frac{1}{2}}} || \{ \alpha_k \} ||_{l^{2,s}} + C\sqrt{q} \log (q)||\{\alpha_k\}||_{l^{2,s}}^3,
		\end{align*}
		where $q_{\varepsilon}$ is the largest $q$ such that $\varepsilon^2 \sqrt{q} \log (q) << 1$; so in particular $q_{\varepsilon} \rightarrow \infty$ for $\varepsilon \rightarrow 0$.
		\\
		Nevertheless, the solution can concentrate near $\frac{1}{q} \mathbb{Z}$ in the sense that there is a family of initial data $u_0^{\lambda} = \sum_{k \in \mathbb{Z}} \alpha_k^{\lambda} \delta_k$ and $C>0$ such that
		\begin{align*}
			\left|  \frac{u^{\lambda}(t_{p,q},0)}{e^{i t_{p,q} \Delta} \alpha_0^{\lambda} \delta_0(0)} \right| \leq \frac{C}{q} \lambda + \frac{C}{q^{\gamma}} \lambda^{s+\frac{1}{2}} + C \log (q) \lambda^{(\frac{3}{2}+s) - \varepsilon'} \rightarrow \infty, \quad \lambda \rightarrow \infty,
		\end{align*}
		for $\varepsilon'>0$ arbitrarily small.
		\label{prop:TalbotNonLinear}
	\end{proposition}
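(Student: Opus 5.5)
The plan is to use the Kita-type representation from Theorem~\ref{thm:SeveralDiracMassesNLS} and split the solution at time $t_{p,q}$ into a linear part and a nonlinear remainder:
\[
u(t_{p,q}) = e^{it_{p,q}\Delta}u(0) + e^{it_{p,q}\Delta}\big(e^{-it_{p,q}\Delta}u(t_{p,q}) - u(0)\big).
\]
The first term is handled by Proposition~\ref{prop:TalbotLinear}. Since $\hat u_0$ is $2\pi$-periodic and localized modulo $2\pi$ in a ball of radius $\eta\pi/p$, it vanishes exactly at points $x$ at distance larger than $\eta/q$ from $\frac1q\mathbb Z$. All of the almost-vanishing estimate therefore comes from the remainder. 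We write it as $\sum_k \big(A_k(t)e^{\mp i|\alpha_k|^2\log\sqrt t} - \alpha_k\big)e^{it\Delta}\delta_k$, with $A_k$ as in \eqref{FilamentFunctionsUnique_SeveralCorners}. We then Taylor/Duhamel expand the coefficients: $R_k(t) = R_k^{(1)}(t) + O(\|\{\alpha\}\|^5)$. Here $R_k^{(1)}$ is the first Picard iterate of the map $\Phi$ in the proof of Theorem~\ref{thm:SeveralDiracMassesNLS}, which is cubic in $\{\alpha_k\}$, and the phase corrections $e^{\mp i|\alpha_k|^2\log\sqrt t}-1$ are also of cubic order.

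For the phase-correction and higher-order pieces, we view $\sum_k c_k e^{it\Delta}\delta_k$ as $e^{it\Delta}$ applied to a Dirac comb with coefficients $c_k$. We then apply the first (exact) formula of Proposition~\ref{prop:TalbotLinear} at $t_{p,q}$. This gives the pointwise bound $|e^{it_{p,q}\Delta}\sum c_k\delta_k(x)|\lesssim q^{-1/2}\cdot q\,\sup|\widehat{\{c_k\}}|$, since there are $q$ Dirac terms summed in $m$. Consequently $|\cdot|\lesssim \sqrt q\,\|\{c_k\}\|_{l^1}\lesssim\sqrt q\,\|\{c_k\}\|_{l^{2,s}}$, using $s>1/2$. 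For the remainder coefficients we plan to use $\|\{R_k(t)\}\|_{l^{2,s}}\lesssim t^{\gamma}$. Because $t_{p,q}\sim 1/q$ for fixed $p$, this yields a contribution of size $q^{1/2-\gamma}\|\{\alpha_k\}\|_{l^{2,s}}$, which is the first term in the claimed bound. This step uses the smallness to make the constant in the fixed point proportional to the data, i.e.\ $t^{-\gamma}\|R\|\lesssim\varepsilon$.

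The main obstacle is the cubic term $R^{(1)}_k$. A crude bound only gives $\varepsilon^3 t^{\gamma}$, with no logarithm. The stated $\sqrt q\log q\,\varepsilon^3$ suggests instead that one should exploit the oscillation $e^{-i(k-j_1)(j_1-j_2)/2t}$ in $g_k$. The idea is to integrate by parts in $\tau$ in $\int_0^t g_k$. Each nonresonant frequency $n=2(k-j_1)(j_1-j_2)$ gains a factor $t/|n|$. Summing over the nonresonant triples then produces a divisor-type sum $\sum_{|n|\lesssim}1/|n|$, and this is where $\log q$ appears, the relevant range of $n$ being cut off at scale $q$ via $t_{p,q}$. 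Inserting into the $\sqrt q\,\ell^1$ bound above gives $C\sqrt q\log q\,\varepsilon^3$. The restriction $q\le q_\varepsilon$ guarantees that this term remains small relative to the main part.

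For the concentration statement, we take $\alpha^\lambda_k$ to be a smooth bump in $k$ at scale $\lambda$, e.g.\ $\alpha_k^\lambda=\lambda^{-?}\phi(k/\lambda)$ normalized so that $\|\{\alpha\}\|_{l^{2,s}}$ stays small. This makes $\hat u_0$ concentrate near $\xi=0$ with $\hat u_0(0)\sim\lambda\,\alpha_0^\lambda$. The linear part of Proposition~\ref{prop:TalbotLinear} then gives the ratio $\frac Cq\lambda$. The two error terms above, evaluated at $x=0$ and measured with $\|\{\alpha^\lambda\}\|_{l^{2,s}}\sim\lambda^{s+1/2}|\alpha_0^\lambda|$, produce the remaining terms $\frac{C}{q^\gamma}\lambda^{s+1/2}$ and $C\log q\,\lambda^{3/2+s-\varepsilon'}$.
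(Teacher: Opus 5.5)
Your splitting of $u(t_{p,q})$ into three pieces is the right one: $e^{it_{p,q}\Delta}u_0$, the phase-correction part $\sum_k(e^{\mp i|\alpha_k|^2\log\sqrt t}-1)\alpha_k e^{it\Delta}\delta_k$, and the remainder $\sum_k e^{\mp i|\alpha_k|^2\log\sqrt t}R_k(t)e^{it\Delta}\delta_k$. The crude bound $|\sum_k c_k e^{it\Delta}\delta_k(x)|\le(4\pi t)^{-1/2}\sum_k|c_k|\lesssim\sqrt q\,\|\{c_k\}\|_{l^{2,s}}$ at $t=t_{p,q}$ is also right; you can read it off directly from $|e^{it\Delta}\delta_k(x)|=(4\pi t)^{-1/2}$ without the Talbot formula. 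Using $\|\{R_k(t)\}\|_{l^{2,s}}\lesssim t^\gamma$ for the remainder is fine too, and it already covers all of $R_k$, including its cubic Picard iterate.

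The gap is in where the $\sqrt q\log q\,\|\{\alpha_k\}\|^3_{l^{2,s}}$ term comes from. It does not come from $R^{(1)}_k$. Integrating by parts in $\tau$ gives each non-resonant triple a factor $t/|n|$ with $|n|\ge 1$, so $\|\{R^{(1)}_k(t)\}\|_{l^{2,s}}\lesssim t\,\|\{\alpha_k\}\|^3_{l^{2,s}}$. There is no logarithm, and nothing in $t_{p,q}$ cuts $n$ off at scale $q$. That route gives $\sqrt q\,t_{p,q}\varepsilon^3\sim\varepsilon^3/\sqrt q$, so your ``main obstacle'' paragraph is neither needed nor able to produce the claimed term.

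The logarithm sits in the phase-correction term, which you called ``of cubic order'' and never estimated. It is cubic only up to a factor $\log t$, which is unbounded as $q\to\infty$:
\[ |e^{\mp i|\alpha_k|^2\log\sqrt{t_{p,q}}}-1|\le|\alpha_k|^2\,|\log\sqrt{t_{p,q}}|\lesssim_p|\alpha_k|^2(1+\log q). \]
Hence this term is $\lesssim\sqrt q(1+\log q)\sum_k|\alpha_k|^3\le\sqrt q(1+\log q)\|\{\alpha_k\}\|_{l^\infty}^2\|\{\alpha_k\}\|_{l^1}\lesssim\sqrt q(1+\log q)\|\{\alpha_k\}\|^3_{l^{2,s}}$. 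Put this one-line estimate in place of the ``main obstacle'' paragraph and the first part is complete. This is the intended argument; the notes give no proof themselves and defer to \cite{BV18}.

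For the concentration part, what has to be proved is a lower bound: $\frac Cq\lambda$ from the linear term, minus the two error terms divided by $|e^{it_{p,q}\Delta}\alpha_0^\lambda\delta_0(0)|\sim\sqrt q\,|\alpha_0^\lambda|$. The inequality as displayed is an upper bound and does not by itself exhibit concentration. With the remainder bound linear in $\|\{\alpha_k\}\|_{l^{2,s}}$, the error ratio is $q^{-\gamma}\lambda^{s+1/2}$, which dominates $\lambda/q$ since $s>1/2$. Your sketch therefore cannot close as written. You need two things:
\begin{enumerate}[i)]
\item the cubic bound $\|\{R_k(t)\}\|_{l^{2,s}}\lesssim t^\gamma\|\{\alpha_k\}\|^3_{l^{2,s}}$, which the fixed point provides;
\item a normalization with $\lambda^{s+1/2}\|\{\alpha^\lambda_k\}\|^2_{l^{2,s}}=o(\lambda)$, e.g.\ $\|\{\alpha^\lambda_k\}\|^2_{l^{2,s}}=\lambda^{1/2-s-\varepsilon'}$.
\end{enumerate}
Then both error ratios are $O(\lambda^{1-\varepsilon'})$ for fixed $q$ and the linear term wins. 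Also choose the profile $\phi$ with compactly supported Fourier transform, so that $\hat u_0^\lambda$ is exactly supported in the required neighbourhood of $2\pi\mathbb Z$.
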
   
    \begin{figure}[h]
\centering
\includegraphics[width=6.5cm, height=10cm]{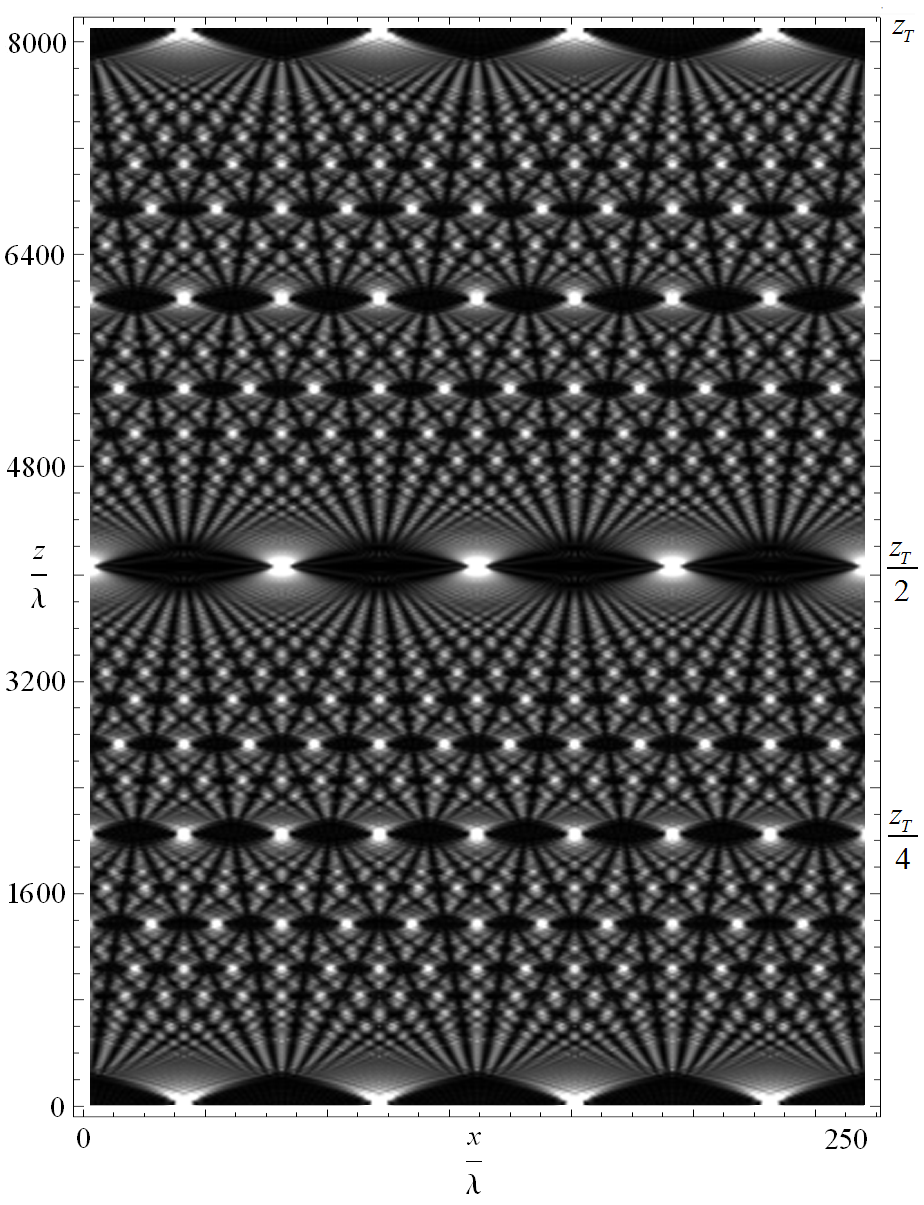}
\caption{Graphical representation of Talbot effect, where $z$ denotes the distance from the light source (and $z_T$ denotes such a distance, for a given period $T$), $x$ denotes the horizontal position, and $\lambda$ denotes the wavelength. (Source: \url{https://en.wikipedia.org/wiki/Talbot_effect})
}
\end{figure}
	\subsection{Evolution of $M$-regular polygonal lines through the BF}\label{sec:Evolution of $M$-regular polygonal lines through the BF}
	We assume now that  the solutions of \eqref{eq:BSintegral} at $t=0$ are given by a regular polygon with $M$ sides (for more details in this issue, see for instance \cite{DLHV14} and \cite{V15}). We use again Hasimoto formalism, so that a regular polygon can be written as
	\begin{align}
		\psi(s,0) = \frac{2\pi}{M} \sum_{k \in \mathbb{Z}} \delta \big (s - \frac{2\pi k}{M} \big).
		\label{IDMRegularPolygon}
	\end{align}
	Observe that $\psi(s,0)$ from \eqref{IDMRegularPolygon} satisfies
	\begin{align*}
		e^{2 \pi i j Ms} \psi(s,0) = \psi(s,0), \quad \forall j \in \mathbb{Z}.
	\end{align*}
	From analogous computations done above regarding the invariance under Galilean transformation \eqref{galileantransf}, if there is uniqueness for the initial value problem \eqref{eq:NLSwithMSeveralCorners}, then we should have that 
	\begin{align*}
		\tilde{\psi}_k(s,t) = e^{iks - ik^2t} \psi(s - 2kt,t) = \psi(s,t), \quad \forall k,t,s \in \mathbb{R}.
	\end{align*}
    Consider now the Fourier coefficient
    \begin{align*}
        \hat{\psi}(j,t) = \frac{M}{2\pi} \int_0^{2\pi/M} e^{-iMjs} \psi(s,t)ds, 
    \end{align*}
    so that
    \begin{align*}
        &\hat{\psi}(j,t) =  \frac{M}{2\pi} \int_0^{2\pi/M} e^{-iMjs} [e^{iMks - i(Mk)^2t }\psi(s - 2Mkt,t)]ds \nonumber\\
        &= \frac{M e^{-i(Mk)^2t}}{2\pi} \int_0^{2\pi/M} e^{-iM(j-k)s}\psi(s-2Mkt,t)ds \nonumber\\
        &=  \frac{M e^{-i(Mk)^2t}}{2\pi} \int_0^{2\pi/M} e^{-iM(j-k)(s+2Mkt)} \psi(s,t)ds \nonumber\\
        &= \frac{M e^{-i(Mk)^2t - iM(j-k)2Mkt}}{2\pi}  \int_0^{2\pi/M} e^{-iM(j-k)s} \psi(s,t) ds \nonumber\\
        &= e^{-i(Mk)^2t - iM(j-k)2Mkt} \hat{\psi}(j-k,t).
    \end{align*}
    Thus, if we take $j=k$,
    \begin{align*}
        \hat{\psi}(j,t) = e^{-i(Mj)^2t} \hat{\psi}(0,t), \quad \forall j \in \mathbb{Z}.
    \end{align*}
Consider now `rational times'
\begin{align*}
    t_{p,q} = \frac{2\pi p}{(M^2 q)}
\end{align*}
so that
\begin{align*}
    &\psi(s,t_{p,q}) = \hat{\psi}(0,t_{p,q}) \sum_{k \in \mathbb{Z}} e^{-i(Mk)^2 2 \pi p/(M^2q) + iMks} \nonumber\\
    &= \hat{\psi}(0,t_{p,q}) \sum_{k \in \mathbb{Z}} e^{-2 \pi i(p/q)k^2 + iMks} \nonumber\\
    &=\hat{\psi}(0,t_{p,q}) \sum_{l=0}^{q-1} \sum_{k \in \mathbb{Z}} e^{-2 \pi i(p/q)(qk+l)^2 + iM(qk+l)s} \nonumber\\
    &= \hat{\psi}(0,t_{p,q}) \sum_{l=0}^{q-1} e^{-2 \pi i(p/q)l^2 + iMls}  \sum_{k \in \mathbb{Z}} e^{iMqks},
\end{align*}
    which, in particular, 
    displays the Talbot effect (see \cite{KP99}, \cite{ET13}). This concrete Talbot effect can be geometrically interpreted by observing that at time $t_{p,q}$ a new polygon of $Mq$ sides is generated. This is determined by the generalized quadratic Gauss sum, defined by
    \begin{align*}
        G(a,b,c) = \sum_{l=0}^{|c|-1} e^{2 \pi i(al^2 + bl)/c},
    \end{align*}
    given $a,b,c\in \mathbb{Z}$, $c \neq 0$. In our case, $a=-p$, $b = m = \frac{Mqs}{2\pi}$, $c = q$, so that 
    \begin{align}
        G(-p,m,q) = \begin{cases}
        & \begin{array}{cc}
        \sqrt{q} e^{i \theta_m} & \text{if } q \text{ odd} \\
        \sqrt{2q} e^{i \theta_m} & \text{if } q \text{ even and } q/2 = m \text{ mod }2 \\
          0 & \text{if } q \text{ even and } q/2 \neq m \text{ mod }2
    \end{array}
\end{cases}
\label{eq:GaussSum_pqm}
\end{align}
    with angle $\theta_m$ depending on $m$, $p$ and $q$. Gathering all the information, we get that 
    \begin{align*}
        \psi(s,t_{p,q}) = \begin{cases}
        & \begin{array}{cc}
        \sum_{m=0}^{q-1}(\alpha_m + i \beta_m) \delta\left( s - \frac{2\pi m}{Mq}\right) & \text{if } q \text{ odd} \\
        \sum_{m=0}^{q/2-1}(\alpha_{2m+1} + i \beta_{2m+1}) \delta\left( s - \frac{4 \pi m + 2\pi}{Mq}\right)  & \text{if } q/2 \text{ odd} \\
          \sum_{m=0}^{q/2-1}(\alpha_{2m} + i \beta_{2m}) \delta\left( s - \frac{4 \pi m}{Mq}\right)  & \text{if } q/2 \text{ even}
    \end{array}
\end{cases}
\end{align*}
    where
    \begin{align*}
        |\alpha_m + i \beta_m| = \begin{cases}
        & \begin{array}{cc}
        \frac{2\pi}{M\sqrt{q}} \hat{\psi}(0,t_{p,q}) & \text{if } q \text{ odd} \\
        \frac{2\pi}{M\sqrt{q/2}} \hat{\psi}(0,t_{p,q})   & \text{if } q \text{ even and } q/2 = m \text{ mod }2 \\
         0  & \text{if } q \text{ even and } q/2 \neq m \text{ mod }2,
    \end{array}
\end{cases}
\end{align*}
from which we deduce that the angle $\rho$ between two adjacent sides is constant. Moreover, if we write
\begin{align*}
    \alpha_m + i \beta_m = \rho e^{i \theta_m},
\end{align*}
we see that the resulting polygon is determined by the angles $\theta_m$.  The question about how to determine $\rho$ is rather delicate.
It was conjectured in \cite{DLHV14}  that
\begin{align*}
    \cos(\rho) =  \begin{cases}
        & \begin{array}{cc}
        2 \cos^{2/q}(\pi/M) - 1 & \text{if } q \text{ odd} \\
        2 \cos^{4/q}(\pi/M) - 1 & \text{if } q \text{ even}.
    \end{array}
\end{cases}
\end{align*}
This was formally justified in \cite{BV20} (section 4), see also \cite{BV23} for more details. All in all, we have enough information to compute tangent vector $T$ at $t = t_{p,q}$, except the trajectory of one point. The latter is obtained using the symmetries of the $M$-polygon as in \cite{DLHV14}. See figure \eqref{fig:tangent} for a comparison between numerical and theoretical solutions. The construction of the former has been just sketched.

\begin{figure}[h]
\centering
\includegraphics[scale=0.1, width=\textwidth]{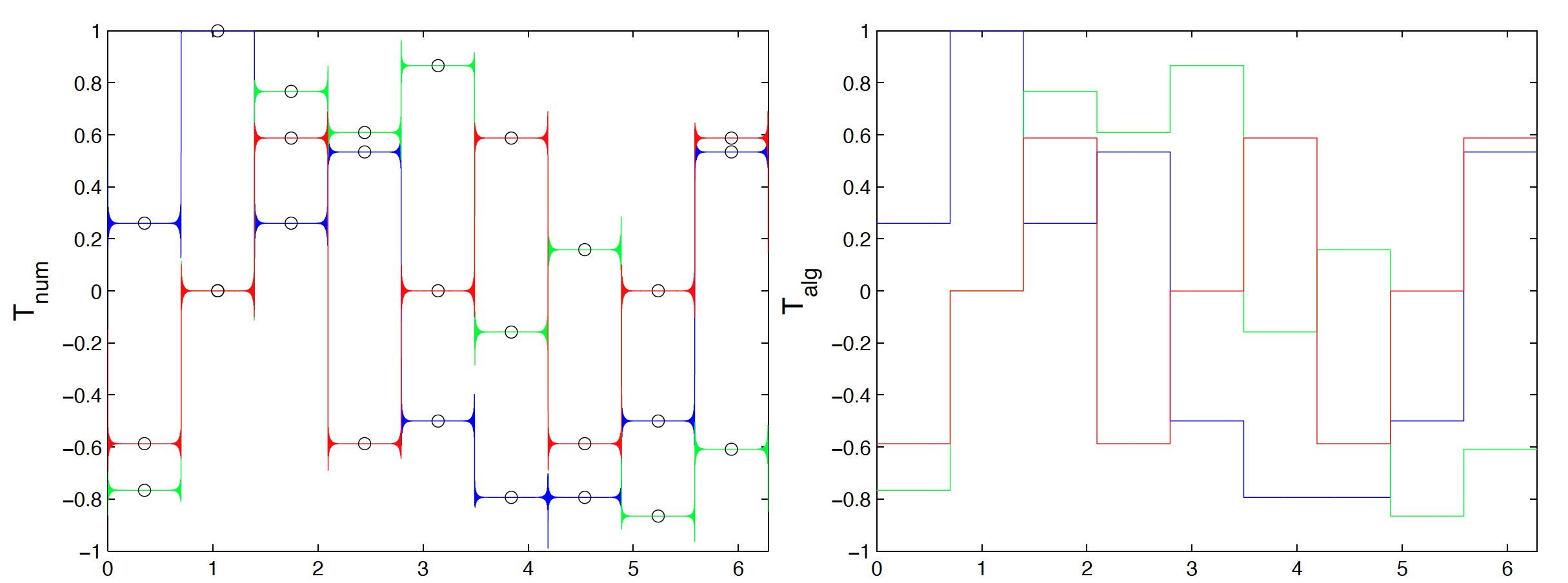}
\caption{$T_{num}$ versus $T_{alg}$, for M = 3, at $t_{1,3} = 2\pi/27$. $T_1$ appears in blue, $T_2$ in
green, $T_3$ in red. In $T_{num}$, the Gibbs phenomenon is clearly visible. The black circles
denote the points chosen for the comparisons. Figure extracted from \cite{DLHV14}.}
\label{fig:tangent}
\end{figure}

It is interesting to study, on the one hand, the evolution of $\chi(0,t)$ itself, and on the other hand the trajectory of one of the corners of $\chi(0,t)$. Again through the symmetries of the $M$-polygon, it is easy to see that the trajectory lies in a plane. See \Cref{fig:polygonM15} for sumerical simulation describing the evolution of a corner of a polygon with $M=15$ sides, and \Cref{fig:InkPointM4,fig:RiemannInkPoint} for simulations with different values of  $M$.

\begin{figure}[h]
\centering
\includegraphics[scale=0.1, width=\textwidth]{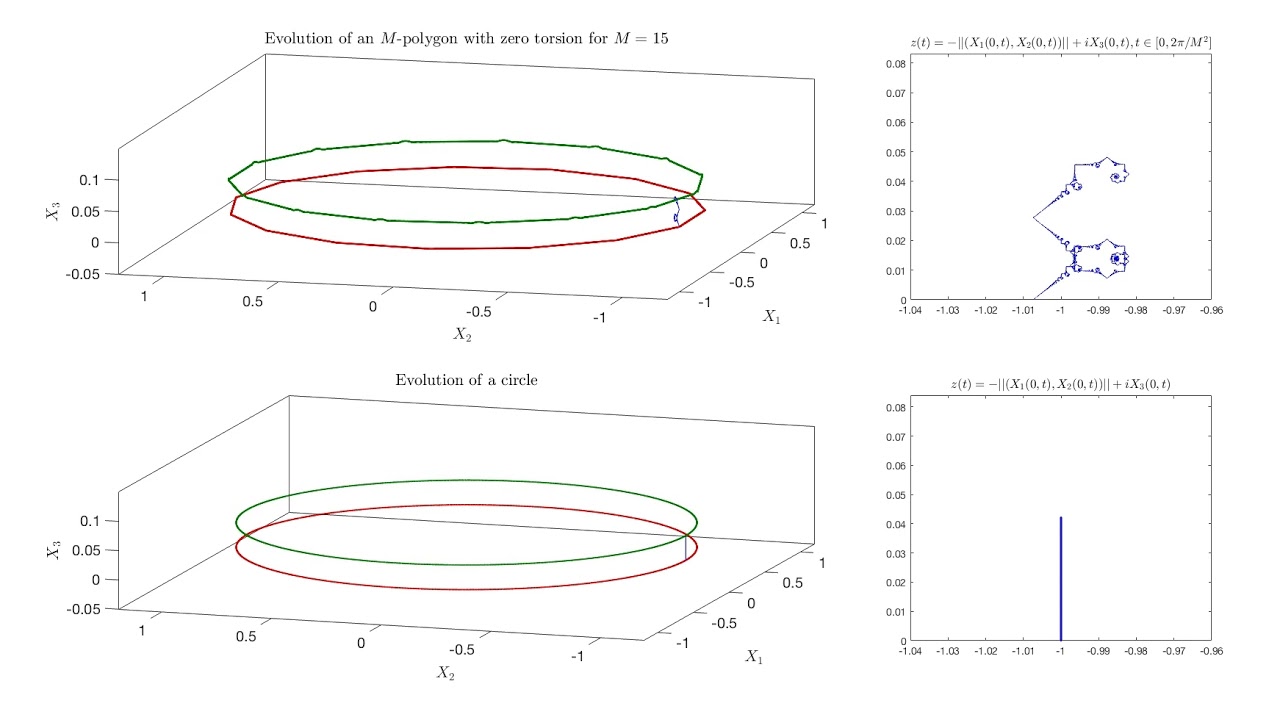}
\caption{Simulation of $\chi(0,t)$ from a regular polygon of $M=15$ sides. Screenshots obtained from Sandeep Kumar's website \url{https://sites.google.com/view/skumar1712/simulation-videos}}
\label{fig:polygonM15}
\end{figure}

\begin{figure}[h]
\centering
\includegraphics[width=8cm, height=6cm]{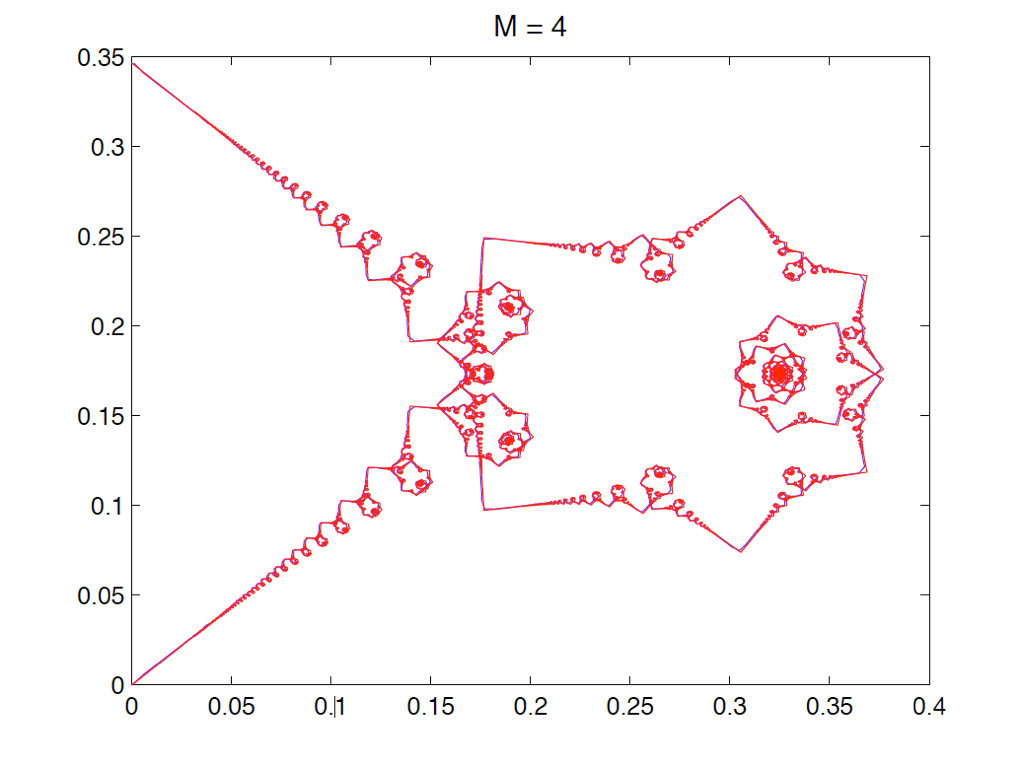}
\caption{Simulation of one of the corners of the curve $\chi(0,t)$, for a regular polygon of $M=4$ sides.}
\label{fig:InkPointM4}
\end{figure}

\noindent
These pictures are quite reminiscent of the graph of the so-called Riemann's non-differentiable function
\begin{align}
    \phi(t) = \sum_{k \neq 0} \frac{e^{-2 \pi i t k^2}}{k^2}.
    \label{RiemannNDfunction}
\end{align}
Observe that $\phi(0) = \phi(1)$. If we renormalize properly $\chi(0,t)$, we see in figures \eqref{fig:RiemannInkPoint} that at the qualitative level there is a big similitude between them. Furthermore, it becomes stronger the bigger the number of sides $M$. This is numerically proved in \cite{DLHV14}. In \cite{J96} it is proved that the graph of function \eqref{RiemannNDfunction} is a fractal satisfying the so-called Frisch-Parisi conjecture (see \cite{FP85}). 

\noindent
We provide now some extra remarks about the behavior of real fluids. Results from \cite{DLHV14}, briefly exposed here, suggest that at half of a period the starting regular polygon with $M$ sides becomes the same polygon but with the axis switched with an angle $\pi/M$. For such a switch, take the Gauss sum from \eqref{eq:GaussSum_pqm} with $q=2$, and observe that it is zero for half of the cases, so that the resulting polygon, at that time, has $M$ sides instead of $2M$. Therefore, the axis switching phenomenon follows from the analysis of which are the values of $m$ that makes \eqref{eq:GaussSum_pqm} trivial. This phenomena has been observed in real fluids, in particular for non-circular jets, and it is well documented (see \cite{DLHV14} for appropriate references).
To conclude, we mention the domestic experiment, made by the lecturer and collaborators. A smoke cannon is made with a cardboard box that has a hole with the shape of an equilateral triangle. A camera is located in front of the box. This experiment shows the flip-flop phenomenon for the case $M=3$, switching the axis by an angle of $\frac \pi6$.  We refer to \url{https://www.ehu.eus/luisvega/data/uploads/sapporo-agosto-2014.pdf} for the figures. Of course, the corners are smoothed out due to the effects of the viscosity. The analysis presented here also implies that skew polygons with six sides should appear as a consequence of the Talbot effect mentioned above. However, this is not conclusive and further evidence in this direction is needed.

\end{document}